\documentclass[sigconf]{acmart}
\AtBeginDocument{%
  }

\copyrightyear{2026}
\acmYear{2026}
\setcopyright{cc}
\setcctype{by}
\acmConference[CCS '26]{Proceedings of the 2026 ACM SIGSAC Conference on Computer and Communications Security}{November 15--19, 2026}{The Hague, Netherlands}
\acmBooktitle{Proceedings of the 2026 ACM SIGSAC Conference on Computer and Communications Security (CCS '26), November 15--19, 2026, The Hague, Netherlands}
\acmDOI{10.1145/3830454.3832667}
\acmISBN{979-8-4007-2871-6/2026/11}

\usepackage{multirow}
\usepackage{graphicx}
\usepackage{crypto/crypto}

\usepackage{xcolor}

\usepackage{makecell}
\usepackage[table]{xcolor}

\usepackage{tikz}
\usepackage{amssymb}
\usetikzlibrary{decorations.pathreplacing}
\usetikzlibrary{patterns}
\usepackage{amsfonts}
\usetikzlibrary{arrows.meta, positioning, calc, backgrounds}
\usetikzlibrary{shapes.geometric}
\usepackage{subcaption}

\usepackage{marvosym}

\newcommand{\meng}[1]{{\color{red} [meng: #1]}}

\newcommand{\blue}[1]{{\color{blue}  #1}}
\newcommand{\red}[1]{{\color{red}  #1}}

\begin{document}

%%
%% The "title" command has an optional parameter,
%% allowing the author to define a "short title" to be used in page headers.

% \title{Towards Scalable Fuzzy Private Set Intersection}
\title{Towards Scalable Fuzzy PSI via Efficient Fuzzy Matching}

%%
%% The "author" command and its associated commands are used to define
%% the authors and their affiliations.
%% Of note is the shared affiliation of the first two authors, and the
%% "authornote" and "authornotemark" commands
%% used to denote shared contribution to the research.

\author{Meng Hao}
\affiliation{%
  \institution{Singapore Management University}
  \country{Singapore}
}
\email{menghao303@gmail.com}

\author{Xinpeng Yang}
\affiliation{%
  \institution{Nanyang Technological University}
  \country{Singapore}
}
\email{XINPENG004@e.ntu.edu.sg}

\author{Hanxiao Chen}
\authornote{Corresponding authors.}
\affiliation{%
  \institution{University of Electronic Science and Technology of China}
  \city{Chengdu}
  \country{China}
}
\email{hanxiao.chen@uestc.edu.cn}

\author{Tianwei Zhang}
\affiliation{%
  \institution{Nanyang Technological University}
  \country{Singapore}
}
\email{tianwei.zhang@ntu.edu.sg}

\author{Haiyang Xue}
\affiliation{%
  \institution{Singapore Management University}
  \country{Singapore}
}
\email{haiyangxue@smu.edu.sg}

\author{Guomin Yang}
\affiliation{%
  \institution{Singapore Management University}
  \country{Singapore}
}
\email{gmyang@smu.edu.sg}

\author{Hongwei Li}
\authornotemark[1]
\affiliation{%
  \institution{University of Electronic Science and Technology of China}
  \city{Chengdu}
  \country{China}
}
\email{hongweili@uestc.edu.cn}

\author{Robert H. Deng}
\affiliation{%
  \institution{Singapore Management University}
  \country{Singapore}
}
\email{robertdeng@smu.edu.sg}

%% By default, the full list of authors will be used in the page headers.
%% Often, this list is too long, and will overlap other information printed in the page headers.
%% This command allows you to define a more concise list of authors' names for this purpose.
\renewcommand{\shortauthors}{Meng Hao et al.}

%%
%% The abstract is a short summary of the work to be presented in the
%% article.

\begin{abstract}

In fuzzy private set intersection (fuzzy PSI), there are two parties, a sender holding a set of $d$-dimensional points $Q = \{\vecq_1, \ldots, \vecq_m\}$ and a receiver holding a set $W = \{\vecw_1, \ldots, \vecw_n\}$ of the same structure. 
It enables the receiver to learn the point $\vecq \in Q$ for which there exists some $\vecw \in W$ satisfying $\mathsf{dist}(\vecq, \vecw) \le \delta$ under a given distance metric.
Although several fuzzy PSI protocols for $L_{p\in[1, \infty]}$ distance are proposed, there are significant efficiency issues, mainly because they (1) heavily rely on expensive cryptographic primitives, e.g., homomorphic encryption or garble circuits, and/or (2) incur undesirable asymptotic communication and computation complexity.

In this paper, we present scalable fuzzy PSI protocols for general $L_{p \in [1, \infty]}$ distance, supporting both low- and high-dimensional sets. 
The core technique is two efficient fuzzy matching protocols that securely evaluate $\mathsf{dist}(\vecq, \vecw) \le \delta$.
The first is built from a role-reversed oblivious PRF (OPRF) and realizes $O(d\log \delta)$ overhead, compared to $O((\log \delta)^d)$ in previous works. The second leverages customized oblivious transfer (OT) with $O(d\ell)$ overhead, where $\ell$ is the bit length of inputs, which is particularly suitable for short inputs.
With these new techniques, we further propose a new dual-layer hashing framework for fuzzy PSI over low-dimensional sets, instantiated with our OT-based fuzzy matching and enhanced with a domain reduction optimization.
The protocols achieve an overhead linear with $n, m, \log \delta, 2^d$, without the $O((\log \delta)^d)$ or $O(\delta)$ factors present in prior works.
{For high-dimensional sets, we construct fuzzy PSI protocols based on our OPRF- and OT-based fuzzy matching, which achieve an asymptotic overhead linear with $n, m, d$, and $\log \delta$ but rely on the strong globally disjoint assumption.}
% an exponential
% improvement on $\delta$ over the state-of-the-art.

% our protocols outperform previous works. For distance thresholds $\delta$ ranging from 16 to 1024,
Extensive evaluations demonstrate that our protocols achieve up to a $145\times$ speedup in running time and a $20\times$ reduction in communication cost compared to van Baarsen and Pu~(ASIACRYPT'25), and achieve up to a $25\times$ speedup in running time and up to a $17\times$ reduction in communication cost compared to Piske et al.~(CCS'25).

% In this paper, we propose a new framework for \textit{scalable} fuzzy PSI for general $L_{p \in [1, \infty]}$ distance.
% The core technique is two efficient fuzzy matching protocols to securely evaluate $\mathsf{dist}(\vecq, \vecw) \le \delta$ with single-point input $\vecq$ and $\vecw$ from both parties, respectively.
% The first is built from oblivious PRF and realizes $O(d\log \delta)$ overhead, compared to $O(d(\log \delta)^d)$ in previous works. The second utilizes customized oblivious transfer with $O(d\ell)$ overhead, where $\ell$ is the bit length of inputs, which is particularly suitable for short inputs.
% With these new techniques, we first propose scalable fuzzy PSI protocols for low-dimensional sets utilizing spatial hashing and OT-based fuzzy matching, along with domain reduction optimization.
% The protocols achieve an overhead linear with $n, m, \log \delta, 2^d$ for the first time, without factors $O((\log \delta)^d)$ or $O(\delta)$ in existing works.
% Moreover, relying on a stronger set assumption used in recent work, we propose fuzzy PSI protocols for high-dimensional sets based on OPRF-based fuzzy matching. It achieves an overhead linear with $n, m, \log \delta, d$, instead of linear dependence on $\delta$.

\end{abstract}

%%
%% The code below is generated by the tool at http://dl.acm.org/ccs.cfm.
%% Please copy and paste the code instead of the example below.
%%
% \begin{CCSXML}
% <ccs2012>
%  <concept>
%   <concept_id>00000000.0000000.0000000</concept_id>
%   <concept_desc>Do Not Use This Code, Generate the Correct Terms for Your Paper</concept_desc>
%   <concept_significance>500</concept_significance>
%  </concept>
% </ccs2012>
% \end{CCSXML}
% \ccsdesc[500]{Do Not Use This Code~Generate the Correct Terms for Your Paper}

\begin{CCSXML}
<ccs2012>
   <concept>
       <concept_id>10002978.10002979</concept_id>
       <concept_desc>Security and privacy~Cryptography</concept_desc>
       <concept_significance>500</concept_significance>
       </concept>
 </ccs2012>
\end{CCSXML}

\ccsdesc[500]{Security and privacy~Cryptography}

%
% Keywords. The author(s) should pick words that accurately describe
% the work being presented. Separate the keywords with commas.
\keywords{Fuzzy Private Set Intersectionl; Fuzzy Matching}
% A "teaser" image appears between the author and affiliation
% information and the body of the document, and typically spans the
% page.

% \received{20 February 2007}
% \received[revised]{12 March 2009}
% \received[accepted]{5 June 2009}

%%
%% This command processes the author and affiliation and title
%% information and builds the first part of the formatted document.
\maketitle

\section{Introduction}

% Ga22 FSS 2delta 
% Ga24 iFSS mini-universe prefix 
% Ga25 dFSS mini-universe prefix 

% BP24 AHE Lp 2delta 
% Dang25 AHE Lp 2delta prefix

% Mike24 GC Lp disj hash
% Ni25 OPRF Lp disj hash

% BP25 OPRF 2delta/4delta, prefix

% BP24 AHE Lp local disj
% Gao24 AHE Lp local disj
% Zhang25 non-black box AM-PRF Lp s-seperate 
% Dang25 AHE Lp local disj prefix

% Ga22 FSS global disj
% BP25 OPRF global disj 

Private set intersection (PSI), including \cite{freedman2004efficient, dong2013private, pinkas2015phasing, kolesnikov2016efficient, kolesnikov2017practical, pinkas2018scalable, pinkas2019efficient, pinkas2020psi, garimella2021oblivious, rindal2021vole, raghuraman2022blazing} and references therein, has been extensively studied over the past decades.
It enables a sender \SSS holding a set $Q$ and a receiver \RRR holding a set $W$ to privately compute the intersection $Q \cap W$, while revealing no additional information beyond the output.
Traditional PSI protocols focus on \emph{exact} item matching and have been successfully deployed in applications such as private contact discovery \cite{kales2019mobile, wu2023efficient} and password breach monitoring~\cite{google-psi-password}.
However, these protocols are inherently limited in scenarios that require \emph{approximate} matching, such as location-based services and biometric identification.

To address this gap, a lot of fuzzy PSI protocols \cite{garimella2022structure, chakraborti2023distance, richardson2024fuzzy, garimella2024computation, van2024fuzzy, gao2024efficient, dang2025ccs, zhang2025fast, piske2025distance, van2025, bui2025new,blass25} have been proposed.
In fuzzy  PSI, the sender holds a set of $d$-dimensional points $Q = \{\vecq_1, \ldots, \vecq_m\}$ and the receiver holds $W = \{\vecw_1, \ldots, \vecw_n\}$ of the same structure. 
It enables the receiver to learn the point $\vecq \in Q$ for which there exists some $\vecw \in W$ satisfying $\mathsf{dist}(\vecq, \vecw) \le \delta$ under a given distance metric $\mathsf{dist}$, e.g., Chebyshev distance $L_\infty$ or Minkowski distance $L_P$.
Fuzzy PSI is particularly useful in scenarios where inputs are inherently noisy.
For example, in biometric identification, users are matched against biometric
samples such as iris scans, fingerprints, or facial patterns. Similarly, in ride-sharing applications, passengers determine whether there are available vehicles within a certain proximity.

% A naive approach for fuzzy PSI is to compare pair-wise elements from two sets.
A naive approach to fuzzy PSI is to perform pairwise comparisons between elements of the two sets.
However, it incurs a quadratic overhead of $O(mn)$ in the set sizes, making it impractical for large-scale sets.
To avoid the efficiency issue, existing fuzzy PSI protocols, as illustrated by Gao et al. \cite{gao2024efficient}, follow a common paradigm with two phases: coarse mapping and refined filtering. (1) In the coarse mapping phase, initial pairings are established between a sender point $\vecq$ and a receiver point $\vecw$ if they are potentially $\delta$-close, through some \textit{fuzzy mapping} scheme.
% This is achieved with \textit{fuzzy mapping} schemes, which associate a unique identifier (or a set of identifiers) for each point.
However, this phase may introduce false positives, that is, two points with a distance greater than $\delta$ can still be paired.
(2) In the refined filtering phase, a \textit{fuzzy matching} approach is applied to each pair formed during the coarse mapping phase, which determines whether the receiver point $\vecw$ and the sender point $\vecq$ indeed satisfy $\mathsf{dist}(\vecq,\vecw) \le \delta$. 
This eliminates the false positives and yields the final results to the receiver.

% \xp{For a long time, research on fuzzy PSI primarily focused on the Hamming distance. The first protocol was formalized by Freedman et al.~\cite{freedman2004efficient}, who proposed a construction based on additively homomorphic encryption (AHE). Building on this foundational work, a series of subsequent studies~\cite{freedman2004efficient, chmielewski2008fuzzy,ye2009efficient,indyk2006polylogarithmic,uzun2021fuzzy,chakraborti2023distance,blass25} progressively improved computational and communication efficiency under various techniques and cryptographic assumptions. Since the Hamming distance is not the main point of this work, we do not discuss these protocols in detail. Instead, we hereafter concentrate on fuzzy PSI protocols for the general Minkowski distance $L_{p \in [1,\infty]}$.
% }  

\begin{table*}[!t]
\centering
\caption{Comparisons of existing fuzzy PSI protocols for $L_{p \in [1, \infty]}$ distance, where the sender holds $m$ elements and the receiver holds $n$ elements in a $d$-dimensional space, and $\delta$ is the distance threshold.
The protocols of \cite{van2025, piske2025distance} and ours for low-dimensional sets essentially rely on the same assumption, as shown in Appendix \ref{appendix: Additional Details of Spatial Hashing}. For clarity, we unify it as the unique ball/center. 
Note that both \cite{piske2025distance} and our protocols for low-dimensional sets can be extended to arbitrary receiver sets, but introduce an additional multiplicative factor in complexity.
}
% We ignore multiplicative factors of the computational security parameter $\kappa$ and statistical security parameter $\lambda$.
\label{tab:complexities}
\renewcommand{\arraystretch}{1.06}
\resizebox{\textwidth}{!}{%
\begin{tabular}{|c|c|c|c|c|c|cc|}
\hline
\multirow{2}{*}{Setting} & \multirow{2}{*}{Metric}     & \multirow{2}{*}{Protocol}            & \multirow{2}{*}{Technique} & \multirow{2}{*}{Assumption}                                             & \multirow{2}{*}{Communication}                                       & \multicolumn{2}{c|}{Computation}                                                                                         \\ \cline{7-8} 
                         &                             &                                      &                                                                                     &                  &                                                    & \multicolumn{1}{c|}{Sender}                                              & Receiver                                      \\ \hline
\multirow{15}{*}{\begin{tabular}[c]{@{}c@{}}Low\\ Dimension\end{tabular}}    & \multirow{8}{*}{$L_\infty$} & \cite{bui2025new}                    & FSS & $\mathcal{R},\text{mini-univ.}$                                                & $O(dn \log \delta + m2^d \textcolor{red}{(\log \delta)^d})$                            & \multicolumn{1}{c|}{$O(m2^d\textcolor{red}{(\log \delta)^d})$}                              & $O(\textcolor{red}{(\log \delta)^d} n + dm\log \delta)$        \\ \cline{3-8} 
                         &                             & \multirow{2}{*}{\cite{van2024fuzzy}} & AHE &$\mathcal{R}, \min > 2\delta$                                                  & $O(\textcolor{red}{\delta} d n + 2^d m)$                                              & \multicolumn{1}{c|}{$O(2^dd m)$}                                         & $O(\textcolor{red}{\delta} d n + 2^d m)$                       \\
                         \cline{4-8}
                         &                             &                                      & AHE &$\mathcal{R}, \min > 4\delta$                                                  & $O(\textcolor{red}{\delta}2^ dd n + m)$                                               & \multicolumn{1}{c|}{$O(dm)$}                                             & $O(\textcolor{red}{\delta}2^dd n + m)$                         \\ \cline{3-8} 
                         &                             & \cite{dang2025ccs}                   & AHE & $\mathcal{R},\min>4\delta$                                                     & $O(d \log \delta (n 2^d + m))$                                       & \multicolumn{1}{c|}{$O(2^d d n \log \delta + m \textcolor{red}{(\log \delta)^d})$}        & $O(d m \log\delta)$                           \\ \cline{3-8} 
                         &                             & \cite{richardson2024fuzzy}           & GC &$\mathcal{S}, \text{disj. hash}$                                                & $O(d \log \delta (n 2^s + m 2^{d-s}))$                               & \multicolumn{1}{c|}{$O(2^{d-s} dm \log \delta)$}                         & $O(2^{s} dn \log \delta)$                     \\ \cline{3-8} 
                         &                             & \cite{piske2025distance}             & OPRF \& OT & $\mathcal{R} \wedge \mathcal{S}$, unique ball/center                                   & $O(d (\textcolor{red}{\delta} m+2^dn))$                                               & \multicolumn{1}{c|}{$O(\textcolor{red}{\delta} d m)$}                                     & $O(d 2^dn + m)$                               \\ \cline{3-8}  
                         &                             & \cite{van2025}                       & OPRF & $\mathcal{R} \wedge \mathcal{S}$, unique ball/center                     & $O(d\log\delta (m+2^dn))$                                            & \multicolumn{1}{c|}{$O(d (m+2^dn)\log\delta)$}                           & $O(2^dn(d\log\delta+\textcolor{red}{(\log\delta)^{d/2}}))$     \\ \cline{3-8} 
                         &                             & Ours                                 & OPRF \& OT & $\mathcal{R} \wedge \mathcal{S}$, unique ball/center                & $O{(d(m\log \delta+2^dn))}$                                        & \multicolumn{1}{c|}{$O{(md\log\delta)}$}                              & $O{(d(m\log \delta+2^dn))}$                   \\ \cline{2-8} 
                         & \multirow{7}{*}{$L_{p}$}    & \cite{van2024fuzzy}                  & AHE & $\mathcal{R}, \min > 2\delta(d^{1/p}+1)$                                       & $O(\textcolor{red}{\delta}{2^d} d n + \textcolor{red}{\delta^p} m)$                                    & \multicolumn{1}{c|}{$O((d + \textcolor{red}{\delta^p}) m)$}                               & $O(\textcolor{red}{\delta}{2^d} d n + m)$                      \\ \cline{3-8}  
                         &                             & \cite{dang2025ccs}                   & AHE & $\mathcal{R},\min>2\delta(d^{1/p}+1)$                                          & $O(2^{d} p (d n \log \delta + m\textcolor{red}{(\log \delta)^{d}}) + d m\log \delta)$ & \multicolumn{1}{c|}{$O(2^{d} (d p n \log \delta + m\textcolor{red}{(\log \delta)^{d}}))$} & $O(m (d p \log \delta + 2^d\textcolor{red}{(\log \delta)^{d}}))$ \\ \cline{3-8} 
                         &                             & \multirow{2}{*}{\cite{richardson2024fuzzy}}           & GC & $\mathcal{S}, \text{disj. hash},p=1$                                                & $O(d \log(d\delta)(n 2^{d} + m 2^{d-s}))$                            & \multicolumn{1}{c|}{$O(d m 2^{d-s} \log(d\delta) )$}                     & $O(d n 2^{s} \log(d\delta))$                  \\ \cline{4-8} 
                         &                             &           & GC & $\mathcal{S}, \text{disj. hash},p=2$                                                & $O(d \log(d\delta)(n 2^{d} + m 2^{d-s})+\log^3(d\delta)m2^{d-s})$                            & \multicolumn{1}{c|}{$O(m 2^{d-s} ((d\log(d\delta)+\log^3(d\delta)))$}                     & $O(n 2^{s} (d\log(d\delta)+\log^3(d\delta)))$                  \\ \cline{3-8} 
                         &                             & \cite{piske2025distance}             & OPRF \& OT & $\mathcal{R} \wedge \mathcal{S}$, unique ball/center                                    & $O(d (\textcolor{red}{\delta} m+2^dn))$                                               & \multicolumn{1}{c|}{$O(\textcolor{red}{\delta} d m)$}                                     & $O(d 2^dn + m)$                               \\ \cline{3-8}  
                         &                             & \cite{van2025}                       &  OPRF & $\mathcal{R} \wedge \mathcal{S}$, unique ball/center    & $O(d (\textcolor{red}{\delta} m+2^dn)+p(m+2^dn)\log\delta)$                           & \multicolumn{1}{c|}{$O(d (\textcolor{red}{\delta} +2^dn)+p(m+2^dn)\log\delta)$}           & $O(2^dn(d+p\log\delta))$                      \\ \cline{3-8} 
                         &                             & Ours                                 & OPRF \& OT & $\mathcal{R} \wedge \mathcal{S}$, unique ball/center & $O{((mp\log\delta+2^dn)d)}$                                        & \multicolumn{1}{c|}{$O{(mpd\log\delta})$}                              & $O{((mp\log\delta+2^dn)d)}$                 \\ \hline
\multirow{14}{*}{\begin{tabular}[c]{@{}c@{}}High\\ Dimension\end{tabular}}   & \multirow{6}{*}{$L_\infty$} & \cite{van2024fuzzy}                  & AHE & $\mathcal{R}, \text{disj. proj.}$                                              & $O(\textcolor{red}{(\delta d)^2} n + m)$                                              & \multicolumn{1}{c|}{$O(d^2 m)$}                                          & $O(\textcolor{red}{(\delta d)^2} n + m)$                       \\ \cline{3-8}  
                         &                             & \cite{gao2024efficient}              & AHE  & $\mathcal{R} \wedge \mathcal{S}, \text{disj. proj.}$                           & $O(\textcolor{red}{\delta} d (m+n))$                                                  & \multicolumn{1}{c|}{$O(\textcolor{red}{\delta} d m + n)$}                                 & $O(\textcolor{red}{\delta} d n + m)$                           \\ \cline{3-8}  
                         &                             & \cite{dang2025ccs}                   & AHE & $\mathcal{R} \wedge \mathcal{S}, \text{disj. proj.}$                           & $O{ (d(m+n)\log\delta)}$                                           & \multicolumn{1}{c|}{$O{ (d(m+n)\log\delta)}$}                          & $O{ (d(m+n)\log\delta)}$                    \\ \cline{3-8} 
                         &                             & \cite{zhang2025fast}                 & so-OPRF & $\mathcal{R} \wedge \mathcal{S},s\text{-separate}$                            & $O(\textcolor{red}{\delta^s}\frac{d}{s}(m+n))$                                        & \multicolumn{1}{c|}{$O(\textcolor{red}{\delta^s}\frac{d}{s}m+n)$}                         & $O(\textcolor{red}{\delta^s}\frac{d}{s}n+m)$                   \\ \cline{3-8}  
                         &                             & \cite{van2025}                       & OPRF  & $\mathcal{R}\wedge \mathcal{S}, \text{global disj.}$                          & $O((m\textcolor{red}{\delta}+n)d)$                                                    & \multicolumn{1}{c|}{$O((m\textcolor{red}{\delta}+n)d)$}                                   & $O(nd)$                                       \\ \cline{3-8} 
                         &                             & Ours                                 & OPRF  & $\mathcal{R}\wedge \mathcal{S}, \text{global disj.}$                     & $O((m+n)d\log\delta)$                                                & \multicolumn{1}{c|}{$O((m+n)d\log\delta)$}                               & $O(nd\log\delta+md)$                          \\ \cline{2-8} 
                         & \multirow{5}{*}{$L_{p}$}    & \cite{gao2024efficient}              & AHE & $\mathcal{R} \wedge \mathcal{S}, \text{disj. proj.}$                           & $O(\textcolor{red}{\delta} d (m+n)+pm\log\delta)$                                     & \multicolumn{1}{c|}{$O(\textcolor{red}{\delta} d m+pm\log\delta + n)$}                    & $O(\textcolor{red}{\delta} d n+pm\log\delta)$                  \\ \cline{3-8} 
                         &                             & \cite{dang2025ccs}                   & AHE & $\mathcal{R} \wedge \mathcal{S}, \text{disj. proj.}$                           & $O{(dpn\log\delta+dm\log\delta)}$                                  & \multicolumn{1}{c|}{$O{(dpm\log\delta+dn\log\delta)}$}                 & $O{(dpn\log\delta+dm\log\delta)}$           \\ \cline{3-8} 
                         &                             & \cite{zhang2025fast}                 & so-OPRF & $\mathcal{R} \wedge \mathcal{S},s\text{-separate}$                            & $O(\textcolor{red}{\delta^s}\frac{d}{s}(m+n)+pm\log\delta)$                           & \multicolumn{1}{c|}{$O((\textcolor{red}{\delta^s}\frac{d}{s}+p\log\delta)m+n)$}           & $O(\textcolor{red}{\delta^s}\frac{d}{s}n+pm\log\delta)$        \\ \cline{3-8}  
                         &                             & \cite{van2025}                       & OPRF & $\mathcal{R}\wedge \mathcal{S}, \text{global disj.}$                          & $O((m\textcolor{red}{\delta}+n)d+(m+n)p\log\delta)$                                   & \multicolumn{1}{c|}{$O((m\textcolor{red}{\delta}+n)d+(m+n)p\log\delta)$}                  & $O(n(d+p\log\delta))$                         \\ \cline{3-8} 
                         &                             & Ours                                 & OPRF \& OT & $\mathcal{R}\wedge \mathcal{S}, \text{global disj.}$                     & $O((m+np)d\log\delta)$                                               & \multicolumn{1}{c|}{$O((m+np)d\log\delta)$}                              & $O(npd\log\delta+md)$                        \\ \hline
\end{tabular}%
}
% \vspace{3pt}
\captionsetup{justification=raggedright, singlelinecheck=false}
\caption*{\footnotesize 
% -- For protocol~\cite{dang2025ccs}, we revise the computational complexity of the sender.\footref{fn: revise}
-- $\mathcal{R}/\mathcal{S}$ denotes that the set of receiver/sender satisfies the assumption, and $\mathcal{R} \wedge \mathcal{S}$ indicates that both sets satisfy the assumption.

-- unique ball/center means in spatial hashing, each grid cell intersects with at most one ball and each grid cell contains at most one point, respectively.

% is short for the assumption that every point of the set can be mapped to a unique ball/center.

-- mini-univ is short for mini-universe assumption, where every ball can be mapped to a distinct vertex of the space.

-- $\min > l$ means that the minimum distance between any two elements of the set is greater than $l$. 

-- disj. hash is short for disjoint hash assumption, which means the spatial hashing scheme maps at most one point to every possible target grid cell.

-- disj. proj. is short for disjoint projection assumption, where there exists at least one dimension on which the projection of balls does not overlap.

% where each element has disjoint projections in some dimension with all other elements.

-- global disj. is short for globally disjoint projection assumption, where for every dimension, the projection of balls does not overlap.

% where the distance in any dimension of two points in the set is at least $2\delta$

--  $s$-separate means for every $s$ dimensions, there exists at least one dimension on which the projection of balls does not overlap.

% elements have disjoint projections in one of these $s$ dimensions with all other elements where $1\le s \le d$.

% -- $0 < \rho < 1/c$ is a parameter in locality-sensitive hashing if the receiver’s points are distance $>c\delta$ apart

% -- $0 < \rho < 1/c$ is a parameter in locality-sensitive hashing where the distance between any receiver's two elements is greater than $c\delta$.

% -- \xue{what is $p$?} \meng{Lp distance}

}
\end{table*}

{
Recently, a growing body of literature~\cite{garimella2022structure, garimella2024computation, bui2025new, van2024fuzzy, gao2024efficient, dang2025ccs, zhang2025fast, richardson2024fuzzy, piske2025distance, van2025} has explored fuzzy PSI under $L_p$ distance for $p \in [1, \infty]$. As detailed in Section~\ref{sec: Related Work}, we categorize these prior works into low-dimensional and high-dimensional regimes, according to different techniques and designs. In general, these works typically begin by designing 
\emph{single-point} fuzzy matching protocols, and then introduce fuzzy mapping
techniques to compile fuzzy matching into \emph{multi-point} fuzzy PSI. 
To achieve desirable asymptotic complexities, many recent constructions \cite{van2024fuzzy, gao2024efficient, dang2025ccs} rely on heavy cryptographic primitives, such as computation-intensive additively homomorphic encryption (AHE) or communication-intensive garbled circuits (GCs). Consequently, these protocols suffer from prohibitive concrete overhead. More recently, van Baarsen and Pu~\cite{van2025} propose the state-of-the-art fuzzy PSI scheme leveraging lightweight oblivious PRFs (OPRFs) to bypass these heavy primitives. Unfortunately, their protocols incur undesirable asymptotic complexities, scaling as $\mathcal{O}((\log \delta)^d)$ or $\mathcal{O}(\delta)$ for the distance threshold $\delta$ and the dimension $d$. This may introduce severe bottlenecks in real-world applications requiring large distance thresholds. We refer to Table~\ref{tab:complexities} for a comprehensive comparison of existing works.}

Motivated by these limitations, we pose the following question:

% \subsection{Limitations of Previous Works} 

\begin{center}
        % \begin{tcolorbox}[
        %     colback=lightgray!15, 
        %     size=title, 
        %     colframe=black, 
        %     boxrule=0.5pt,
        %     width=\linewidth,
        %     valign=center,
        %     ]
            \textit{Can we construct scalable fuzzy PSI protocols with both concrete and asymptotic efficiency guarantees?}
        % \end{tcolorbox}
\end{center}
% \vspace{0.5pt}

\subsection{Our Contribution}
We answer the above question affirmatively by proposing \textit{scalable} fuzzy PSI for general $L_{p \in [1, \infty]}$ distance over both low- and high-dimensional sets from lightweight oblivious transfer (OT) and oblivious PRF (OPRF).
% \footnote{We note that recently proposed concretely efficient OPRF protocols are based on OT or its generalized variant, i.e., vector oblivious linear evaluation (VOLE)}. 
{
For low-dimensional sets, our protocols achieve communication and computation complexities that scale linearly with the set sizes $m, n$ and logarithmically with the distance threshold $\delta$, along with an additional exponential factor on the dimension $d$.
% Our protocols for low-dimensional sets realize an asymptotic communication and computation complexity that is linear in the set sizes $m, n$ and logarithmic in the distance threshold $\delta$, 
This factor also appears in previous works for low-dimensional (small $d$) settings.
% ; however, it applies only to lightweight operations in our protocols.
For high-dimensional sets, by leveraging stronger assumptions, we achieve an overhead that scales linearly with the dimension $d$.  
We summarize our core contributions as follows.}

% \subsection{Our Contribution}
% We answer the above question affirmatively by proposing \textit{scalable} Fuzzy PSI for the $L_p$ distance metric (for $p \in [1, \infty]$), applicable to both low- and high-dimensional sets, built upon lightweight Oblivious Transfer (OT) and Oblivious PRF (OPRF) primitives. 
% \blue{For low-dimensional sets, our protocols achieve communication and computation complexities that scale linearly with the set sizes ($m, n$) and logarithmically with the distance threshold ($\delta$). While the complexity exhibits an exponential dependence on the feature dimension $d$---an inherent bound in existing low-dimensional schemes---our construction is distinct in that it isolates this exponential factor exclusively to lightweight cryptographic operations. Furthermore, for high-dimensional sets, by leveraging the \textit{strong global disjointness} assumption, we successfully reduce this dimensional dependence from exponential to strictly linear. We summarize our core contributions as follows:}

% \textit{linear} communication and computation overhead on the set sizes $m, n$, the dimension $d$, and $\log \delta$ without relying on public-key primitives and generic two-party computation techniques such as garbled circuits.
%  
% An exception is an additional $2^d$ factor in our fuzzy PSI protocols (also in previous works) for low-dimensional (small $d$) sets, but unlike previous works, which just applies to lightweight operations.

\begin{enumerate}

% With these new techniques, we first propose scalable fuzzy PSI protocols for low-dimensional sets utilizing spatial hashing and OT-based fuzzy matching, along with domain reduction optimization.
% The protocols achieve an overhead linear with $n, m, \log \delta, 2^d$ for the first time, without factors $O((\log \delta)^d)$ or $O(delta)$ in existing works.
% Moreover, relying on a stronger set assumption used in recent work, we propose fuzzy PSI protocols for high-dimensional sets based on OPRF-based fuzzy matching. It achieves an overhead linear with $n, m, \log \delta, d$, instead of linear dependence on $\delta$.

    \item \textbf{Asymptotically efficient fuzzy matching.} We design two asymptotically efficient fuzzy matching protocols. The first is built from our customized role-reversed OPRF and realizes $O(d\log \delta)$ overhead, compared to $O((\log \delta)^d)$ in previous works. The second utilizes customized OT with $O(d\ell)$ overhead, which is particularly suitable for inputs with short bit length $\ell$.

    \item \textbf{New framework of fuzzy PSI for low-dimensional sets.} We propose a new fuzzy PSI framework on $L_{p \in [1, \infty]}$ distance from a dual-layer hashing design, incorporating the spatial hashing and Cuckoo hashing. Our framework reduces fuzzy PSI to about $m$ fuzzy matching invocations, which outperforms previous constructions with $2^d n$ invocations. Moreover, we introduce an input space reduction optimization while addressing subtle issues of false positives, which enables the use of our efficient OT-based fuzzy matching.

    \item \textbf{Improved fuzzy PSI for high-dimensional sets.} We propose efficient fuzzy PSI protocols for high-dimensional sets on $L_{p \in [1, \infty]}$ distance based on our OPRF- and OT-based fuzzy matching. {Under the {strong} globally disjoint assumption on both parties' sets, our protocols achieve an asymptotic complexity that is linear in the set sizes $m, n$ and the dimension $d$, and logarithmic in the distance threshold $\delta$.}
    
    % Our protocols achieve linear complexity in the set size $m, n$, the dimension $d$, and logarithmic dependence on the distance threshold $\log \delta$, an exponential improvement on $\delta$ over the state-of-the-art work with overhead $O(d(n+\delta m))$. 
    % Besides, we identify that the fuzzy PSI protocols of \cite{van2025} for $L_\infty$ distance leak the sender's private information through. We give a concrete attack and fix this issue.

    % \item \textbf{Prefix optimizations.} We incorporate prefix techniques to optimize the overhead of fuzzy PSI protocols for large distance thresholds, which reduces the computational and communication complexity to logarithmic in the distance threshold.

    \item \textbf{Extensive experimental evaluations.} We conduct extensive experiments under a wide range of parameter settings. The experimental results demonstrate that our protocols significantly outperform state-of-the-art constructions. In low-dimensional settings, for dimensions $d$ ranging from 2 to 8, our protocols achieve up to $145\times$ faster computation and up to $20\times$ lower communication cost. In high-dimensional settings, for dimensions $d$ ranging from 16 to 64, our protocols require up to $36\times$ less running time and up to $54\times$ less communication.

\end{enumerate}

% We give a high-level overview of existing works in the following sections. Technically, existing works first design \textit{single-point} fuzzy matching protocols, and then propose fuzzy mapping to compile fuzzy matching to \textbf{multi-point} fuzzy PSI, while avoiding quadratic invocations of fuzzy matching.
% Due to we classify previous protocols for different scenarios: low-dimensional sets and high-dimensional sets. 
% Moreover, we summarize these works according to different technical lines.

% At a high level, these fuzzy PSI protocols follow a common paradigm: (1) A single-point \emph{fuzzy matching} protocol is first proposed to determine whether a Sender's point $\vecq$ and a Receiver's point $\vecq$ satisfy $\mathsf{dist}(\vecq, \vecw) \le \delta$. (2) The multi-point fuzzy PSI task is then reduced to single-point fuzzy matching via abstract point-pairing techniques, which avoids quadratic invocations of fuzzy matching.

\section{{Related Work}}
\label{sec: Related Work}

% \blue{
% We discuss existing fuzzy PSI protocols for $L_{p \in [1, \infty]}$ distance\footnote{There is also a substantial body of work on fuzzy PSI for Hamming distance~\cite{freedman2004efficient, chmielewski2008fuzzy,ye2009efficient,indyk2006polylogarithmic,uzun2021fuzzy,chakraborti2023distance,blass25}, which we do not discuss in detail here.}.
% We classify these protocols according to the dimensionality of the input sets,
% distinguishing between low-dimensional and high-dimensional settings, which
% exhibit different technical designs.
% The comparison is provided in Table~\ref{tab:complexities}.
% }

{In this section, we discuss existing fuzzy PSI protocols for $L_{p \in [1, \infty]}$ distance\footnote{There is also a substantial body of work on fuzzy PSI for Hamming distance~\cite{freedman2004efficient, chmielewski2008fuzzy,ye2009efficient,indyk2006polylogarithmic,uzun2021fuzzy,chakraborti2023distance,blass25}, which we do not discuss in detail here.}, as they align most closely with our setting.
Because the protocol design of fuzzy PSI is largely dictated by the dimensionality of the input sets, we categorize state-of-the-art constructions into low-dimensional and high-dimensional regimes. A comprehensive comparison of these approaches alongside our work is summarized in Table~\ref{tab:complexities}.}

\subsection{Fuzzy PSI on Low-dimensional Sets} 

Previous fuzzy PSI works for low-dimensional sets instantiate fuzzy mapping using spatial hashing or its variants \cite{garimella2022structure}.
The high-level idea is to partition the space into hypercubes of small side length (also referred to as grid cells), so that the receiver only needs to determine whether any sender’s points lie in cells that are $\delta$-close to receiver's points, without enumerating the entire space.
Despite this efficiency improvement, spatial hashing–based solutions incur an overhead factor of $2^d$, which makes them practical primarily for low-dimensional settings.
We summarize these works below according to their underlying technical approaches.

% Previous fuzzy PSI works for low-dimensional sets instantiate fuzzy mapping by utilizing spatial hashing or its variants pairing. 
% The high-level idea is to tile the entire space by hypercubes of small side-length (also referred as to  grid cells) such that the receiver only need determine whether there are the sender's points lying in its $\delta$-close cells, without enumerating the entire space
% Despite the efficiency improvement, spatial hashing-based solutions incur a factor $2^d$ in the overhead, making them practical primarily for low-dimensional settings. We summarize these works according to different technical paths.

% , i.e., function secret sharing, additive homomorphic encryption, garbled circuits, and oblivious pseudo-random function, respectively.

% Garimella et al. \cite{garimella2022structure} introduce a general-purpose structure-aware PSI framework that can be adapted to achieve fuzzy PSI.
% The key technique is a new fuzzy matching protocol constructed from customized FSS for membership test of $L_\infty$ balls.
% To extend FSS-based fuzzy matching to fuzzy PSI, the work assumes the points in the receiver's set are at least $2\delta$ apart in the $L_\infty$ distance.
% Their communication overhead scales with $O((2 \log \delta)^d)$, but the computation overhead has a factor of $O(\delta^d)$ since the receiver needs to traverse all points that are $\delta$-close with the receiver's points.

\textbf{Function secret sharing (FSS).}
Garimella et al.~\cite{garimella2022structure} introduce a general structure-aware PSI framework that can be adapted to fuzzy PSI.
The key technique is a customized fuzzy matching protocol built from FSS \cite{boyle2015function, boyle2016function, guo2023half} for testing membership in $L_\infty$ balls.
To extend this FSS-based fuzzy matching to fuzzy PSI, they assume that points in the receiver’s set are at least $2\delta$ apart in $L_\infty$ distance.
Their communication overhead scales as $O((2 \log \delta)^d)$.
However, the computation overhead incurs an additional factor of $O(\delta^d)$, since the receiver must traverse all points that are $\delta$-close to each receiver point.

% Built on FSS-based fuzzy matching, their fuzzy PSI protocols reduce the communication complexity from naive $O(\delta^d)$ to $O((2 \log \delta)^d)$.

% for membership tests within a $\delta$-radius $L_\infty$ ball.
% It reduces the communication complexity of fuzzy PSI from $O(\delta^d)$ to $O((2 \log \delta)^d)$.
% However, the work still requires the receiver’s computation to scale with $O(\delta^d)$.
% To address this computation overhead, Garimella et al.~\cite{garimella2024computation} further present an improved fuzzy matching based incremental FSS.
% The core idea is to traverse the prefix representation of the receiver's $\delta$-radius ball, which can be represented by $(\log \delta)^d$ points.
% This fuzzy PSI achieves both computational and communication overhead that scales with $O((2 \log \delta)^d)$.
% Bui et al. \cite{bui2025new} further improve its concrete overhead by eliminating a hidden multiplicative overhead proportional to the
% computational security parameter, while maintaining the same asymptotic complexity.

To reduce this computational overhead, Garimella et al.~\cite{garimella2024computation} propose an improved fuzzy matching protocol based on incremental FSS.
The key idea is to traverse the prefix representation of the receiver’s $\delta$-radius ball, which can be represented using only $(\log \delta)^d$ points.
As a result, their fuzzy PSI achieves both computational and communication overheads scaling as $O\bigl((2 \log \delta)^d\bigr)$.
Bui et al.~\cite{bui2025new} further optimize the concrete efficiency by removing a hidden multiplicative factor proportional to the computational security parameter, while preserving the same asymptotic complexity.

% of all the points
% and reduce the overhead of fuzzy matching to .

However, the main limitation of FSS-based fuzzy PSI is the overhead factor of $O((2 \log \delta)^d)$, which becomes prohibitive for large distance thresholds.
Moreover, these methods only support $L_\infty$ distance, due to the lack of efficient FSS-based constructions for fuzzy matching under general $L_p$ distance.

% However, the main drawback of FSS-based fuzzy PSI is the overhead with a factor of $O((2 \log \delta)^d)$, which becomes inefficient for large distance thresholds.
% Besides, these fuzzy PSI methods only support $L_\infty$ distance, owing to the absence of efficient FSS-based constructions for fuzzy matching under general $L_p$ distance.
 % As a result, the state-of-the-art FSS-based fuzzy PSI achieves .

\textbf{Additively homomorphic encryption (AHE).}
To support general $L_{p \in [1, \infty]}$ distance, van Baarsen and Pu \cite{van2024fuzzy} propose fuzzy matching protocols leveraging the linear homomorphism of AHE.
Their construction incurs $O(d\delta)$ computational and communication overhead.
To realize fuzzy PSI, they employ improved spatial hashing, with the assumption that the receiver’s points are at least $2\delta(d^{1/p}+1)$ apart for $L_{p \in [1, \infty]}$ distance.
Subsequently, Dang et al. \cite{dang2025ccs} improve the fuzzy matching protocol by integrating prefix trie techniques, reducing communication to $O(\log \delta)$ and computation to $O((\log \delta)^d)$.
They adopt the same framework to implement fuzzy PSI for general $L_p$ distances.

However, both of these fuzzy PSI methods suffer from high computational overhead, scaling as $O(d\delta)$ or $O((\log \delta)^d)$, and rely on expensive public-key primitives.

% To support generalized $L_{p \in [1, \infty]}$ distance, Baarsen and Pu \cite{van2024fuzzy} propose new fuzzy matching protocols utilizing the linear homomorphism of AHE. 
% Their fuzzy matching incur $O(d\delta)$ computation and communication overhead.
% Subsequently, Dang et al. \cite{dang2025ccs} improve their fuzzy matching by integrating prefix trie techniques.
% The improved protocols achieve the communication and computation overheads that scale with $O(\log \delta)$ and $O((\log \delta)^d)$, respectively.
% The work \cite{van2024fuzzy}, followed by \cite{dang2025ccs}, utilizes spatial hashing to realize fuzzy PSI with the assumption that the receiver’s points are at least $2\delta(d^{1/p}+1)$ apart from each other for general $L_{p \in [1, \infty]}$ distance.

% Furthermore, the authors extend fuzzy matching to fuzzy PSI using spatial hashing, utilizing the assumption that the receiver’s points are at least $2\delta(d^{1/p}+1)$ apart from each other for general $L_{p \in [1, \infty]}$ distance. 
% Subsequently, Dang et al. \cite{dang2025ccs} improved the overhead of fuzzy matching by integrating prefix trie techniques, which achieve $O(d \log \delta)$ communication overhead and $O((\log \delta)^d)$ computation overhead. 
% It follows the same framework to achieve fuzzy PSI.

% Dang25 further improves the overhead by integrating prefix trie techniques and realizes the communication and computation overhead linear to $O(\log \delta)$.
% To support fuzzy PSI, both works also use spatial hashing to map potentially close
% points into the same cell.

\textbf{Garbled circuits (GC).} To avoid expensive public-key operations, Richardson et al.~\cite{richardson2024fuzzy} leverage GC~\cite{yao1982protocols}, which is built on symmetric primitives, to implement fuzzy matching, and further propose a fuzzy PSI framework based on disjoint hashing, a generalization of spatial hashing. Their framework encodes messages from the GC-based fuzzy matching protocols in an oblivious key-value store~\cite{garimella2021oblivious,bienstock2023near},  which maps a hash identifier for each party’s point to the corresponding next GC message. The resulting fuzzy PSI protocols for $L_\infty$ distance achieve $O(m 2^{d-s} + n 2^{s}) d \log \delta$ communication, where $s \in [0, d]$ is an integer parameter that can be freely chosen.

However, {relying on GCs} results in poor concrete communication and computational efficiency for fuzzy PSI, as observed in~\cite{piske2025distance}.

% Although their protocols are built on symmetric primitives and have logarithmic dependency
% on the distance threshold $\delta$, 

\textbf{Oblivious pseudo-random function (OPRF).}
Recently, some of the most efficient fuzzy PSI protocols~\cite{van2025, piske2025distance} have been constructed using lightweight OPRF techniques. Specifically, van Baarsen and Pu~\cite{van2025} propose OPRF-based fuzzy matching protocols that support general $L_{p \in [1, \infty]}$ distance and further optimize them using prefix trie techniques. To realize fuzzy PSI with these OPRF-based fuzzy matching protocols, they employ spatial hashing and assume that both the sender’s and receiver’s points are slightly separated.
That is the points of both the sender's and receiver’s sets are $2\delta d^{1/p}$ or $2\delta(d^{1/p}+1)$ apart from each other.
Despite these optimizations, the resulting fuzzy PSI still incurs overhead scaling as $O((\log \delta)^{d/2})$ for $L_\infty$ distance and $O(\delta)$ for $L_p$ distance.

Concurrently, Piske et al.~\cite{piske2025distance} propose fuzzy matching protocols based on OPRF and oblivious transfer (OT). Their core technique, distance-aware OT, is a variant of OT in which the choice bit depends on whether the sender’s and receiver’s inputs are close. To achieve fuzzy PSI, they employ spatial hashing, assuming that each grid cell contains at most one sender input. However, unlike other protocols with prefix trie techniques, their protocols incur linear complexity in the distance threshold $\delta$.

\subsection{Fuzzy PSI on High-dimensional Sets}
To make fuzzy PSI practical for high-dimensional sets, it is necessary to mitigate the exponential overhead regarding $d$.
Several works have addressed this challenge, but they rely on some strong assumptions as discussed below. 
Similar to low-dimensional sets, we organize these works according to different technical approaches.
% We also organize these works into different technical lines, as in low-dimensional sets.

\textbf{Function secret sharing (FSS).}
To mitigate the $2^d$ factor in the complexity, Garimella et al.~\cite{garimella2022structure} assume that the receiver’s $\delta$-radius balls are \emph{globally disjoint}, meaning that the projections of all receiver balls do not overlap in any dimension. Under this assumption, they propose an FSS-based fuzzy PSI protocol that achieves communication complexity scaling to $O(d \log \delta)$, but incurs a computational overhead of $O(\delta^d)$.

\textbf{Additively homomorphic encryption (AHE).}
Subsequently, van Baarsen and Pu \cite{van2024fuzzy} relax the globally disjoint assumption by introducing the \textit{existing disjoint} assumption.
That is, for each receiver's ball, there exists at least one dimension on which its projection does not overlap with that of others.
They propose a fuzzy matching protocol based on DDH and extend it to realize fuzzy PSI.
However, their protocols' complexity remains quadratic in $d$.
Recently, Gao et al. \cite{gao2024efficient} introduce fuzzy mapping, generalizing spatial hashing-based solutions, and construct fuzzy PSI for high-dimensional sets based on AHE and DDH.
Their fuzzy PSI protocol requires that both parties satisfy the \textit{existing disjoint} assumption and realizes communication and computation costs linear in $\delta, d, m, n$. 
Dang et al. \cite{dang2025ccs} further optimize these protocols using prefix representations, reducing the dependence on $\delta$ to logarithmic.
Despite these improvements, a key limitation of these works is poor concrete efficiency due to expensive public-key operations.

% , which leads to  in practice.

\textbf{Oblivious pseudo-random function (OPRF).}
To address the above efficiency issue, Zhang et al. \cite{zhang2025fast} recently propose an intermediate assumption, which they call the $s$-separated set assumption.
This assumption states that for every $s$ dimensions of each point, there exists at least one dimension on which its projection does not overlap with that of others.
When $s = 1$, this degenerates the globally disjoint assumption \cite{garimella2022structure}, and when $s = d$, it corresponds to the existing disjoint assumption \cite{van2024fuzzy}. 
They construct fuzzy PSI using shared-output OPRF \cite{dinur2021mpc, alamati2024improved}, which rely on two-party computation to evaluate MPC-friendly low-degree PRF, introducing large communication and computation overhead.
More recently, relying on the globally disjoint assumption on both parties, van Baarsen and Pu \cite{van2025} propose the first fuzzy PSI protocol based solely on lightweight OPRF, achieving communication and computation costs that are linear in $\delta, d, m, n$.

% Utilizing shared-output OPRF, they present a new distributed identifier generation protocol for producing unique identifiers for the parties' points and further construct fuzzy PSI based on these identifiers.
% However, currently efficient instantiations of shared-output OPRF rely on generic two-party computation to evaluate MPC-friendly low-degree PRF. They still introduce larger communication and computation overhead than standard OPRF protocols.

\section{Preliminaries}

% We use a tilde to denote the prefix representation , e.g., $\tilde{q}_k$

\subsection{Notations}
\label{subsec: Notations}

We use $\kappa$ and $\lambda$ to denote the computational and statistical security parameters, respectively. 
The symbols $m$ and $n$ denote the set sizes of the sender and the receiver,
respectively.
Bold letters (e.g., $\vecq$) represent input points with dimension $d$, and $q_k$ denotes the $k$-th coordinate of $\vecq$.
% and $d$ denotes the dimension of an input point.
For a positive integer $a$, $[a]$ denotes the set $\{1, \ldots, a\}$.
The notation $s \leftarrow S$ indicates that $s$ is sampled uniformly at random
from $S$.
The indicator function $1\{\mathsf{event}\}$ equals $1$ if the event occurs and
$0$ otherwise.

\subsection{Fuzzy Matching and Fuzzy PSI}
\label{subsec: Fuzzy Matching and Fuzzy PSI}

We introduce the functionality of fuzzy matching and its secret-shared variant in Figure \ref{Func:FMatch} and that of fuzzy PSI in Figure \ref{Func:FPSI} for some distance metric $\mathsf{dist}$ on $d$-dimensional space. Fuzzy matching can be viewed as a simplified single-point version of fuzzy PSI.

\begin{figure}[!t]
\begin{nffunc}{\Func[(ss)FMatch]}

\noindent \textbf{Parameters:} Sender \SSS and receiver \RRR. Distance metric $\mathsf{dist}(\cdot,\cdot)$ and threshold $\delta$.

\noindent \textbf{Functionality:}

\begin{enumerate}

    \item Wait for input $\vecq \in \ZZ_{2^\ell}^{d}$ from \SSS and $\vecw \in \ZZ_{2^\ell}^{d}$ from \RRR.

    \item Compute $z := 1\{\mathsf{dist}(\vecq, \vecw) \le \delta\}$
    
    % Output 1 to \RRR if , and 0 otherwise.

    \item \textbf{$\mathsf{FMatch}$}: Output $z$ to \RRR.
    
    \item \textbf{$\mathsf{ssFMatch}$}: Sample $z^\SSS, z^\RRR \from \FF_2$ such that $z^\SSS \xor z^\RRR = z$. Output $z^\SSS$ to \SSS and $z^\RRR$ to \RRR.
    
\end{enumerate}

\end{nffunc}
% \vspace{0.5em}
\caption{Functionality of (secret-shared) fuzzy matching.}
\label{Func:FMatch}
\end{figure}

\begin{figure}[!t]
\begin{nffunc}{\Func[FPSI]}

\noindent \textbf{Parameters:} Sender \SSS and receiver \RRR with input sizes $m, n$, respectively. Distance metric $\mathsf{dist}(\cdot,\cdot)$ and threshold $\delta$.

\noindent \textbf{Functionality:}

\begin{enumerate}

    \item Wait for input $Q = \{\vecq_j\}_{j \in [m]} \in \ZZ_{2^\ell}^{d\times m}$ from \SSS and $W = \{\vecw_i\}_{i \in [n]} \in \ZZ_{2^\ell}^{d\times n}$ from \RRR.

    \item Output $Z := \{\vecq_j \mid \exists i\in[n], \mathsf{dist}(\vecq_j, \vecw_i) \le \delta \}$ to \RRR.
    
\end{enumerate}

\end{nffunc}
% \vspace{0.5em}
\caption{Functionality of fuzzy PSI.}
\label{Func:FPSI}
\end{figure}

\subsection{Prefix Representations of Intervals}
\label{subsec: Prefix Representations of Intervals}

We present the prefix trie technique \cite{chakraborti2023distance, van2025, dang2025ccs} that succinctly and efficiently represents an integer interval.

\begin{theorem}[\cite{van2025}]
\label{theorem: prefix}
Given an integer interval $[q-\delta, q+\delta] \in \ZZ_{2^\ell}^{2\delta+1}$ and a point $w \in \ZZ_{2^\ell}$,
there are two algorithms with $u, u^\prime = O(\log \delta)$:

\begin{enumerate}
    \item $\mathsf{PxTrie}(q-\delta, q+\delta) \to \{\tilde{q}_1,\ldots,\tilde{q}_u\}$: The $\mathsf{PxTrie}$ algorithm succinctly encodes the interval $[q-\delta, q+\delta]$ into $u$ prefix nodes $\{\tilde{q}_1,\ldots,\tilde{q}_u\} \in \ZZ_{2^\ell}^u$ that cover the entire interval without overlap. Specifically, $u \le 2 + \log\delta$ when $\delta$ is a power of $2$. 
    % \meng{Add: prefix covers the whole interval}

    \item $ \mathsf{PxPath}(w, \delta) \to \{\tilde{w}_1,\ldots,\tilde{w}_{u^\prime}\}$: The $\mathsf{PxPath}$ algorithm expands each query point $w \in \ZZ_{2^\ell}$ into a path of prefix nodes $\{\tilde{w}_1,\ldots,\tilde{w}_{u^\prime}\} \in \ZZ_{2^\ell}^{u^\prime}$, where {$u^\prime := \lfloor \log (2\delta+1) \rfloor + 1$.} 
    % $u^\prime := 1 + \log\delta$. 
\end{enumerate}
Then, it holds that $\tilde{w}_t \in \{\tilde{q}_1,\ldots,\tilde{q}_u\}$ for some unique $t \in [u^\prime]$ if and only if $w \in [q-\delta, q+\delta]$. 
% \meng{Add unique property for multiple intervals.}
\end{theorem}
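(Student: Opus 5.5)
We will make the two algorithms concrete and then verify the covering and uniqueness property directly from the structure of the binary trie over $\ZZ_{2^\ell}$. A prefix node at level $h$ is identified with a dyadic block $B_{h,a}=\{a2^h,\ldots,(a+1)2^h-1\}$, encoded as an element of $\ZZ_{2^\ell}$ (e.g.\ $a$ together with a level tag, or the standard prefix-with-padding encoding). We first ignore wrap-around: if $[q-\delta,q+\delta]$ wraps modulo $2^\ell$, we split it into two non-wrapping intervals, which at most doubles $u$ and leaves the asymptotics unchanged.

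$\mathsf{PxTrie}(q-\delta,q+\delta)$ is defined as the canonical greedy decomposition of $[L,R]$ into maximal dyadic blocks: repeatedly take the largest $h$ with $2^h\mid L$ and $L+2^h-1\le R$, output $B_{h,L/2^h}$, and set $L\leftarrow L+2^h$. By construction the output blocks partition $[L,R]$, so they cover it without overlap. For the size bound, note that the block sizes first strictly increase and then strictly decrease; this is the standard segment-tree argument. An interval of length $2\delta+1$ therefore uses at most two blocks per level, and only levels $h\le\lfloor\log(2\delta+1)\rfloor$ occur. This gives $u=O(\log\delta)$. When $\delta$ is a power of two, we sharpen the count to $u\le 2+\log\delta$ as follows. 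Since the length $2\delta+1$ is odd, the increasing and decreasing staircases cannot both be full. A direct count of the ascending run (at most one block of each size $2^0,\ldots,2^{k}$ with $k\le\log\delta$) and of the descending run, using that their total length is exactly $2\delta+1$, shows that the two runs share all but about one level. This cancellation yields $\log\delta+2$. I expect this tight constant to be the fiddliest step, and I would verify it by case analysis on the position of the highest block.

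$\mathsf{PxPath}(w,\delta)$ outputs the ancestors of $w$ at levels $h=0,\ldots,u'-1$, namely $\tilde w_{h+1}=B_{h,\lfloor w/2^h\rfloor}$, with $u'=\lfloor\log(2\delta+1)\rfloor+1$. These are exactly the levels at which $\mathsf{PxTrie}$ can emit a block, since no block of size exceeding $2\delta+1$ fits in the interval.

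The equivalence then follows in two directions. ($\Leftarrow$) Suppose $w\in[q-\delta,q+\delta]$. Because the blocks partition the interval, $w$ lies in exactly one output block $B_{h,a}$, with $h\le u'-1$. Since dyadic blocks containing $w$ are precisely its ancestors, $B_{h,a}=\tilde w_{h+1}$. ($\Rightarrow$) Suppose $\tilde w_t$ equals some $\tilde q_j$. Then $w\in\tilde w_t=\tilde q_j\subseteq[q-\delta,q+\delta]$. For uniqueness of $t$: if two ancestors $\tilde w_t,\tilde w_{t'}$ with $t<t'$ both appeared among the output blocks, they would be nested and hence overlap, contradicting disjointness. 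Injectivity of the encoding, which the level tag guarantees, ensures that equality of encodings means equality of blocks.
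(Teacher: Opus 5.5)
The paper gives no proof of this theorem; it imports it from \cite{van2025}. The three corollaries that follow it reason about prefixes with trailing wildcards, which is exactly your dyadic-block model.

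Your argument for the equivalence is sound:
\begin{itemize}
\item the greedy decomposition partitions the interval;
\item no block exceeds level $\lfloor\log(2\delta+1)\rfloor=u'-1$;
\item the dyadic blocks containing $w$ are precisely its $u'$ ancestors;
\item disjointness of the output blocks gives uniqueness of $t$.
\end{itemize}
The ``two blocks per level'' count correctly gives $u,u'=O(\log\delta)$.

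\textbf{The gap: the sharp bound.} You have not actually proved $u\le 2+\log\delta$, and the heuristic you offer points the wrong way. If the ascending and descending runs shared all but one level, $u$ would be about $2\log\delta$. In fact they use disjoint sets of levels, apart from at most one level used by both, and a carry count makes this precise.

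Let $A$ and $B$ be the total lengths of the ascending and descending runs, so $A+B=2\delta+1$. Each run consists of distinct powers of two, so it is the binary expansion of its length, and $u=\mathrm{popcount}(A)+\mathrm{popcount}(B)$. Summing the bitwise identities $A_i+B_i+c_i=s_i+2c_{i+1}$ over all positions gives $\mathrm{popcount}(A)+\mathrm{popcount}(B)=\mathrm{popcount}(A+B)+C$, where $C$ is the number of carries.

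For $\delta=2^k$ we have $A+B=2^{k+1}+1$, which has popcount $2$. There is no carry out of bit $0$: the sum bit is $1$ with no carry in, so $A_0+B_0=1$. There is no carry out of bit $k+1$, since the sum has no bit $k+2$. Hence $C\le k$ and $u\le k+2$.

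The same count handles wrap-around with the same constant. The suffix $[L,2^\ell-1]$ yields $\mathrm{popcount}(2^\ell-L)$ blocks, the prefix $[0,R]$ yields $\mathrm{popcount}(R+1)$ blocks, and these two lengths again sum to $2\delta+1$. Your remark that splitting ``at most doubles $u$'' only gives $u\le 2(2+\log\delta)$, so you need this argument there as well.

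\textbf{Minor point on the encoding.} This issue is inherited from the statement itself. No encoding of the blocks at levels $0,\ldots,u'-1$ into $\ZZ_{2^\ell}$ can be injective once $u'\ge 2$, because levels $0$ and $1$ alone contain $2^\ell+2^{\ell-1}$ blocks. So a ``level tag'' needs extra bits. The encoding $a\|1\|0^h$ is injective on $\ell+1$ bits. That suffices here, since the prefix nodes are only used as OPPRF keys.
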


{That means if $w \in [q-\delta, q+\delta]$, there exists a unique $t \in [u^\prime]$ such that $\tilde{w}_{t} \in \{\tilde{q}_1,\ldots,\tilde{q}_u\}$; otherwise, $\{\tilde{q}_{t}\}_{t\in[u]} \intersection \{\tilde{w}_{t}\}_{t\in[u^\prime]}   = \emptyset$.
To simplify the exposition of security proofs, we establish the following three direct corollaries.}

\begin{corollary}
\label{corollary: Distinctness of PxTrie}
{Given any two intervals $[q-\delta, q+\delta], [q^\prime-\delta, q^\prime+\delta] \in \ZZ_{2^\ell}^{2\delta+1}$ such that $|q - q^\prime| > 2\delta$, it holds that $\mathsf{PxTrie}(q-\delta, q+\delta)\cap \mathsf{PxTrie}(q^\prime-\delta, q^\prime+\delta) = \emptyset$.}
\end{corollary}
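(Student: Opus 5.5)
The plan is to reduce the statement to the disjointness of the underlying integer intervals, and then use the covering property of $\mathsf{PxTrie}$ from Theorem~\ref{theorem: prefix}. Interpret each prefix node $\tilde{q}$ as the dyadic block of $\ZZ_{2^\ell}$ it represents, i.e.\ all integers sharing that prefix. The prefix node should be encoded together with its level, which is the standard convention and is implicit in the uniqueness part of the theorem. Theorem~\ref{theorem: prefix} says that $\mathsf{PxTrie}(q-\delta,q+\delta)$ consists of nodes whose blocks are contained in $[q-\delta,q+\delta]$ and together cover it without overlap.

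\textbf{Step 1.} Since $|q-q'|>2\delta$, the intervals $I=[q-\delta,q+\delta]$ and $I'=[q'-\delta,q'+\delta]$ are disjoint. Indeed, any $x\in I\cap I'$ would give $|q-q'|\le |q-x|+|x-q'|\le 2\delta$.

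\textbf{Step 2.} Suppose for contradiction that some node $\tilde{p}$ lies in both $\mathsf{PxTrie}(I)$ and $\mathsf{PxTrie}(I')$. Its block $B(\tilde{p})$ is nonempty, since it contains at least one leaf. Because $\tilde{p}$ is output by $\mathsf{PxTrie}(I)$, we have $B(\tilde{p})\subseteq I$. For the same reason $B(\tilde{p})\subseteq I'$. Hence $I\cap I'\neq\emptyset$, contradicting Step 1.

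An alternative route, if one prefers to invoke only the iff statement of Theorem~\ref{theorem: prefix}, is as follows. Pick any leaf $w\in B(\tilde{p})$. Its path $\mathsf{PxPath}(w,\delta)$ contains $\tilde{p}$, provided the level of $\tilde{p}$ is among the $u'$ levels expanded, which holds since trie nodes have height at most about $\log(2\delta+1)$. The theorem then gives $w\in I$ and $w\in I'$, which is again a contradiction.

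\textbf{Main obstacle.} The only delicate point is justifying that every output node's block lies inside the interval. The theorem phrases this as the nodes covering the interval without overlap, so I will read "cover" as an exact partition by maximal dyadic blocks contained in the interval. One more possible wrinkle is wraparound modulo $2^\ell$. Since the intervals are written in $\ZZ_{2^\ell}^{2\delta+1}$ with $|q-q'|>2\delta$, I would assume no wraparound, or else interpret the distance cyclically; Step 1 goes through either way.
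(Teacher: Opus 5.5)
Your proposal is correct and follows the paper's argument. A prefix node common to both $\mathsf{PxTrie}$ outputs covers some concrete value $z$ (the paper takes the one with trailing wildcards set to $0$), which must lie in both intervals, and the triangle inequality then gives $|q-q'|\le 2\delta$, a contradiction; you merely establish disjointness of the intervals first and make explicit the conventions the paper uses tacitly, namely that each output block lies inside its interval and that prefix nodes carry their level.
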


\begin{proof}
{Suppose for contradiction there exists $z^* \in \mathsf{PxTrie}(q-\delta, q+\delta)\cap \mathsf{PxTrie}(q^\prime-\delta, q^\prime+\delta)$.
Let $z$ be any specific value covered by the prefix $z^*$ (e.g., the value obtained by setting all trailing wildcard bits to $0$).
% Let $z$ be a value with prefix $z^*$ with all 0s in trailing wildcards. 
This means $z \in [q-\delta, q+\delta]\cap [q^\prime-\delta, q^\prime+\delta]$, which implies $|z - q| \le \delta$ and $|z - q^\prime| \le \delta$.
This yields  $|q - q^\prime| = |(q - z) - (q^\prime - z)| \le |q - z| + |q^\prime - z| \le 2\delta$.
This contradicts the assumption $|q - q^\prime| > 2\delta$.
Therefore, the intersection is empty.}
\end{proof}

\begin{corollary}
\label{corollary: Distinctness of PxPath}
{Given any two points $w, w^\prime \in \ZZ_{2^\ell}$ such that $|w - w^\prime| > 2\delta$, it holds that $\mathsf{PxPath}(w,\delta)\cap \mathsf{PxPath}(w^\prime,\delta) = \emptyset$.}
\end{corollary}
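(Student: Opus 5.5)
Argue by contradiction, in parallel with the proof of Corollary~\ref{corollary: Distinctness of PxTrie}, but using the ``if and only if'' characterization in Theorem~\ref{theorem: prefix} rather than reasoning directly about the internal structure of $\mathsf{PxPath}$. Suppose there is a prefix node $z^* \in \mathsf{PxPath}(w,\delta)\cap \mathsf{PxPath}(w^\prime,\delta)$. The plan is to turn $z^*$ into an interval of radius $\delta$ containing both $w$ and $w^\prime$. That would force $|w-w^\prime|\le 2\delta$, contradicting the hypothesis.

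The key step is to pick a center $q$ with $z^* \in \mathsf{PxTrie}(q-\delta, q+\delta)$. The natural approach uses the structure of the prefix trie. Each node on $\mathsf{PxPath}(w,\delta)$ is an ancestor of the leaf $w$ at depth at most $u^\prime = \lfloor\log(2\delta+1)\rfloor+1$ levels up. So $z^*$ covers an aligned dyadic block of size at most $2\delta+1$, or rather at most the largest power of two not exceeding $2\delta+1$. This block contains both $w$ and $w^\prime$, because a prefix node is an ancestor of every value it covers. Hence $|w-w^\prime|$ is at most the block size minus one, which is at most $2\delta$. This gives the contradiction directly.

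If I want to route the argument through Theorem~\ref{theorem: prefix} instead, the step is as follows. Choose $q$ so that the block covered by $z^*$ is one of the maximal canonical nodes of $\mathsf{PxTrie}(q-\delta,q+\delta)$, for instance by centering $[q-\delta,q+\delta]$ around the block. The uniqueness part of the theorem then gives $w, w^\prime \in [q-\delta,q+\delta]$, and the triangle inequality $|w-w^\prime|\le |w-q|+|q-w^\prime|\le 2\delta$ finishes the argument exactly as in the previous corollary.

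The main obstacle is the claim that a node on the path covers at most $2\delta+1$ consecutive integers, i.e.\ that the path length $u^\prime$ caps the level of the nodes. This depends on the concrete definition of $\mathsf{PxPath}$ in \cite{van2025}: it outputs the prefixes of $w$ obtained by wildcarding the last $0,1,\ldots,u^\prime-1$ bits. A node with $t-1$ wildcard bits covers $2^{t-1}\le 2^{u^\prime-1}\le 2\delta+1$ values. I would state this bound explicitly from the definition, then conclude as above. Wrap-around in $\ZZ_{2^\ell}$ is harmless, since aligned dyadic blocks do not wrap.
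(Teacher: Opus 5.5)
Your first argument is exactly the paper's proof: a shared node has at most $u^\prime-1$ trailing wildcards, so $w$ and $w^\prime$ lie in one aligned block of size $2^{u^\prime-1}=2^{\lfloor\log(2\delta+1)\rfloor}\le 2\delta+1$, which gives $|w-w^\prime|\le 2\delta$. You can drop the optional detour through Theorem~\ref{theorem: prefix}: it needs the same block-size bound anyway, and centering the interval need not make $z^*$ a node of $\mathsf{PxTrie}$ (for example, with $\delta=3$ the block $\{4,5\}$ inside $[1,7]$ is absorbed into the node $\{4,\ldots,7\}$).
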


% \begin{corollary}
% \label{corollary: Distinctness of PxPath}
% \blue{Given a point $q \in \ZZ_{2^\ell}$, for any two points $w, w^\prime \in \ZZ_{2^\ell}$ such that $|w - w^\prime| > 2\delta$, it holds that there do not exist two distinct prefixes $z, z^\prime$ such that $z = \mathsf{PxTrie}(q-\delta, q+\delta) \cap \mathsf{PxPath}(w, \delta)$ and $z^\prime = \mathsf{PxTrie}(q-\delta, q+\delta) \cap \mathsf{PxPath}(w^\prime, \delta)$.
% }
% \end{corollary}

\begin{proof}
{Suppose for contradiction there exists $z \in \mathsf{PxPath}(w,\delta)\cap \mathsf{PxPath}(w^\prime,\delta)$.
By Theorem~\ref{theorem: prefix}, $\mathsf{PxPath}(\cdot)$ outputs $u^\prime$ prefixes with at most $u^\prime - 1$ trailing wildcards, where $u^\prime = \lfloor \log (2\delta+1) \rfloor + 1$.
Since the shared prefix $z$ has at most $u^\prime - 1$ wildcards, this yields $|w - w^\prime| \leq 2^{u^\prime - 1} - 1 = 2^{\lfloor \log (2\delta+1) \rfloor} - 1 \leq 2^{\log (2\delta+1)} - 1 = 2\delta$. 
This contradicts the assumption $|w - w^\prime| > 2\delta$.
Therefore, the intersection is empty.}
\end{proof}

\begin{corollary}
\label{corollary: uniqueness}
{Given any interval $[q-\delta, q+\delta] \in \ZZ_{2^\ell}^{2\delta+1}$, for any two points $w, w^\prime \in \ZZ_{2^\ell}$ such that $|w - w^\prime| > 2\delta$, there exists at most one point $v \in \{w, w^\prime\}$ such that $\mathsf{PxTrie}(q-\delta, q+\delta) \cap \mathsf{PxPath}(v, \delta) \neq~\emptyset$.}
\end{corollary}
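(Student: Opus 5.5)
We argue by contradiction, in the same style as Corollaries~\ref{corollary: Distinctness of PxTrie} and~\ref{corollary: Distinctness of PxPath}, and the key tool is the ``if and only if'' part of Theorem~\ref{theorem: prefix}. Suppose both $w$ and $w^\prime$ satisfy $\mathsf{PxTrie}(q-\delta, q+\delta) \cap \mathsf{PxPath}(w, \delta) \neq \emptyset$ and $\mathsf{PxTrie}(q-\delta, q+\delta) \cap \mathsf{PxPath}(w^\prime, \delta) \neq \emptyset$. We will show this forces $|w-w^\prime| \le 2\delta$.

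First, apply Theorem~\ref{theorem: prefix} to $w$. A nonempty intersection means some $\tilde{w}_t \in \mathsf{PxPath}(w,\delta)$ lies in $\{\tilde{q}_1,\ldots,\tilde{q}_u\}$. By the ``only if'' direction, this gives $w \in [q-\delta, q+\delta]$, i.e., $|w-q| \le \delta$. Applying the same reasoning to $w^\prime$ gives $|w^\prime - q| \le \delta$.

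Next, the triangle inequality yields
\[
|w - w^\prime| \le |w - q| + |q - w^\prime| \le 2\delta,
\]
which contradicts the hypothesis $|w-w^\prime| > 2\delta$. Hence at most one $v \in \{w, w^\prime\}$ can have a nonempty intersection.

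The only delicate point is the modular setting $\ZZ_{2^\ell}$. As in the preceding corollaries, we interpret $|\cdot|$ and the intervals as ordinary integer intervals that do not wrap around. Theorem~\ref{theorem: prefix} is stated under this convention, so the triangle inequality applies directly. Alternatively, one could argue without the theorem's characterization: take a concrete value covered by each shared prefix, as in the proof of Corollary~\ref{corollary: Distinctness of PxTrie}. However, invoking the iff statement is cleaner, and I expect no real obstacle beyond being explicit about this convention.
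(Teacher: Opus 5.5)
Your proposal is correct and matches the paper's proof: assume both intersections are nonempty, use the characterization in Theorem~\ref{theorem: prefix} to place both $w$ and $w^\prime$ in $[q-\delta, q+\delta]$, and conclude $|w-w^\prime|\le 2\delta$, which contradicts the hypothesis. Your explicit triangle-inequality step and your remark on the non-wrap-around convention only spell out details the paper leaves implicit.
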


\begin{proof}
{Suppose for contradiction that both intersections are non-empty. By Theorem~\ref{theorem: prefix}, this implies $w \in [q-\delta, q+\delta]$ and $w^\prime \in [q-\delta, q+\delta]$.
This contradicts the assumption that $|w - w^\prime| > 2\delta$. Therefore, at most one point $v \in \{w, w^\prime\}$ can yield a non-empty intersection.}
\end{proof}

\subsection{Spatial Hashing}
\label{subsec: Spatial Hashing}

To extend single-point fuzzy matching to multi-point fuzzy PSI,
prior works~\cite{garimella2022structure, gao2024efficient, bui2025new,
van2024fuzzy, van2025, dang2025ccs, piske2025distance, richardson2024fuzzy}
partition the entire input space into smaller hypercubes with side length
$2\delta$, also referred to as grid cells. This tiling ensures that each point
needs to be matched only against points residing in a bounded number of cells, instead of the whole input space.
We formally define this approach, known as \emph{spatial hashing}, below.

% Since the number of relevant cells still scales  exponentially in the dimension of the space, this technique is especially useful  in low dimensions.
We first introduce the definition of a cell and its identifier. A cell is defined as a $L_\infty$ hypercube of side length $2\delta$. 
The function $\mathsf{cell}_{2\delta}(\cdot)$ maps a point $\vecq$ to the identifier of the cell.
The identifier of the 
corresponding cell $\CCC_\vecq := \mathsf{cell}_{2\delta}(\vecq)$ is given by $(\lfloor \frac{q_1}{2\delta}\rfloor \cdot 2\delta, \ldots, \lfloor \frac{q_d}{2\delta}\rfloor \cdot~2\delta)$.
Spatial hashing consists of two algorithms:
\begin{enumerate}
    \item $\mathsf{cellhash}(\cdot)$: The function maps a point $\vecq \in \ZZ_{2^\ell}^d$ into its unique identifier $\CCC_\vecq$ of the cell containing $\vecq$.

    \item $\mathsf{ballhash}(\cdot)$: The function maps a point $\vecq \in \ZZ_{2^\ell}^d$ into a set of identifiers $\{\CCC_{\vecq,h}\}_{h \in [2^d]}$ of the $2^d$ cells that intersect the $L_\infty$ ball of radius $\delta$ centered at $\vecq$. 
    % \meng{Add refs to explain why $2^d$ cells.}
    
\end{enumerate}

\subsection{(Programmable) Oblivious PRF}
\label{subsec: Oblivious Programmable PRF}

An oblivious pseudorandom function (OPRF) is a two-party protocol in which the
receiver inputs a batch\footnote{We note that our protocols invoke only
multi-query OPRFs. For simplicity, we omit the qualifier ``multi-query'' in the
remainder of the paper.} of queries $\{q_i\}_{i \in [n]}$. The sender obtains a random function $F$, while the receiver learns only the
corresponding evaluations $\{F(q_i)\}_{i \in [n]}$.

Programmable OPRF (OPPRF), introduced by Kolesnikov et al.~\cite{kolesnikov2017practical}, extends this functionality by allowing the sender to program the function $F$ at designated inputs: for each sender-chosen pair $(k_j, v_j)$, the functionality enforces $F(k_j) = v_j$, whereas all non-programmed inputs are mapped to random outputs.
We present the functionality of OPPRF in Figure \ref{Func: opprf}.
The state-of-the-art OPPRF protocols~\cite{rindal2021vole,
raghuraman2022blazing} are constructed by combining OPRF with oblivious key-value stores (OKVS)~\cite{garimella2021oblivious}. Our protocols directly use OPPRF as a fundamental building block, instead of OPRF.

\begin{figure}[t]
\begin{nffunc}{\Func[OPPRF]}

\noindent \textbf{Parameters:} Sender \SSS and receiver \RRR with input sizes $m, n$, respectively. 

\noindent \textbf{Functionality:}

\begin{enumerate}

    \item Wait for input $L = \{(k_{j}, v_j)\}_{j \in [m]} \subseteq \ZZ_{2^\ell} \times \FF$ from \SSS and $Q = \{q_i\}_{i \in [n]} \subseteq \ZZ_{2^\ell}$ from \RRR.

    \item Sample a random function $F : \ZZ_{2^\ell} \to \FF$ such that $F(k) = v$ for $(k, v) \in L$. 
    
    \item For $i \in [n]$, compute $y_i := F(q_i)$.

    \item Output $\OOO^{F}$ to \SSS and $\{y_i\}_{i \in [n]}$ to \RRR. 

\end{enumerate}

\end{nffunc}
% \vspace{0.5em}
\caption{Functionality of oblivious programmable PRF.}
\label{Func: opprf}
\end{figure}

\subsection{Weak Labeled PSI}
\label{subsec: Weak Labeled PSI}

As illustrated in Figure \ref{Func: wLPSI}, weak labeled PSI \cite{van2025} is a two-party protocol, where the sender inputs $\{(q_{j}, \mathsf{label}_j)\}_{j \in [m]}$ and the receiver inputs $W = \{w_i\}_{i \in [n]}$.  The receiver learns $\{(q_j, \mathsf{label}_j)\}$ for $q_j \in W$.
It differs from the standard notion of labeled PSI in that the functionality outputs the tuple of the intersection item and its corresponding label, instead of just the label. \cite{van2025} presents an efficient protocol realizing weak labeled PSI from OPRF.

\begin{figure}[t]
\begin{nffunc}{\Func[wLPSI]}

\noindent \textbf{Parameters:} Sender \SSS and receiver \RRR with input sizes $m, n$, respectively. 

\noindent \textbf{Functionality:}

\begin{enumerate}

    \item Wait for input $\{(q_{j}, \mathsf{label}_j)\}_{j \in [m]}$ from \SSS and $W = \{w_i\}_{i \in [n]}$ from \RRR.

    \item Output $\{(q_j, \mathsf{label}_j) \mid j\in[m], q_j \in W \}$ to \RRR.

\end{enumerate}

\end{nffunc}
% \vspace{0.5em}
\caption{Functionality of weak labeled PSI.}
\label{Func: wLPSI}
\end{figure}

\begin{figure}[!t]
\begin{nffunc}{\Func[(ss)PEQT]}
% \vspace{0.5em}

\noindent \textbf{Parameters:} Sender \SSS and receiver \RRR.

\noindent \textbf{Functionality:}

\begin{enumerate}

    \item Wait for input $x \in \FF_{2^\ell}$ from \SSS and $y \in \FF_{2^\ell}$ from \RRR.

    \item Compute $b := 1\{x = y\} \in \FF_2$.

    \item \textbf{$\mathsf{PEQT}$}: Output $b$ to \RRR.
    
    \item \textbf{$\mathsf{ssPEQT}$}: Sample $b^\SSS, b^\RRR \from \FF_2$ such that $b^\SSS \xor b^\RRR = b$. Output $b^\SSS$ to \SSS and $b^\RRR$ to \RRR.
    
\end{enumerate}

\end{nffunc}
% \vspace{0.5em}
\caption{Functionality of (secret-shared) private equality test.}
\label{Func: ssPEQT}
\end{figure}

\subsection{(Secret-shared) Private Equality Test}

The functionality of the private equality test (PEQT) and its secret-shared variant are given in Figure \ref{Func: ssPEQT}.

\subsection{Cuckoo Hashing}
\label{Cuckoo Hashing}

Cuckoo hashing was introduced by Pagh and Rodler~\cite{pagh2004cuckoo}. It maps $m$ items into a table $T$ with $m' = \epsilon m$ bins using $k$ hash functions $H_1, \ldots, H_k$.
Cuckoo hashing ensures that each bin contains at most one item through the following procedure.
For an item $x$, insert it into any empty bin among $T[H_1(x)], \ldots, T[H_k(x)]$.
If all such bins are occupied, randomly select one of the $k$ bins, evict the existing item $y$, and attempt to reinsert $y$ using the same set of hash functions. This process is repeated until all items are successfully placed.
Recent works~\cite{pinkas2018scalable} empirically show that with appropriate parameters of $\epsilon, k$, Cuckoo hashing can achieve negligible failure probability.

\subsection{Oblivious Permutation}

The oblivious permutation functionality receives a secret-shared vector from the two parties. 
Then, the sender learns a random permutation, and the receiver obtains the permuted vector.
The functionality is given in Figure \ref{Func:Perm}. Recent works \cite{peceny2025efficient, han2025concretely, chase2020secret} provide efficient protocols for oblivious permutation and its variants.

\begin{figure}[!t]
\begin{nffunc}{\Func[Perm]}

\noindent \textbf{Parameters:} Sender \SSS and receiver \RRR. Input size $n$

\noindent \textbf{Functionality:}

\begin{enumerate}

    \item Wait for input $(\{x_i^\SSS\}_{i \in [n]} \in \FF^n)$ from \SSS and $(\{x_i^\RRR\}_{i \in [n]} \in \FF^n)$ from \RRR.

    \item Sample a random permutation $\pi$. For $i \in [n]$, compute $y_i := x_{\pi(i)}$, where $x_{\pi(i)} := x_{\pi(i)}^\SSS + x_{\pi(i)}^\RRR \in \FF$.

    \item Output $\pi$ to \SSS and $\{y_i\}_{i \in [n]}$ to \RRR.
    
\end{enumerate}

\end{nffunc}
% \vspace{0.5em}
\caption{Functionality of oblivious permutation.}
\label{Func:Perm}
\end{figure}

\section{Technical Overview}
\label{Sec: Technical Overview}

\iffalse
Gao et al. highlight that most fuzzy PSI protocols typically involve two distinct phases: coarse mapping and refined filtering. (1) In the coarse mapping phase, initial pairings are established between a receiver point $\vecw$ and a sender point $\vecq$ through some abstract fuzzy mapping schemes, if they are potentially $\delta$-close.
However, this phase may introduce false positives.
% identifiers are assigned to points from both the sender and receiver through some abstract hashing techniques, establishing  between a receiver point $\vecw$ and a sender point $\vecq$ if they share the same identifier and.
(2) In the refined filtering phase, a \textit{fuzzy matching} approach is applied to each pair formed during the coarse mapping phase, which determines whether a receiver’s point $\vecw$ and a sender’s point $\vecq$ satisfy $\mathsf{dist}(\vecq,\vecw) \le \delta$. 
This eliminates the false positives and yields the final results to the receiver.
\fi

% Our protocols follow the same two-phase paradigm, but introduces new techniques for fuzzy matching and  \meng{framework from fuzzy mapping and fuzzy matching}

In this section, we first introduce efficient fuzzy matching protocols and then present a new framework to compile our single-point fuzzy matching to multi-point fuzzy PSI.

\subsection{Asymptotically Efficient Fuzzy Matching}
\label{subsec: Asymptotically Efficient Fuzzy Matching}

We propose two fuzzy matching protocols for $L_\infty$ distance based on OPRF and OT, respectively. 
The two protocols target different application scenarios, depending on the bit length of the input points.
The extension to $L_p$ distance of OT-based fuzzy matching is provided in Appendix \ref{appendix: OT-based Fuzzy Matching for $L_p$ Distance}.

\textbf{Construction from role-reversed oblivious PRF.} 
Our first construction is based on programmable OPRF (OPPRF, see Section~\ref{subsec: Oblivious Programmable PRF}),
% See Section \ref{subsec: Oblivious Programmable PRF} for details of OPPRF.  
and achieves $O(d \log \delta)$ computational and communication overhead.
This improves upon the state-of-the-art work~\cite{van2025}, which also relies on OPPRF but incurs
computational overhead $O(d \log \delta + (\log \delta)^{d/2})$.

Our starting point is the OPPRF-based fuzzy matching protocol from \cite{van2025}.
Specifically, for each dimension $k \in [d]$, the sender programs an OPPRF, which maps all points in the interval $[q_k-\delta, q_k+\delta]$ to the same random value $r_k$. The receiver then obtains the OPPRF evaluation $\hat{r}_k$ on input $w_k$. Finally, the two parties privately check whether $\sum_{k \in [d]} r_k = \sum_{k \in [d]} \hat{r}_k$. With appropriate parameters, this holds if and only if $w_k \in [q_k-\delta, q_k+\delta]$ for all $k \in [d]$. 
This elegant protocol incurs an overhead of $O(d \delta)$.
The work \cite{van2025} further optimizes this protocol to achieve logarithmic dependence on $\delta$ using the prefix trie technique (see Section \ref{subsec: Prefix Representations of Intervals}).
This enables representing the interval $[q_k-\delta, q_k+\delta]$ with only $u = O(\log \delta)$ elements, hence reducing the number of points programmed by the sender in the OPPRF from $2\delta + 1$ to $u$.
Nevertheless, as a trade-off, the receiver obtains $u^\prime = O(\log \delta)$ OPPRF evaluations $\{\hat{r}_{k, t}\}_{t \in [u^\prime]}$ for each dimension $k$.
As a result, the receiver has to compare all possible combinations of $\sum_{k \in [d]} \hat{r}_{k, t_k}$ for $t_k \in [u^\prime]$ with the sender's $\sum_{k \in [d]} r_k$.
The optimization achieves $O(d \log \delta)$ communication overhead; however, it unfortunately causes a computational overhead of $O((\log \delta)^{d/2})$ even with additional optimizations.

To address this problem, we propose a new fuzzy matching protocol based on two
\emph{role-reversed} OPPRF invocations.
% , achieving only $O(d \log \delta)$ overhead. 
Specifically, our protocol incorporates prefix trie techniques and executes the first OPPRF like that of~\cite{van2025}. 
After the receiver obtains OPPRF evaluations
$\{\hat{r}_{k,1}, \ldots, \hat{r}_{k,u'}\}$ for each dimension $k$, the parties invoke a second OPPRF with reversed roles to compress these $u'$ evaluations into a
single value.
More concretely, for each dimension $k \in [d]$, the receiver (as a sender $\SSS$ in this OPPRF) programs an
OPPRF that maps these $u'$ points
$\{\hat{r}_{k,1}, \ldots, \hat{r}_{k,u'}\}$ to the same random value
$\hat{s}_k$, while the sender (as a receiver $\RRR$ in this OPPRF) obtains the OPPRF evaluation $s_k$ on input
$r_k$. 
{In Figure \ref{Fig: technical overview oprf-based fmatch}, we illustrate the two role-reversed invocations of OPPRF for clarity.}
As shown by our careful analysis in
Section~\ref{sec: Fuzzy Matching for L-inf Distance from Oblivious PRF}, this
construction guarantees that $s_k = \hat{s}_k$ if and only if $w_k \in [q_k-\delta, q_k+\delta]$. 
% if $r_k \in \{\hat{r}_{k,1}, \ldots, \hat{r}_{k,u'}\}$, then
Finally, the parties privately check whether
$\sum_{k \in [d]} s_k = \sum_{k \in [d]} \hat{s}_k$.

It is worth noting that, although our protocol requires two OPPRF invocations, it maintains $O(d \log \delta)$ communication complexity (with a slightly larger constant factor) and significantly reduces the computational overhead from $O((\log \delta)^{d/2})$ in \cite{van2025} to $O(d \log \delta)$.

\begin{figure}[!t]
	\centering
	\includegraphics[width=0.42\textwidth]{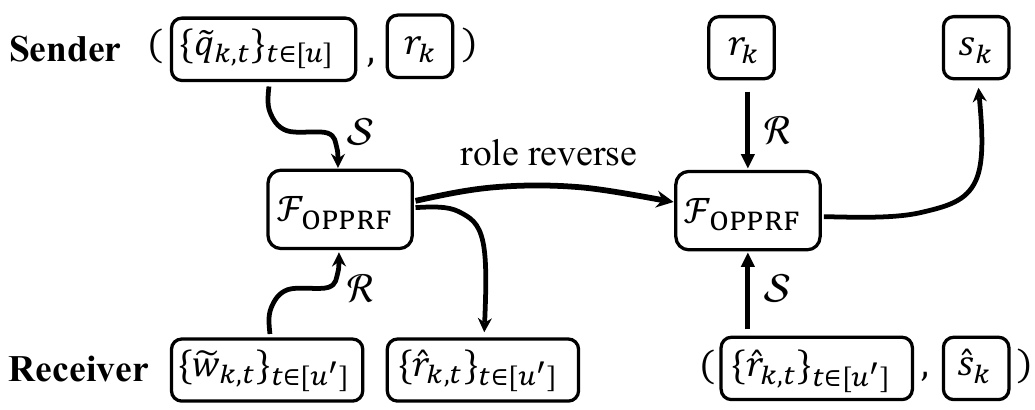}
	\caption{Fuzzy matching from role-reversed OPPRF.}
	\label{Fig: technical overview oprf-based fmatch}
\end{figure}

\textbf{Construction from oblivious transfer.}
Our second construction is based on OT and incurs $O(d\ell)$ communication and computational overhead, where $\ell$ denotes the bit length of the input points. When $\ell = O(\log \delta)$, this construction
achieves $O(d \log \delta)$ overhead as our OPRF-based approach. 
Notably, the OT-based construction achieves smaller concrete communication costs, as it eliminates the hidden dependence on the statistical security parameter~$\lambda$ inherently required in OPRF-based constructions \cite{van2025}.
In particular, our OT-based construction leverages recent OT protocols for computing the most significant bit (MSB) and AND~\cite{rathee2020cryptflow2, huang2022cheetah}. Although technically straightforward and largely overlooked in recent works, this approach achieves highly competitive performance, particularly for small input domains.
As demonstrated in our fuzzy PSI protocols for low-dimensional sets, we reduce fuzzy PSI to fuzzy mapping with a small input bit length, namely $\ell = O(\log \delta)$.
This significantly reduces communication overhead compared to OPRF-based approaches due to eliminating the factor $\lambda$, while incurring only $O(\log \log \delta)$ communication rounds.

% As demonstrated in our fuzzy PSI protocols, we reduce fuzzy PSI to fuzzy mapping with small input bit length, namely $\ell = O(\log \delta)$, which substantially reduces communication overhead compared to previous approaches and our OPRF-based protocol, while incurring modest communication rounds $O(\log \log \delta)$.

% \blue{We note that, unlike the constant-round OPRF-based fuzzy matching, our OT-based
% protocols require $O(\log \ell)$ communication rounds. Nevertheless, our
% experimental results show that even in bandwidth- or latency-constrained
% networks, the overall performance remains superior. This is because our
% protocols substantially reduce communication overhead compared to
% previous approaches, while the number of rounds remains modest (e.g.,
% $\log \log \delta$ in our fuzzy PSI constructions).}

% the round is still small, e.g., $\log \log \delta$ in our fuzzy PSI in low-dimensional sets, and our fuzzy matching protocols still have advantages over previous protocols due to significantly reduced communication complexity.
% our protocols greatly balance the round and communication complexity}

\subsection{New Framework for Fuzzy PSI}
\label{subsec: New Framework for Fuzzy PSI}
% \blue{To avoid quadratic overhead in set size throughout pairwise fuzzy matching, similar to previous works, we introduce additional mild assumptions on sets.
% More precisely, we follow the same assumptions for low- and high-dimensional sets as the state-of-the-art work BP25\footnote{We note that although xxx presents fuzzy PSI protocols with weaker assumptions on the receiver side, their evaluation utilizes the same assumption as our protocols}.}

% \begin{figure}[htbp]
%     \centering
%     % 第一张子图
%     \begin{subfigure}[b]{0.1\textwidth}
%         \centering
%         \includegraphics[width=\textwidth]{figure/Figure 3a.pdf}
%         \caption{Prior work (8 cells)}
%         \label{fig:sub1}
%     \end{subfigure}%
%     \begin{subfigure}[b]{0.1\textwidth}
%         \centering
%         \includegraphics[width=\textwidth]{figure/Figure 3b.pdf}
%         \caption{Our work (2 cells)}
%         \label{fig:sub2}
%     \end{subfigure}
    
%     \caption{XXX}
%     \label{fig:total_figure}
% \end{figure}

{Building upon our single-point fuzzy matching, we now construct multi-point fuzzy PSI protocols. In this section, we focus exclusively on low-dimensional sets, deferring our high-dimensional constructions to Section~\ref{Sec: Fuzzy PSI over High-dimensional Sets} and Appendix~\ref{Appendix: Details of Fuzzy PSI for High-dimensional Sets}. To circumvent the prohibitive $\mathcal{O}(mn)$ quadratic overhead inherent to naive pairwise comparisons, we adopt spatial hashing (see Section~\ref{subsec: Spatial Hashing}) to partition the global $d$-dimensional input space into grid cells of side length $2\delta$. For the sender's set, we assume each grid cell contains at most one point. This structural assumption exactly mirrors the setting in~\cite{piske2025distance} and is strictly weaker than the constraints imposed by the majority of prior fuzzy PSI schemes~\cite{van2025, gao2024efficient, dang2025ccs, zhang2025fast}.
Furthermore, for clarity of exposition, the following protocols further assume that each grid cell intersects at most one $\delta$-radius ball centered at a receiver point. This constraint aligns with the protocols in~\cite{van2025} and the empirical evaluation framework utilized in~\cite{piske2025distance}. We discuss the extension to arbitrary receiver sets, achieved by leveraging stateful spatial hashing~\cite{richardson2024fuzzy, piske2025distance}, in Section~\ref{subsec: Extension to Arbitrary Receiver's Set}.
}

% We further extend our single-point fuzzy matching to multi-point fuzzy PSI. 
% Here, we focus on low-dimensional sets and defer the protocols for high-dimensional sets to Section~\ref{Sec: Fuzzy PSI over High-dimensional Sets}.
% To avoid quadratic overhead through pairwise fuzzy matching, similar to most previous works, we employ spatial hashing (see Section \ref{subsec: Spatial Hashing}) to partition the entire $d$-dimensional input space into grid cells with side length $2\delta$, and assume that for the sender’s set, each grid cell contains at most one sender's point.
% This assumption is identical to that made in
% \cite{piske2025distance} and is strictly weaker than those adopted in most prior fuzzy PSI constructions \cite{van2025, gao2024efficient, dang2025ccs, zhang2025fast}.
% For ease of exposition, we additionally assume in the following protocols that, for the receiver’s set, each grid cell intersects with at most one
% $\delta$-radius ball centered at a receiver point. This assumption is consistent
% with that of~\cite{van2025}, and the experimental evaluation in
% \cite{piske2025distance} also adopts the same setting for efficiency. 
% {We defer the extension to arbitrary receiver sets using stateful spatial
% hashing~\cite{richardson2024fuzzy, piske2025distance} to
% Section~\ref{subsec: Extension to Arbitrary Receiver's Set}.}

{\textbf{New framework via dual-layer hashing.} We present a new fuzzy PSI framework for generalized $L_{p \in [1, \infty]}$ distance, driven by a \textit{dual-layer hashing} strategy that combines spatial hashing with Cuckoo hashing. The core innovation of our approach is an interactive fuzzy mapping phase, which filters the candidate space down to approximately $m$ pairs of points for the final fuzzy matching. This design substantially reduces the overhead of prior OPRF-based constructions~\cite{piske2025distance, van2025}, which implicitly require $2^d n$ fuzzy matching invocations.} 

{To understand this reduction, we first recall existing fuzzy mapping approaches based on spatial hashing. The sender applies a $\mathsf{cellhash}$ function to assign each of its points $\vecq$ to a \emph{unique} cell $\mathcal{C}_{\vecq}$. Conversely, the receiver applies a $\mathsf{ballhash}$ function to associate each of its points $\vecw$ with $2^d$ distinct cells, $\{\mathcal{C}_{\vecw, h}\}_{h \in [2^d]}$, representing all cells intersected by the $\delta$-radius ball centered at $\vecw$. To find matches, the receiver only needs to compare $\vecw$ against sender points located in these $2^d$ cells, since any sender point $\vecq$ satisfying $\mathsf{dist}(\vecq, \vecw) \le \delta$ must reside in one of them. However, when scaled across all $n$ receiver points, this paradigm forces prior protocols to execute approximately $\mathcal{O}(2^d n)$ heavy fuzzy matching operations.}

\begin{figure}[!t]
    \centering
    % --- 子图 (a) Prior work ---
    \begin{subfigure}[b]{0.23\textwidth}
        \centering
        \resizebox{0.8\textwidth}{!}{
        \begin{tikzpicture}[scale = 1]
            % 绘制内部浅灰色网格线
            \draw[step=1cm, black!50] (0,0) grid (4,4);

            % 2. 绘制最外层黑色边框
            \draw[thick, black] (0,0) rectangle (4,4);
            
            % 绘制斜线背景区域
            \fill[pattern=north east lines, pattern color=gray!50] (0,0) rectangle (2,2);
            \fill[pattern=north east lines, pattern color=gray!50] (2,2) rectangle (4,4);

            % 绘制半透明蓝色方块 w1
            \filldraw[fill=blue!30, fill opacity=0.5, draw=blue, thick] (0.2, 0.2) rectangle (1.2, 1.2);
            \fill (0.7, 0.7) circle (1.5pt) node[right, opacity=1] {$\vecw_1$};
            
            % 【新增】在 w1 左上角增加 q2 点
            \fill (0.3, 1.05) circle (1.2pt) node[right, opacity=1] {$\vecq_2$};
            
            % 绘制半透明蓝色方块 w2
            \filldraw[fill=blue!30, fill opacity=0.5, draw=blue, thick] (2.3, 2.3) rectangle (3.3, 3.3);
            \fill (2.8, 2.8) circle (1.5pt) node[right, opacity=1] {$\vecw_2$};
            
            % 【修改】q1 处增加圆点
            \fill (2.7, 1.3) circle (1.5pt) node[right] {$\vecq_1$};
        \end{tikzpicture}
        }
        % \caption{Prior work}
    \end{subfigure}%
    \begin{subfigure}[b]{0.23\textwidth}
        \centering
        \resizebox{0.8\textwidth}{!}{
        \begin{tikzpicture}[scale = 1]
            % 绘制内部浅灰色网格线
            \draw[step=1cm, black!50] (0,0) grid (4,4);

            % 2. 绘制最外层黑色边框
            \draw[thick, black] (0,0) rectangle (4,4);
            
            % 绘制斜线背景区域
            \fill[pattern=north east lines, pattern color=gray!50] (2,1) rectangle (3,2);
            \fill[pattern=north east lines, pattern color=gray!50] (0,1) rectangle (1,2);
            
            % 绘制半透明蓝色方块 w1
            \filldraw[fill=blue!30, fill opacity=0.5, draw=blue, thick] (0.2, 0.2) rectangle (1.2, 1.2);
            \fill (0.7, 0.7) circle (1.5pt) node[right, opacity=1] {$\vecw_1$};
            
            % 【新增】在 w1 左上角增加 q2 点
            \fill (0.3, 1.05) circle (1.2pt) node[right, opacity=1] {$\vecq_2$};
            
            % 绘制半透明蓝色方块 w2
            \filldraw[fill=blue!30, fill opacity=0.5, draw=blue, thick] (2.3, 2.3) rectangle (3.3, 3.3);
            \fill (2.8, 2.8) circle (1.5pt) node[right, opacity=1] {$\vecw_2$};
            
            % 【修改】q1 处增加圆点
            \fill (2.7, 1.3) circle (1.5pt) node[right] {$\vecq_1$};
        \end{tikzpicture}
        }
        % \caption{Our work}
    \end{subfigure}
    
    \caption{Comparison of fuzzy matching invocations of prior works (left) and ours (right) represented by gray shaded cells.}
    \label{Fig: Comparison of fuzzy matching invocations}
\end{figure}

{Our key observation is that the vast majority of these $2^d n$ receiver search cells are empty on the sender's side, since there are at most $m$ cells in which the sender's points lie.
Consequently, the parties only need to perform fuzzy matching on the $m$ specific cells that actually contain the sender's input points. This insight reduces the required number of fuzzy matching invocations from $\mathcal{O}(2^d n)$ to just $\mathcal{O}(m)$ (as illustrated in Figure~\ref{Fig: Comparison of fuzzy matching invocations}). 
To realize this efficiently, we draw inspiration from circuit-based PSI~\cite{pinkas2019efficient, rindal2021vole, hao2024unbalanced} and employ Cuckoo hashing (detailed in Section~\ref{Cuckoo Hashing}) over spatial hashing. Concretely, the sender uses Cuckoo hashing to map its $m$ cell-point tuples $(\mathcal{C}_{\vecq}, \vecq)$ into a table of size $\mathcal{O}(m)$, while the receiver applies simple hashing to its $2^d n$ tuples $(\mathcal{C}_{\vecw,h}, \vecw)$.
Because both parties use identical hash functions, candidate pairs sharing the same cell identifier will collide in the same bin, isolating the valid candidates for fuzzy matching.}

{To further minimize the number of fuzzy matching invocations per bin, our protocol integrates an OPPRF.
Specifically, for each bin $i$, the receiver programs the OPPRF\footnote{
{
Our protocol actually performs a single OPPRF invocation over all bins jointly, instead of an OPPRF instance per bin. This avoids the receiver from padding each bin with dummy items to hide the number of items per bin. 
Similar techniques have been adopted in circuit-based PSI~\cite{pinkas2019efficient, rindal2021vole, raghuraman2022blazing}.}
}
such that, for every tuple $(\mathcal{C}_{\vecw,h}, \vecw)$ residing in that bin, the function maps the cell identifier $\mathcal{C}_{\vecw,h}$ to $\vecw - \vecr_i$, utilizing a single random mask $\vecr_i$ for the entire bin. The sender then queries the OPPRF using its input cell $\mathcal{C}_{\vecq}$. If a matching cell exists in bin $i$, the sender successfully learns $\vecw - \vecr_i$, while the receiver retains the mask $\vecr_i$. Together, these two values constitute secret shares of the receiver's point $\vecw$. As a result, each bin $i$ yields at most one candidate pair: the sender holds its plaintext point $\vecq$, and the receiver's point $\vecw$ is secret-shared between them. Otherwise (no matched cells), according to the functionality of OPPRF, secret shares of a random value will be learned. The parties can then process these pairs independently across all bins using a secret-shared variant of our fuzzy matching protocol.}

% , resulting in a heavier communication overhead of $\mathcal{O}(n 2^d \cdot d\log \delta \cdot \lambda)$.

% We emphasize that while our protocol still technically processes $\mathcal{O}(2^d n)$ items, we effectively decouple this exponential dimensional blowup from the expensive fuzzy matching phase. By shifting the $\mathcal{O}(2^d n)$ workload entirely into the lightweight, batched OPPRF, we achieve a dramatic performance gain. This is because the cost of a fuzzy matching invocation scales multiplicatively by at least $\mathcal{O}(\log \delta)$, whereas the overhead of the OPPRF is strictly independent of the distance threshold $\delta$.

% It is worth noting that while our protocols still have the $O(2^d n)$ overhead, we actually move this overhead from fuzzy matching to significantly more efficient OPPRF.
% The reason is that fuzzy matching has a multiplicative factor at least $O(\log \delta)$ in the overhead, but this overhead of OPPRF is independent of the threshold $\delta$.

{\textbf{OT-based fuzzy matching with reduced input domain.} 
We instantiate the remaining fuzzy matching using our OT-based construction over a reduced input domain. Building on observations from prior work~\cite{garimella2022structure, garimella2024computation, bui2025new}, spatial hashing effectively partitions the large input space into small cells. Specifically, for any sender point $\vecq$, any potentially $\delta$-close receiver point must reside within the neighboring cells of $\vecq$. Geometrically, this search space is bounded by a $d$-dimensional hypercube of side length $6\delta$ (as illustrated in Figure~\ref{Fig: Illustration of input domain reduction optimization}). By restricting operations to this reduced domain, as illustrated in Section~\ref{subsec: Asymptotically Efficient Fuzzy Matching}, our OT-based fuzzy matching achieves substantially better concrete performance than the OPRF-based alternative.}

{However, this domain reduction introduces a subtle risk of false positives. Consider a scenario where a cell contains a sender point $\vecq$, but none of the receiver's $\delta$-radius balls intersect this cell. As illustrated in the above dual-layer fuzzy mapping, the protocol might yield secret shares of a uniformly random point rather than a valid receiver point $\vecw$. Because the input domain is now artificially small, the probability that this random point accidentally falls within a distance of $\delta$ from $\vecq$ becomes non-negligible. This not only triggers false positives but also risks leaking information about the sender's private input.
To resolve this vulnerability, we integrate an additional consistency check. This mechanism explicitly verifies that the cell containing the sender's point $\vecq$ is identical to the cell intersected by the receiver's point $\vecw$. This structural guarantee ensures the receiver only learns $\vecq$ if the fuzzy matching succeeds \emph{and} both points legitimately map to the same cell. We defer the full technical details to Section~\ref{sec: Fuzzy PSI over Low-dimension Sets}.
}

{In summary, the concrete efficiency gains of our protocol over prior works stem from a synergistic combination of our dual-layer fuzzy PSI framework and our reduced-domain OT-based fuzzy matching.}
{
We emphasize that while our fuzzy PSI protocol still relies on OPPRF to process $\mathcal{O}(2^d n)$ items, it achieves substantially lower communication and computation overheads compared to prior OPPRF-based constructions~\cite{van2025}. 
% Generally, in OPPRF-based constructions, computational overhead is dominated by the number of queries, whereas communication bandwidth is dictated by the total bit-length of the programmed payloads. 
Our construction requires only $\mathcal{O}(m)$ OPPRF queries and transmits programmed payloads bounded by $\mathcal{O}(2^d n (d\log \delta + \lambda))$ bits due to reduced input domain. In contrast, prior works such as~\cite{van2025} suffer from a multiplicative factor $O(d \log \delta \cdot \lambda)$ in the $O(2^d n)$ invocations of fuzzy matching.
}

\section{Fuzzy Matching}

\label{sec: Fuzzy Matching}

\iffalse
\begin{figure}[!t]
\begin{nffunc}{\Func[ssFMatch]}

\noindent \textbf{Parameters:} Sender \SSS and receiver \RRR. Distance metric $\mathsf{dist}(\cdot,\cdot)$ and threshold $\delta$.

\noindent \textbf{Protocol:}

\begin{enumerate}

    \item Wait for input $\vecq \in \ZZ_{2^\ell}^{d}$ from \SSS and $\vecw \in \ZZ_{2^\ell}^{d}$ from \RRR.

    \item Output $[z]^1$ to \SSS and \RRR, where $z := 1\{\mathsf{dist}(\vecq, \vecw) \le \delta\}$.
    
\end{enumerate}

\end{nffunc}
% \vspace{0.5em}
\caption{Functionality of secret-shared fuzzy match.}
\label{Func:ssFMatch}
\end{figure}
\fi

% We propose two protocols for fuzzy matching. 

\begin{figure}[!t]
\begin{nfprot}{$\Pi_\mathsf{FMatch\text{-}oprf}^{L_\infty}$}

\noindent \textbf{Parameter:} Sender \SSS and receiver \RRR. Threshold $\delta$. $\ell_1 := \lambda + 2 \log(u^\prime+1) + \log d$ and $\ell_2 := \lambda$.
% $u, u^\prime= O(\log \delta)$.

\noindent \textbf{Input:} \SSS inputs $\vecq = (q_1, \ldots, q_d) \in \ZZ_{2^\ell}^{d}$ and \RRR inputs $\vecw = (w_1, \ldots, w_d) \in \ZZ_{2^\ell}^{d}$.

\noindent \textbf{Protocol:}

\begin{enumerate}
    
    \item For $k \in [d]$, \SSS computes $\{\tilde{q}_{k,t}\}_{t\in[u]} := \mathsf{PxTrie}(q_{k}-\delta, q_{k}+\delta)$, and \RRR computes $\{\tilde{w}_{k,t}\}_{t\in[u^\prime]} := \mathsf{PxPath}(w_{k}, \delta)$.
    
    \item For $k \in [d]$, \SSS samples $r_{k} \from \FF_{2^{\ell_1}}$. \SSS and \RRR invoke functionality \Func[OPPRF], where \SSS inputs $\{(k\|\tilde{q}_{k, t}, r_{k})\}_{k \in [d], t \in [u]}$ and learns $\OOO^F$, while \RRR inputs $\{k\|\tilde{w}_{k, t}\}_{k \in [d], t \in [u^\prime]}$ and learns $\{\hat{r}_{k, t} = F(k\|\tilde{w}_{k, t})\}_{k \in [d], t \in [u^\prime]}$.

    \item For $k \in [d]$, \RRR samples $\hat{s}_{k} \from \FF_{2^{\ell_2}}$.
    \SSS and \RRR invoke functionality \Func[OPPRF], where \RRR as the sender inputs $\{(k\|\hat{r}_{k, t}, \hat{s}_{k})\}_{k \in [d], t \in [u^\prime]}$  and learns $\OOO^{F^\prime}$, while \SSS as the receiver inputs $\{k\|r_{k}\}_{k \in [d]}$ and learns $\{s_{k} = F^\prime(k\|r_{k})\}_{k \in [d]}$.

    \item \SSS computes $s := \sum_{k \in [d]} s_{k} \in \FF_{2^{\ell_2}}$, while \RRR computes $\hat{s} := \sum_{k \in [d]} \hat{s}_{k} \in \FF_{2^{\ell_2}}$.

    \item \SSS and \RRR invoke functionality \Func[PEQT], where \SSS inputs ${s}$, while \RRR inputs $\hat{s}$ and learns $z \in \FF_2$. 

\end{enumerate}

\end{nfprot}
\caption{Fuzzy matching for $L_\infty$ distance from OPPRF}
\label{Prot: fmatch-infty-oprf}
\end{figure}

\subsection{Fuzzy Matching for $L_\infty$ Distance from Oblivious PRF}
\label{sec: Fuzzy Matching for L-inf Distance from Oblivious PRF}

Our first construction is based on OPRF. As the design intuition has already
been presented in Section~\ref{subsec: Asymptotically Efficient Fuzzy Matching}, we describe the detailed protocol in
Figure~\ref{Prot: fmatch-infty-oprf}. The key technical contribution is two role-reversed invocations of OPPRF, which enable communication overhead $O(\lambda d \log \delta)$ and computational overhead $O(d \log \delta)$.
We note that the use of OPPRF in our protocol can be further optimized by invoking
$d$ independent OPPRFs across different dimensions, each involving fewer
programmed points and queries. This optimization is formalized as
\emph{multi-batch OPPRF} in~\cite{van2025}, where batches of points are evaluated
under multiple independent random functions.

Extending the above protocol to support $L_p$ distance with the same asymptotic overhead remains challenging and is left as future work, since the use of prefix representations obscures precise distance information. 
We next state the security of the protocol and defer the detailed correctness
and security analysis to Appendix~\ref{proof: fuzzy-match-oprf}. The key point is to ensure that $\{\hat{r}_{k, t}\}_{t \in [u^\prime]}$ are different since they are the programmed points in the second OPPRF invocation, and that $r_k \notin \{\hat{r}_{k, t}\}_{t \in [u^\prime]}$ if $w_k \notin [q_k-\delta, q_k+\delta]$ to avoid false positives.

\begin{theorem}
\label{thm: fuzzy-match-oprf}
    The protocol $\Pi_\mathsf{FMatch\text{-}oprf}^{L_\infty}$ in Figure \ref{Prot: fmatch-infty-oprf} realizes the functionality \Func[FMatch] for $L_\infty$ distance in Figure \ref{Func:FMatch} against semi-honest adversaries in the $(\Func[OPPRF], \Func[PEQT])$-hybrid model.
\end{theorem}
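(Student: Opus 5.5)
We split the argument into correctness (the receiver's output equals $z = 1\{\mathsf{dist}_\infty(\vecq,\vecw)\le\delta\}$ except with negligible probability) and simulation of each party's view in the $(\Func[OPPRF],\Func[PEQT])$-hybrid model. For correctness, fix a dimension $k$. If $w_k\in[q_k-\delta,q_k+\delta]$, Theorem~\ref{theorem: prefix} gives a unique $t^*\in[u']$ with $\tilde w_{k,t^*}\in\{\tilde q_{k,t}\}_{t\in[u]}$. The first OPPRF is programmed with $k\|\tilde q_{k,t}\mapsto r_k$, so $\hat r_{k,t^*}=r_k$. The second OPPRF maps $k\|\hat r_{k,t^*}\mapsto \hat s_k$, so the sender's query $k\|r_k$ returns $s_k=\hat s_k$. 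If $w_k\notin[q_k-\delta,q_k+\delta]$, then no $\tilde w_{k,t}$ is programmed, so every $\hat r_{k,t}$ is a fresh uniform value in $\FF_{2^{\ell_1}}$. Consequently $s_k$ is uniform in $\FF_{2^{\ell_2}}$ and independent of $\hat s_k$, except on a bad event. It follows that if all coordinates are close then $s=\hat s$ with certainty, and otherwise some $s_k-\hat s_k$ is uniform, so $\Pr[s=\hat s]\le 2^{-\ell_2}=2^{-\lambda}$.

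The main obstacle is bounding the bad events, which is exactly where $\ell_1=\lambda+2\log(u'+1)+\log d$ is needed. First, the second OPPRF must be well defined: the $u'$ programmed keys $\hat r_{k,1},\dots,\hat r_{k,u'}$ within each dimension must be distinct. A collision can only involve at least one unprogrammed (random) output, so a union bound over the at most $\binom{u'+1}{2}$ pairs per dimension and the $d$ dimensions gives probability at most $d(u'+1)^2 2^{-\ell_1}\le 2^{-\lambda}$. Second, we need $r_k\notin\{\hat r_{k,t}\}_t$ when $w_k$ is far from $q_k$. Here each $\hat r_{k,t}$ is uniform and independent of $r_k$, so this bad event falls under the same union bound; this is the reason the count uses $u'+1$ values, namely $r_k$ together with the $\hat r_{k,t}$. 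Conditioned on neither event occurring, the case analysis above is exact.

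For security against a corrupt receiver, the simulator gets $\vecw$ and $z$. It computes the prefix paths and answers the first OPPRF call with fresh uniform $\hat r_{k,t}$, samples the $\hat s_k$ honestly, and outputs $z$ as the $\Func[PEQT]$ result. In the real execution the $\hat r_{k,t}$ are outputs of a random function on distinct inputs: a programmed value $r_k$ appears at most once per dimension by uniqueness, and $r_k$ is itself uniform. They are therefore identically distributed to the simulated values, and the PEQT output matches $z$ up to the negligible correctness error. For a corrupt sender, the simulator gets only $\vecq$. It runs the prefix tries, samples the $r_k$, and answers the second OPPRF query with uniform $s_k$; the PEQT call gives the sender no output. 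In the real execution, each $s_k$ is either the uniform $\hat s_k$ or a random-function output on an unprogrammed point, so it is uniform and independent across $k$ either way. The joint distribution with the receiver's output is then handled by a standard hybrid, using correctness.
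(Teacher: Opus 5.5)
Your proposal is correct and follows essentially the same route as the paper's proof. It uses the same per-dimension case split via Theorem~\ref{theorem: prefix}, the same union bound of $(u'+1)^2/2^{\ell_1}$ per dimension over all $d$ dimensions for the two bad events (duplicate keys in the second OPPRF, and an accidental hit on $r_k$ in a far dimension), the same $2^{-\ell_2}$ bound on an accidental $s=\hat s$, and the same simulators (uniform $\hat r_{k,t}$ plus the ideal output $z$ for a corrupt receiver, uniform $s_k$ for a corrupt sender); the only thing you leave implicit is that each simulator must also hand the corrupted party its OPPRF-sender output ($\mathcal{O}^{F}$ for the sender, $\mathcal{O}^{F'}$ for the receiver), which it does by sampling a random function programmed on that party's own points, exactly as the paper does.
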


\subsection{Fuzzy Matching for $L_\infty$ Distance from Oblivious Transfer}
\label{subsec: Fuzzy Matching for L-inf Distance from Oblivious Transfer}

\begin{figure}[!t]
\begin{nfprot}{$\Pi_\mathsf{FMatch\text{-}ot}^{L_\infty}$}

\noindent \textbf{Parameter:} Sender \SSS and receiver \RRR. Threshold $\delta$.
% , $\ell_2:= \lceil\log (d \cdot (2^{\ell_1}-1)+1) \rceil$

\noindent \textbf{Input:} \SSS inputs $\vecq = (q_1, \ldots, q_d) \in \ZZ_{2^\ell}^{d}$ and \RRR inputs $\vecw = (w_1, \ldots, w_d) \in \ZZ_{2^\ell}^{d}$.

\noindent \textbf{Protocol:}

\begin{enumerate}
    \item For $k \in [d]$, \SSS and \RRR compute $[t_k]^{\ell} := q_k - w_k$.
    
    \item For $k \in [d]$, \SSS and \RRR invoke functionality $\Func[MSB]$ with input $(\delta - [t_k]^{\ell})$ and learn $[a_k]^{1}$.

    \item For $k \in [d]$, \SSS and \RRR invoke functionality $\Func[MSB]$ with input $(\delta + [t_k]^{\ell})$ and learn $[b_{k}]^{1}$.

    % \item For $k \in [2d-1]$, \SSS and \RRR invoke functionality \Func[AND] with input $([t_k]^{1}, [t_{k+1}]^{1})$ and learn $[t_{k+1}]^{1}$, and finally return $t_{2d}$ to \RRR.

    \item \SSS and \RRR invoke functionality \Func[AND] with input $(1\xor[a_1]^{1}, \ldots, 1\xor[a_d]^{1}, 1\xor[b_1]^{1}, \ldots, 1\xor[b_d]^{1})$ and learn $[z]^1$.

    \item \SSS sends $[z]^1_\SSS$ to \RRR, who constructs $z := [z]^1_\SSS + [z]^1_\RRR \in \ZZ_{2}$.
\end{enumerate}

\end{nfprot}
\caption{Fuzzy matching for $L_\infty$ distance from OT}
\label{Prot: fmatch-infty-2pc}
\end{figure}

Our second fuzzy matching construction is built on OT. To realize fuzzy matching, the goal is to securely evaluate the function $z = 1\{\mathsf{dist}_\infty(\vecq, \vecw) \le \delta\}$, where $\mathsf{dist}_\infty(\vecq, \vecw) = \max_{k \in [d]} |q_k - w_k|$.
Rather than naively computing this function via absolute-value (ABS)
and maximum (MAX) operations followed by a comparison with the threshold
$\delta$, we optimize the protocol by directly checking whether
$|q_k - w_k| \le \delta$ for every dimension $k \in [d]$, and then aggregating
the resulting comparison bits using AND operations. This approach reduces the overhead: the naive method requires $2d$ MSB operations and $2d$ Multiplexer operations to realize the ABS and MAX functionalities, whereas our optimized protocol only incurs $2d$ MSB operations together with lightweight AND operations.

% . We present the fuzzy matching protocol for $L_\infty$ distance in Figure \ref{Prot: fmatch-infty-2pc}.
Our protocol for $L_\infty$ distance is presented in Figure \ref{Prot: fmatch-infty-2pc}. For simplicity, we use $[x]^\ell$ to denote the parties' secret shares $[x]_\SSS^\ell$ and $[x]_\RRR^\ell$ of $x \in \ZZ_{2^\ell}$.
Specifically, our protocol relies on efficient and customized OT-based protocols for computing MSB and AND operations \cite{rathee2020cryptflow2, huang2022cheetah, rathee2021sirnn}.
We describe the
functionalities required for OT-based fuzzy matching and summarize the
corresponding communication overhead in
Appendix~\ref{appendix: Additional Functionalities for OT-based Fuzzy Matching}. Overall, our protocol incurs $O(d \ell)$ computational and communication overhead.

We extend OT-based fuzzy matching to arbitrary $L_p$ distance in Appendix \ref{appendix: OT-based Fuzzy Matching for $L_p$ Distance}.
Besides, OT-based fuzzy matching protocols are readily extended to realize the secret-shared fuzzy matching functionality in Figure~\ref{Func:FMatch} by omitting the last reconstruction step.
% and provide a comparison between our OPRF- and OT-based constructions in Appendix \ref{appendix: Comparisons between OPRF-based and OT-based Fuzzy Matching}.
We next state the security of the protocol and defer the detailed correctness
and security analysis to Appendix~\ref{proof: fuzzy-match-ot}. 

\begin{theorem}
\label{thm: fuzzy-match-ot}
    The protocol $\Pi_\mathsf{FMatch\text{-}ot}^{L_\infty}$ in Figure \ref{Prot: fmatch-infty-2pc} realizes the functionality \Func[FMatch] for $L_\infty$ distance in Figure \ref{Func:FMatch} against semi-honest adversaries in the $( \Func[MSB], \Func[AND])$-hybrid model.
\end{theorem}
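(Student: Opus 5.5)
The proof has two parts: correctness, which reduces to an arithmetic fact about two's-complement MSB tests, and semi-honest security, which is a standard composition argument in the $(\Func[MSB], \Func[AND])$-hybrid model.

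\textbf{Correctness.} In step 1, \SSS sets its share of $t_k$ to $q_k$ and \RRR sets its share to $-w_k$ in $\ZZ_{2^\ell}$; no interaction is needed. Each party can then locally form shares of $\delta - t_k$ and $\delta + t_k$, since one party adds the public constant $\delta$ and the other negates or keeps its share.

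Interpret the MSB of a value in $\ZZ_{2^\ell}$ as its sign bit under two's complement. I will use the (implicit) requirement that the true differences satisfy $|q_k - w_k| + \delta < 2^{\ell-1}$. The paper ensures this by choosing $\ell$ large enough, or by domain reduction in the PSI setting; I would state it explicitly as a parameter assumption. Under this assumption:
\begin{itemize}
\item $a_k = \mathsf{MSB}(\delta - t_k) = 1$ if and only if $q_k - w_k > \delta$;
\item $b_k = \mathsf{MSB}(\delta + t_k) = 1$ if and only if $q_k - w_k < -\delta$.
\end{itemize}
Hence $(1\xor a_k)\wedge(1\xor b_k) = 1\{|q_k - w_k| \le \delta\}$. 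The AND over all $2d$ bits therefore equals $1\{\max_k |q_k - w_k| \le \delta\} = 1\{\mathsf{dist}_\infty(\vecq,\vecw) \le \delta\}$, and step 5 reconstructs $z$ correctly at \RRR.

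\textbf{Security.} I would build a simulator for each corrupted party.

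\emph{Corrupt \SSS.} The simulator gives \SSS, as its outputs from each $\Func[MSB]$ call and from the $\Func[AND]$ call, independent uniform bits. \SSS receives no other messages. In the real execution, each functionality samples its output shares uniformly and independently of everything else, conditioned only on the XOR with \RRR's share. \SSS's view therefore consists of its input together with independent uniform bits in both worlds, so the two distributions are identical.

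\emph{Corrupt \RRR.} The simulator receives $z$ from $\Func[FMatch]$ and samples \RRR's shares $[a_k]^1, [b_k]^1, [z]^1_\RRR$ uniformly at random. It then simulates the final message from \SSS as $[z]^1_\SSS := z \xor [z]^1_\RRR$. In the real execution, \RRR's hybrid outputs are uniform, and \SSS's message is exactly the share that reconstructs the true $z$. By correctness the two views are identically distributed; any failure would come only from the parameter bound above, which I assume holds with certainty.

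\textbf{Main obstacle.} The only delicate step is the MSB sign argument: wrap-around in $\ZZ_{2^\ell}$ is what could break correctness. I would handle it by stating the range assumption $|q_k - w_k| < 2^{\ell-1} - \delta$ explicitly and checking the boundary cases $q_k - w_k = \pm\delta$ directly.
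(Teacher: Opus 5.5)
Your proposal is correct and follows essentially the same route as the paper's proof. Correctness comes from the two's-complement sign of $\delta \mp (q_k - w_k)$ under a no-wrap-around condition on $\ell$; the paper phrases this as inputs and $\delta$ bounded by some $z$ with $\ell \ge \log z + 2$, a coarser form of your $|q_k - w_k| + \delta < 2^{\ell-1}$. Security uses simulators that hand out uniform shares for every $\Func[MSB]$ and $\Func[AND]$ output, with the corrupt-receiver simulator programming the sender's final message as $[z]^1_{\SSS} := z \xor [z]^1_{\RRR}$, exactly as you describe.
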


\section{Fuzzy PSI over Low-dimensional Sets}
\label{sec: Fuzzy PSI over Low-dimension Sets}

\begin{figure}[!t]
\begin{nfprot}{$\Pi_\mathsf{FPSI\text{-}low}^{L_{p\in[1, \infty]}}$}

\noindent \textbf{Parameter:} Sender \SSS and receiver \RRR with input sizes $m, n$, respectively. Table size $m^\prime$ of Cuckoo hash with three hash functions $H_1, H_2, H_3: \{0, 1\}^* \to [m^\prime]$.
Threshold $\delta$. $\ell_1:= \lambda + \log m^\prime$.
{$\ell_2 := \lceil \log (6\delta)\rceil + 2$ for $L_\infty$ distance; $\ell_2 := \lceil \log d(6\delta) \rceil + 1$ for $L_1$ distance; $\ell_2 := \lceil \log (d(3\delta)^2) \rceil$ for $L_2$ distance}. 
% $\ell_2 := \lceil \log (6\delta)\rceil + 2$ for $L_\infty$ distance and $\ell_2 := \lceil \log d (6\delta)^p \rceil + 1$ for $L_{p}$ distance.

% {$\ell_2 := \lceil \log (6\delta)\rceil + 2$ for $L_\infty$ distance; $\ell_2 := \lceil \log (6\delta d) \rceil + 1$ for $L_1$ distance; $\ell_2 := \lceil \log (d(3\delta)^2) \rceil$ for $L_2$ distance}. 

\noindent \textbf{Input:} \SSS inputs $Q = \{\vecq_j\}_{j \in [m]} \in \ZZ_{2^\ell}^{d\times m}$ and \RRR inputs $W = \{\vecw_i\}_{i \in [n]} \in \ZZ_{2^\ell}^{d\times n}$.

% \noindent \textbf{Communication/Computation:} (1) \SSS $\to$ \RRR: $dm\delta$ (DID) + $m$. (2) \RRR $\to$ \SSS: $dn$ (DID) + $n$.

\noindent \textbf{Protocol:}

\begin{enumerate}

        \item For $j \in [m]$, \SSS computes $\CCC_{\vecq_j} := \mathsf{cellhash}_{2\delta}(\vecq_j)$. For $i \in [n]$, \RRR computes $\{\CCC_{\vecw_i,h}\}_{h \in [2^d]} := \mathsf{ballhash}_{2\delta}(\vecw_i)$.

    \item \SSS maps $\{(\CCC_{\vecq_j}, \vecq_j)\}_{j \in [m]}$ into a Cuckoo hash table $T_q$ of $m^\prime$ bins such that for any $(\CCC_{\vecq_j}, \vecq_j)$, there exists a $\tau \in [3]$ satisfying $T_q[H_\tau(\CCC_{\vecq_j})] = (\CCC_{\vecq_j}\|\tau, \vecq_j)$. 
    % Define a function $\pi: [t] \rightarrow [m]$ that maps an element index of $X$ to the position in $T_X$, i.e., $\pi(i) = h_j(x_i)$ such that $T_X[h_j(x_i)] = x_i\|j$.

    \item \RRR maps $\{(\CCC_{\vecw_i, h}, \vecw_i)\}_{i \in [n], h \in [2^d]}$ into a simple hash table $T_w$ of $m^\prime$ bins, such that for any $(\CCC_{\vecw_i, h}, \vecw_i)$ and all $\tau \in [3]$, it holds that $(\CCC_{\vecw_i, h}\|\tau, \vecw_i) \in T_w[H_\tau(\CCC_{\vecw_i, h})]$.

    \item For $j \in [m^\prime]$, \RRR samples $s_{j} \from \FF_{2^{\ell_1}}$, $\vecr_{j} \from \ZZ_{2^{\ell_2}}^d$, and defines $L_j := \{\CCC_\vecw\|\tau, s_j \|(\mathsf{shift}_{{\lceil \log6\delta \rceil}}(\vecw, \CCC_\vecw - 2\delta)-\vecr_j \bmod 2^{\ell_2})\}_{(\CCC_\vecw\|\tau, \vecw) \in T_w[j]}$.
    % for all $(\CCC_\vecw\|\tau, \vecw) \in T_w[j]$. 
    % \meng{L1: u+2. L2: 2u+2}

    % \item \SSS defines $Q := \{\CCC_q\|\tau\}_{(\CCC_q\|\tau, q) \in T_q}$.

    \item \SSS and \RRR invoke functionality \Func[OPPRF], where \RRR as the sender inputs $\{L_j\}_{j \in [m^\prime]}$ and learns $\OOO^F$, while \SSS as the receiver inputs $\{\CCC_\vecq\|\tau\}_{(\CCC_\vecq\|\tau, \vecq) \in T_q}$ and learns $\{{\hat{s}_j} \|\hat{\vecr}_j\}_{j \in [m^\prime]}$. 
    % \meng{okvs bitlen: $\ell+du$}

    \item For $j \in [m^\prime]$, \SSS and \RRR invoke functionality $\FFF_\mathsf{ssFMatch}$, where \SSS inputs $\mathsf{shift}_{{\lceil \log6\delta \rceil}}(\vecq, \CCC_\vecq - 2\delta) - \hat{\vecr}_j \bmod 2^{\ell_2}$ such that $T_q[j] = (\CCC_{\vecq}\|\tau, \vecq)$ and \RRR inputs $\vecr_j$, and \SSS and \RRR learn $\nu^\SSS_j \in \FF_2$ and $\nu^\RRR_j \in \FF_2$, respectively. 

    \item For $j \in [m^\prime]$, \SSS and \RRR invoke functionality $\FFF_\mathsf{ssPEQT}$, where \SSS inputs $\hat{s}_j$ and \RRR inputs $s_j$, and learn $\mu^\SSS_j \in \FF_2$ and $\mu^\RRR_j \in \FF_2$, respectively.

    \item For $j \in [m^\prime]$, \SSS and \RRR input $(\nu^\SSS_j, \mu^\SSS_j)$ and $(\nu^\RRR_j, \mu^\RRR_j)$ to functionality $\FFF_\mathsf{AND}$ and learn $b^\SSS_j$ and $b^\RRR_j$, respectively. 
    
    % $\FFF_\mathsf{CNorm}^{\delta^p, \blue{2^u}}$, where \SSS inputs $(\blue{{g_j}}, \mathsf{shift}_{2^u}(\CCC_q, q), f_j)$ such that $T_q[j] = (\CCC_{q}\|\tau, q)$ and \RRR inputs $(\blue{{v_j}}, r_j)$. \SSS and \RRR learn $b^\SSS_j$ and $b^\RRR_j$, respectively, \blue{such that $b_j = 1\{ v_j = g_j \And \mathsf{dist}(q, r_j + f_j) \leq \delta\}$}. \meng{Secret-shared interval}

    \item \SSS and \RRR invoke functionality \Func[Perm], where \SSS inputs $\{b^\SSS_j\}_{j \in [m^\prime]}$ and \RRR inputs $\{b^\RRR_j\}_{j \in [m^\prime]}$. \SSS learns $\pi$ and \RRR learns $\{\hat{b}_j\}_{j \in [m^\prime]}$ such that $b^\SSS_{\pi(j)} \oplus b^\RRR_{\pi(j)} = \hat{b}_j$ for each $j \in [m^\prime]$. 
    
    \item For $j \in [m^\prime]$, \SSS and \RRR invoke functionality \Func[OT], where \SSS inputs $(\bot, \vecq)$, where $(\CCC_{\vecq}||\tau, \vecq) = T_q[\pi(j)]$, and \RRR inputs {$\hat{b}_j$} and receives $z_j$.

    \item \RRR outputs $Z := \{z_j \mid z_j \neq \bot, j \in [m^\prime]\}$.

\end{enumerate}

\end{nfprot}
\caption{Protocol of fuzzy PSI for $L_{p\in[1, \infty]}$ distance over low dimensional sets
}
% with \SSS's points $2\delta d^{1/p}$-apart and \RRR's points $2\delta(d^{1/p}+1)$-apart
\label{Prot: fpsi-p-low-px}
\end{figure}

\begin{figure}[!t]
    \centering
    \resizebox{0.33\textwidth}{!}{
    \begin{tikzpicture}
        % 1. 绘制内部浅灰色网格 (0,0) 到 (4,4)
        \draw[step=1cm, black!50] (0,0) grid (4,4);

        % 2. 绘制最外层黑色边框
        \draw[thick, black] (0,0) rectangle (4,4);
        
        % 2. 绘制 Origin 标注
        \fill[purple!80!black] (1,1) circle (2.5pt); 
        \node (origin_text) at (-0.8, 1.8) { Origin};
        % 修改后的箭头：从文字下方(south)出发，平缓指向原点左侧(-0.1, 1)
        % out=-90 表示向下出发，in=180 表示从左侧水平进入
        % 箭头从文字右方 (east) 出发，平缓指向原点 (0,1) 的左侧
        % out=20 让它稍微向上扬起出发，in=160 从左上方平滑切入点边缘
        % 修改后的箭头：从文字右方(east)出发，平缓指向点(0,1)的上侧(0, 1.15)
        % out=30 稍微向上出发，in=90 从正上方垂直进入，形成平滑的抛物线弧度
        \draw[-latex, thick] (origin_text.east) to[out=30, in=90] (0.95, 1.10);
        
        % 3. 绘制右侧的 6\delta 括号
        \draw[thick, decoration={brace, mirror, raise=5pt}, decorate] (4,1) -- (4,4);
        \node[right=8pt] at (4, 2.5) {\large $6\delta$};
        
        % 4. 绘制蓝色带阴影的方块 (Ball centered at w)
        % 定义位置
        \coordinate (w) at (3.1, 3.2);
        \coordinate (q) at (2.7, 2.85);
        
        % 绘制蓝色方块并填充阴影
        \filldraw[fill=blue!30, fill opacity=0.5, draw=blue, thick] (2.6, 2.7) rectangle (3.6, 3.7);
        % 绘制阴影斜线 (需要 patterns 库)
        \fill[pattern=north east lines, pattern color=gray!50] (2.6, 2.7) rectangle (3.6, 3.7);
        
        % 5. 绘制点 w 和 q
        % \fill (w) circle (1.5pt) node[right, yshift=2pt] { $w$};
        \fill (w) circle (1.5pt) node[right, opacity=1] {$\vecw$};
        % \fill (q) circle (1.5pt) node[right, yshift=2pt] { $q$};
        \fill (q) circle (1.5pt) node[right, opacity=1] {$\vecq$};
        
        % 6. 绘制 "Ball centered at w" 的标注
        \node (ball_text) at (2.8, 4.4) {Ball centered at $\vecw$};
        % 使用 out/in 控制的弯曲箭头
        \draw[-latex, thick] (ball_text.west) to[out=180, in=160] (2.5, 3.2);
        
        % 7. 额外加粗蓝色边界（模仿手绘中的蓝色笔迹）
        \draw[blue, thick] (1,1) -- (4,1) -- (4,4) -- (1,4) -- cycle;
    \end{tikzpicture}
    }
\caption{Illustration of reduced input domain after shifting}
\label{Fig: Illustration of input domain reduction optimization}
\end{figure}

\subsection{Constructions}
\label{subsec: low dim Constructions}

% For realizing multi-point fuzzy PSI, similar to recent works \blue{refs}, we leverage spatial hashing techniques.
% Spatial hashing 

We present our fuzzy PSI protocol for $L_{p \in [1,\infty]}$ distance over low-dimensional sets in Figure~\ref{Prot: fpsi-p-low-px}. The protocol relies on assumptions on both the sender’s and receiver’s sets, which are consistent with those in~\cite{van2025} and the evaluation setup of~\cite{dang2025ccs}.
That is, for the sender’s set, each grid cell contains at most one sender's point; for the receiver’s set, each grid cell intersects with at most one $\delta$-radius ball centered at a receiver point.
We discuss how to extend to arbitrary receiver sets in Section \ref{subsec: Extension to Arbitrary Receiver's Set}.

Since the protocol intuition has already been presented in
Section~\ref{Sec: Technical Overview}, we focus here on providing additional details of the domain-reduction optimization used when invoking fuzzy matching.

% for each sender's point $\vecq$, each possible point $\vecw$ must be contained in the neighborhood cells of the $\vecq$
% This neighborhood is precisely a $d$-dimensional cube of side length $6\delta$.

% With the above alignment, we invoke fuzzy matching protocols in a black-box manner for each aligned pair. 
% As shwon in recent works, 
% We utilize this fact to reduce the domain of fuzzy matching.

% each input ball of fixed radius $\delta$ would be entirely contained in the neighborhood of its center.
% Therefore, we reduce the domain of fuzzy matching to size-$6\delta$ cube.

% With these neighbor cells, the input bit length is reduced to $\lceil \log 6\delta \rceil$. 
% However, we can not directly truncate the input to its last $\lceil \log 6\delta \rceil$ bits, since truncation may lead to incorrect distances. 
% For example, given two values 0100 0000 and 0011 1111 that differ by 1, after truncating to the last 4 bits, they become 0000 and 1111, respectively, and have a large distance of 15.

% and is $\CCC_{\vecw, h} - 2\delta$ for each $h \in [2^d]$.

\textbf{Reducing input domain of fuzzy matching.} For a sender point $\vecq$, any receiver point that is potentially $\delta$-close to $\vecq$ must lie in one of the neighboring grid cells of $\vecq$.
Formally, the neighboring cells of $\vecq$ are defined as
$\{(\CCC_{\vecq,1}+t_1, \ldots, \CCC_{\vecq,d}+t_d)\}_{t_i \in \{-1,0,1\}}$, which collectively form a $d$-dimensional hypercube with side length $6\delta$.
This significantly reduces the effective input domain for fuzzy matching.
To execute fuzzy matching over this reduced domain, we define a
\emph{shift-and-truncate} operation as follows:
\[
\mathsf{shift}_{\ell}(\vecq, \veco) := (\vecq - \veco) \bmod 2^\ell,
\]
where $\veco$ denotes the shift origin, and $\ell$ is the required bit length to represent points in the reduced domain.
In our protocol, for each sender point $\vecq$, the shift origin $\veco$ is chosen
as the lower-left corner of the neighboring cells of $\vecq$, which can be
formally expressed as $\CCC_{\vecq} - 2\delta$.
{Figure~\ref{Fig: Illustration of input domain reduction optimization} provides an illustrative example.}
We note that, for a receiver point $\vecw$, the shift origin is determined by the
cell intersected by the $\delta$-radius ball centered at $\vecw$, rather than by
the cell containing $\vecw$ itself.

% \textbf{Addressing false positives caused by samll input domain.} While the overhead of fuzzy matching can be reduced with the above input domain reduction, as analyzed in Section \ref{Sec: Technical Overview}, it introduces false positives since the input points pair to fuzzy matching could be some $\vecq$ from the sender and a random point, instead of a valid $\vecw$ from the receiver.
% Due to the small input space, the probability that these two points are $\delta$-close is non-negligible, resulting in incorrectly revealing $\vecq$ to the receiver.
% To address this issue, we let the receiver additionally append a random $s$ in the programmed point's output of OPPRF.
% As a result, the sender only obtains the same $s$ if both parties' inputs share the same cell identifier.
% Then, an additional equality test is invoked along with fuzzy matching, and the result indicates whether the cell of $\vecq$ is the same as the cell that $\vecw$ intersects with.
% This ensures that the receiver obtains $\vecq$ only when the fuzzy matching of the pair succeeds \textit{and} the two cells are the same.

\textbf{Addressing false positives caused by small input domain.}
Although the above domain-reduction technique significantly reduces the overhead of fuzzy matching, as discussed in
Section~\ref{subsec: New Framework for Fuzzy PSI}, it introduces false positives.
In particular, after executing our OPPRF-based fuzzy mapping, the produced pair of points may consist of a sender point $\vecq$ and a uniformly random point, rather than a valid receiver point $\vecw$.
Because the input domain is small, the probability that these two points are $\delta$-close is non-negligible, which may incorrectly reveal $\vecq$ to the receiver.

To mitigate this issue, the receiver appends an additional random value $s$ to the output of each programmed point in the OPPRF. Consequently, the sender obtains the same value $s$ if and only if both parties’ inputs correspond to the same cell identifier. We then perform an additional equality test in conjunction with fuzzy matching, which verifies whether the cell containing $\vecq$ is identical to the cell intersected by $\vecw$. This ensures that the receiver learns $\vecq$ only when the fuzzy matching succeeds \emph{and} the two points are associated with the same cell.

Besides, the ordering of Cuckoo hashing leaks additional information about the sender’s entire set. 
To prevent the leakage, we adopt an oblivious permutation on the fuzzy matching results.
% \blue{
% We note that while our fuzzy PSI protocol still invokes OPPRF, it achieves low communication and computation overhead compared to prior OPPRF-based fuzzy PSI~\cite{van2025}. The reason is that the computation of OPPRF is dominated by the number of queries, and the communicaiton is dominated by the total bit lengths of programmed payloads.
% Our protocol involves only a small number $O(m)$ of OPPRF queries and $O(n 2^d (d\log \delta + \lambda)$-bit programmed payloads.
% In contrast, prior OPPRF-based works like \cite{van2025} have a communication overhead of $O(n 2^d (d\log \delta \lambda)$.
% }
% \blue{
% We emphasize that while our fuzzy PSI protocol still relies on OPPRF, it achieves substantially lower communication and computation overheads compared to prior OPPRF-based constructions~\cite{van2025}. 
% % Generally, in OPPRF-based constructions, computational overhead is dominated by the number of queries, whereas communication bandwidth is dictated by the total bit-length of the programmed payloads. 
% Our construction requires only $\mathcal{O}(m)$ OPPRF queries and transmits payloads bounded by $\mathcal{O}(n 2^d (d\log \delta + \lambda))$ bits. In contrast, prior works such as~\cite{van2025} suffer from a multiplicative dependence on $\lambda$, resulting in a heavier communication overhead of $\mathcal{O}(n 2^d \cdot d\log \delta \cdot \lambda)$.
% }
% , ensuring that the receiver does not obtain information about the sender’s Cuckoo hashing index.
The security of the protocol is shown as follows, and we defer the detailed correctness and security analysis to Appendix~\ref{Proof of Theorem fuzzy PSI L-p-low-px}.

\begin{theorem}
\label{thm: fuzzy PSI L-p-low-px}
    The protocol $\Pi_\mathsf{FPSI\text{-}low}^{L_{p\in[1, \infty]}}$ in Figure \ref{Prot: fpsi-p-low-px} realizes the functionality \Func[FPSI] for $L_{p\in[1, \infty]}$ distance in Figure \ref{Func:FPSI} against semi-honest adversaries in the $(\Func[OPPRF], \Func[ssFMatch], \Func[ssPEQT], \Func[Perm], \Func[AND], \Func[OT])$-hybrid model, if the sender's (resp. receiver's) set satisfies the unique center (resp. ball) assumption.
\end{theorem}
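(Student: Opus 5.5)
We prove Theorem~\ref{thm: fuzzy PSI L-p-low-px} in two parts: correctness (up to negligible failure probability) and simulation-based security against each semi-honest party separately, working in the stated hybrid model so that every sub-functionality is replaced by its ideal version and the simulator just writes down the ideal outputs.

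\textbf{Correctness.} We argue bin by bin. Fix a bin $j$ of $T_q$.

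\begin{enumerate}
\item Suppose the bin is occupied by $(\CCC_{\vecq}\|\tau, \vecq)$ and some receiver ball intersects $\CCC_\vecq$.
\begin{itemize}
\item By the unique ball assumption, exactly one $\vecw$ has $\CCC_\vecq\in\mathsf{ballhash}(\vecw)$.
\item Because the simple table inserts each $\CCC_{\vecw,h}$ under all three $H_\tau$, the key $\CCC_\vecq\|\tau$ is programmed in $L_j$. The sender therefore receives $\hat{s}_j=s_j$ and $\hat{\vecr}_j=\mathsf{shift}(\vecw,\CCC_\vecq-2\delta)-\vecr_j$.
\item The fuzzy-matching inputs are then shares of the two shifted points. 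Both points lie in the $6\delta$-cube with origin $\CCC_\vecq-2\delta$, so no wrap-around occurs modulo $2^{\ell_2}$. The choice of $\ell_2$ (one extra sign bit for the MSB-based comparisons, or room for $d(3\delta)^p$ in the $L_p$ case) ensures that the distance in the reduced domain equals the true distance.
\item Hence $b_j=1\{\mathsf{dist}(\vecq,\vecw)\le\delta\}$.
\end{itemize}
\item Suppose instead that no ball intersects $\CCC_\vecq$ (or the bin is a dummy). The OPPRF outputs a random $\hat{s}_j$, which equals $s_j$ with probability $2^{-\ell_1}$. A union bound over $m'$ bins with $\ell_1=\lambda+\log m'$ makes the total error negligible, so $b_j=0$ except with that probability.
\item Cuckoo-hashing failure is negligible by the parameter choice.
\item Every $\vecq$ within distance $\delta$ of some $\vecw$ has all coordinates of $\vecw$ within $\delta$, so $\vecw$'s ball intersects $\CCC_\vecq$. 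Case 1 then yields $\vecq$ through the final OT, so $Z$ equals the ideal output.
\end{enumerate}

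\textbf{Corrupt sender.} The simulator receives $Q$ and must reproduce the sender's view.
\begin{itemize}
\item It builds $T_q$ honestly.
\item It simulates the OPPRF output as $m'$ uniform strings $\hat{s}_j\|\hat{\vecr}_j$. This is indistinguishable because each real output is either masked by a fresh receiver-chosen $s_j,\vecr_j$ (when programmed) or uniform by the OPPRF functionality. The sender queries each cell key once, and by the unique ball assumption each programmed key in $L_j$ carries a distinct bin-wise mask.
\item The shares $\nu^\SSS_j,\mu^\SSS_j,b^\SSS_j$ are uniform bits, and $\pi$ is a uniform permutation.
\item OT gives the sender no output.
\end{itemize}

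\textbf{Corrupt receiver.} The simulator receives $W$ and $Z$.
\begin{itemize}
\item It runs the receiver's local steps with fresh $s_j,\vecr_j$.
\item It outputs $\OOO^F$ as the functionality would, and uniform shares $\nu^\RRR_j,\mu^\RRR_j,b^\RRR_j$.
\item For the permutation output, it constructs a vector $\hat{b}$ containing $|Z|$ ones in uniformly random positions and zeros elsewhere, and feeds the OT outputs $\bot$ or the elements of $Z$ accordingly.
\end{itemize}
For this to be indistinguishable, $\hat{b}$ must be a uniformly random arrangement of exactly $|Z|$ ones. This follows from the correctness argument together with the uniform $\pi$ from \Func[Perm], which hides the Cuckoo positions.

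\textbf{Main obstacle.} The delicate point is the claim that the ones in $\hat{b}$ correspond exactly, and without duplicates, to the elements of $Z$. Duplicates would make $|Z|$ smaller than the number of ones and break the simulation. No duplicates arise because each $\vecq$ occupies exactly one Cuckoo bin (unique center assumption gives distinct cells and hence distinct keys), and the $s_j$ check rules out the small-domain false positives. I would first prove a lemma that $b_j=1$ iff $\vecq_{T_q[j]}\in Z$, except with probability $m'2^{-\ell_1}$ plus the Cuckoo failure probability. Both simulations then follow by a short hybrid argument that swaps the real sub-protocol outputs for ideal ones.
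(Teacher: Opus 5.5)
Your route is the paper's own. The correctness case analysis is the same; you organize it by whether the cell of $\vecq$ is hit by the unique receiver ball rather than by whether $\vecq\in Z$, which is equivalent. The corrupted-sender simulator is the same: uniform $\hat{s}_j\|\hat{\vecr}_j$, uniform share bits, a uniform $\pi$, and no OT output. The corrupted-receiver simulator is also the same: an honestly programmed $\OOO^F$, uniform share bits, and a randomly shuffled $\bot$-padded copy of $Z$ that fixes $\hat{b}$ and the OT outputs. Your lemma ``$b_j=1$ iff $\vecq_{T_q[j]}\in Z$'' is exactly the correctness fact the paper invokes in its first receiver hybrid.

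The one step that fails as you state it is the bit-length claim for the $L_p$ case. The shifted coordinates are exact; the problem is the sign test. The matcher outputs $1\oplus\mathsf{MSB}(f)$ with $f=\delta^p-\sum_k|t_k|^p$ reduced modulo $2^{\ell_2}$. After a cell match you only know $|t_k|\le 3\delta-1$, so you need $2^{\ell_2-1}\ge d(3\delta-1)^p-\delta^p$. That is, you need a sign bit on top of ``room for $d(3\delta)^p$''. The $L_\infty$ and $L_1$ settings in the figure include this bit, but the $L_2$ setting $\ell_2=\lceil\log d(3\delta)^2\rceil$ does not, and the paper's proof asserts that setting as well.

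Here is a concrete failure. Take $d=2$ and $\delta=16$, so $\ell_2=13$. Take single-point sets (so both assumptions hold trivially): $\vecq=\mathbf{c}+(31,31)$ in the cell with lower corner $\mathbf{c}$, and $\vecw=\mathbf{c}-(16,16)$. This gives $t=(47,47)$ and $f=-4162\equiv 4030 \pmod{2^{13}}$. The MSB is $0$, so the matcher reports a match. Since $\mathbf{c}$ lies in the $L_\infty$ ball of $\vecw$, the cell is in $\mathsf{ballhash}(\vecw)$ and the consistency PEQT also passes. The receiver therefore learns $\vecq$ although $\mathsf{dist}_2(\vecq,\vecw)=47\sqrt{2}>\delta$. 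This breaks correctness and with it your receiver simulation. You need $\ell_2\ge\lceil\log d(3\delta)^p\rceil+1$, which matches the general bound $\ell\ge\log(dz^p)+1$ from the analysis of the $L_p$ matcher. With that one extra bit, the rest of your argument goes through unchanged.
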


\begin{figure}[!t]
\begin{nfprot}{$\Pi_\mathsf{FPSI\text{-}high}^{L_\infty}$}

\noindent \textbf{Parameter:} Sender \SSS and receiver \RRR with input sizes $m, n$, respectively. Threshold $\delta$. Bit lengths $\ell_1 := \lambda + 2 \log(m+nu^\prime) + \log d$, $\ell_2:= \lambda + \log m + \log n$.

\noindent \textbf{Input:} \SSS inputs $Q = \{\vecq_j\}_{j \in [m]} \in \ZZ_{2^\ell}^{d\times m}$ and \RRR inputs $W = \{\vecw_i\}_{i \in [n]} \in \ZZ_{2^\ell}^{d\times n}$.

% \noindent \textbf{Communication/Computation:} (1) \SSS $\to$ \RRR: $dm\delta$ (DID) + $m$. (2) \RRR $\to$ \SSS: $dn$ (DID) + $n$.

\noindent \textbf{Protocol:}

\begin{enumerate}

    % such that $\sum_{k \in [d]} r_{j, k} = 0$ for $j \in [m]$.

    \item For $j \in [m], k \in [d]$, \SSS computes $\{\tilde{q}_{j,k,t}\}_{t\in[u]} := \mathsf{PxTrie}(q_{j,k}-\delta, q_{j,k}+\delta)$.

    \item For $i \in [n], k \in [d]$, \RRR computes $\{\tilde{w}_{i,k,t}\}_{t\in[u^\prime]} := \mathsf{PxPath}(w_{i,k}, \delta)$.
    
    \item \SSS samples $\{r_{j, k}\} \from \FF_{2^{\ell_1}}$ for $j \in [m], k \in [d]$. \SSS and \RRR invoke functionality \Func[OPPRF], where \SSS inputs $\{(k\|\tilde{q}_{j, k, t}, r_{j, k})\}_{j \in [m], k \in [d], t \in [u]}$ and learns $\OOO^F$, while \RRR inputs $\{k\|\tilde{w}_{i, k, t}\}_{i \in [n], k \in [d], t \in [u^\prime]}$ and learns $\{\hat{r}_{i, k, t} = F(k\|\tilde{w}_{i, k, t})\}_{i \in [n], k \in [d], t \in [u^\prime]}$.

    \item \RRR samples ${\hat{s}_{i, k}}  \from \FF_{2^{\ell_2}}$ for $i \in [n], k \in [d]$.
    \SSS and \RRR invoke functionality \Func[OPPRF], where \RRR as the sender inputs $\{(k\|\hat{r}_{i, k, t}, \hat{s}_{i, k})\}_{i \in [n], k \in [d], t \in [u^\prime]}$ and learns $\OOO^{F^\prime}$, while \SSS as the receiver inputs $\{k\|r_{j, k}\}_{j \in [m], k \in [d]}$ and learns $\{s_{j, k} := F^\prime(k\|r_{j, k})\}_{j \in [m], k \in [d]}$.

    \item For $j \in [m]$, \SSS computes ${s}_{j} := \sum_{k \in [d]} {s}_{j, k}$. For $i \in [n]$, \RRR computes $\hat{s}_{i} := \sum_{k \in [d]} \hat{s}_{i, k}$.

    \item \SSS and \RRR invoke functionality \Func[wLPSI], where \SSS inputs $\{(s_j, \vecq_j)\}_{j \in [m]}$, while \RRR inputs $\{\hat{s}_i\}_{i \in [n]}$ and learns $I$. 

    \item \RRR outputs $Z := \{\vecz \mid (b, \vecz) \in I\}$.

\end{enumerate}

\end{nfprot}
\caption{Protocol of fuzzy PSI for $L_\infty$ distance over high dimensional sets}
% with \SSS and \RRR globally disjoint sets}
\label{Prot: fpsi-infty-high-px}
\end{figure}

\begin{figure}[!t]
\begin{nfprot}{$\Pi_\mathsf{FPSI\text{-}high}^{L_p}$}

\noindent \textbf{Parameter:} Sender \SSS and receiver \RRR with input sizes $m, n$, respectively. Threshold $\delta$. Table size $n^\prime$ of Cuckoo hash with three hash functions $H_1, H_2, H_3: \{0, 1\}^* \to [n^\prime]$. Bit length $\ell^\prime:= \lambda + p\log \delta + \log n^\prime$.

% $\ell^\prime_1 := \lambda + 2 \log(m+nu^\prime) + \log d$, $\ell^\prime_2:= \lambda + \log m + \log n$.

\noindent \textbf{Input:} \SSS inputs $Q = \{\vecq_j\}_{j \in [m]} \in \ZZ_{2^\ell}^{d\times m}$ and \RRR inputs $W = \{\vecw_i\}_{i \in [n]} \in \ZZ_{2^\ell}^{d\times n}$.

% \noindent \textbf{Communication/Computation:} (1) \SSS $\to$ \RRR: $dm\delta$ (DID) + $m$. (2) \RRR $\to$ \SSS: $dn$ (DID) + $n$.

\noindent \textbf{Protocol:}

\begin{enumerate}

    \item \SSS and \RRR execute the steps 1-5 of protocol $\Pi_\mathsf{FPSI\text{-}high}^{L_\infty}$, and learn $\{\mathsf{id}_{\vecq_j}\}_{j \in [m]}$ and $\{\mathsf{id}_{\vecw_i}\}_{i \in [n]}$, respectively.

    \item \SSS maps $\{(\mathsf{id}_{\vecq_j}, \vecq_j)\}_{j \in [m]}$ into a simple hash table $T_q$ of $n^\prime$ bins, such that for any $(\mathsf{id}_{\vecq_j}, \vecq_j)$ and all $\tau \in [3]$, it holds that $(\mathsf{id}_{\vecq_j}\|\tau, \vecq_j) \in T_q[H_\tau(\mathsf{id}_{\vecq_j})]$.

    \item \RRR maps $\{(\mathsf{id}_{\vecw_i}, \vecw_i)\}_{i \in [n]}$ into a Cuckoo hash table $T_w$ of $n^\prime$ bins such that for any $(\mathsf{id}_{\vecw_i}, \vecw_i)$, there exists $\tau \in [3]$ satisfying $T_w[H_\tau(\mathsf{id}_{\vecw_i})] = (\mathsf{id}_{\vecw_i}\|\tau, \vecw_i)$. 

    \item \SSS samples $\{\vecg_{i}\}_{i \in [n^\prime]} \from \ZZ_{2^{\ell^\prime}}^d$. For $i \in [n^\prime]$, \SSS defines $L_i := \{\mathsf{id}_\vecq\|\tau, \vecq-\vecg_i \bmod 2^{\ell^\prime}\}$ for all $(\mathsf{id}_\vecq\|\tau, \vecq) \in T_q[i]$.

    \item \SSS and \RRR invoke functionality \Func[OPPRF], where \SSS inputs $\{L_i\}_{i \in [n^\prime]}$ and learns $\OOO^G$, while \RRR inputs $\{\mathsf{id}_\vecw\|\tau\}_{(\mathsf{id}_\vecw\|\tau, \vecw) \in T_w}$ and learns $\{\hat{\vecg}_i := G(\mathsf{id}_\vecw\|\tau)\}_{i \in [n^\prime]}$.

    \item For $i \in [n^\prime]$, \SSS and \RRR invoke functionality $\FFF_\mathsf{FMatch}$, where \SSS inputs $\vecg_i$ and \RRR inputs $\vecw - \hat{\vecg}_i \bmod 2^{\ell^\prime}$ such that $T_w[i] = (\mathsf{id}_{\vecw}\|\tau, \vecw)$ and learns $b_i \in \FF_2$.

    \item For $i \in [n^\prime]$, \SSS and \RRR invoke functionality \Func[OT], where \SSS inputs $(\bot, \vecg_i)$, and \RRR inputs $b_i$ and receives $\vecz_i$.

    \item \RRR outputs $Z := \{\vecz_i+\hat{\vecg}_i \mid z_i \neq \bot, i \in [n^\prime]\}$.

\end{enumerate}

\end{nfprot}
\caption{Protocol of fuzzy PSI for $L_p$ distance over high dimensional sets}
% with \SSS and \RRR globally disjoint assumption
\label{Prot: fpsi-p-high-px}
\end{figure}

\subsection{Extension to Arbitrary Receiver's Set}
\label{subsec: Extension to Arbitrary Receiver's Set}

% Similar to the work \cite{piske2025distance}, we can extend our protocols to arbitrary receiver's set distribution without any assumption.
% We define a density parameter $\rho$, which records the maximum number of balls of the receiver's set that intersect with a cell.
% We need to correspondingly adapt the spatial hashing such that the cell identifier of each receiver's point is unique.
% Specifically, we add a counter after the original cell identifier for each receiver's ball. 
% Formally, we redefine $\mathsf{ballhash}(\vecw, \rho) = \{\CCC_{\vecw, h}\|D[\CCC_{\vecw, h}]++\}_{h \in [2^d]}$, where $D$ is a dictionary and we initialize $D[\CCC] := 1$ for all cell $\CCC$.
% Correspondingly, we record all possible cell identifiers for each sender's point as follows: 
% $\mathsf{cellhash}(\vecq, \rho) = \{\CCC_\vecq\|h\}_{h \in [\rho]}$.
% The remaining part of fuzzy PSI protocol is the same as that of Figure \ref{Prot: fpsi-p-low-px}.
% Same as the work \cite{piske2025distance}, our original protocol assumes $\rho = 1$. For $\rho > 1$, our protocol complexity has an additional multiplicative factor $\rho$. If there is no limitation on $\rho$, our protocol realizes fuzzy PSI on arbitrary receiver's set with the only unique cell assumption on the sender's set.

Similar to~\cite{piske2025distance}, we can extend our protocols to support arbitrary receiver set distributions without imposing any structural assumptions.
To this end, we define a \emph{density parameter} $\rho$, which denotes the maximum number of receiver balls that intersect the same cell.
We then adapt the spatial hashing scheme so that each receiver point is assigned a unique cell identifier.
Specifically, we append a counter to the original cell identifier for each receiver ball.
Formally, we redefine $\mathsf{ballhash}(\vecw, \rho) = \{\CCC_{\vecw, h}\|D[\CCC_{\vecw, h}]++\}_{h \in [2^d]}$,
where $D$ is a dictionary initialized as $D[\CCC] := 1$ for every cell identifier $\CCC$.
This ensures that even when multiple receiver balls map to the same cell, their identifiers remain unique.
Correspondingly, for each sender point $\vecq$, we enumerate all possible matching cell identifiers as $\mathsf{cellhash}(\vecq, \rho) = \{\CCC_\vecq\|h\}_{h \in [\rho]}$.
The remainder of the fuzzy PSI protocol remains identical to that shown in Figure~\ref{Prot: fpsi-p-low-px}.
As in~\cite{piske2025distance}, our original protocol assumes $\rho = 1$.
When $\rho > 1$, the complexity of our protocol increases by an additional multiplicative factor of $\rho$.
Therefore, our construction supports fuzzy PSI for arbitrary receiver sets, requiring only the unique cell assumption on the sender’s set.

\section{Fuzzy PSI over High-dimensional Sets}
\label{Sec: Fuzzy PSI over High-dimensional Sets}

% \blue{In this section, we present efficient fuzzy PSI protocols for high-dimensional sets.
% In this setting, the protocols in Section~\ref{sec: Fuzzy PSI over Low-dimension Sets} incur large overhead due to the factor $2^d$, which is a fundamental limitation of most spatial hashing–based approaches~\cite{garimella2022structure, garimella2024computation, bui2025new, richardson2024fuzzy, van2024fuzzy, van2025, piske2025distance}.
% To overcome this, we utilize the globally disjoint assumption on both parties, also used in \cite{garimella2022structure, van2025}. That is, the projections of the sender’s and receiver’s balls are globally disjoint in every dimension.
% We acknowledge that this assumption is rather strong, but this benefits the exploration of asymptotically efficient fuzzy PSI with $O((m+n)d\log \delta)$ computation and communication overhead.}

{
In this section, we present efficient fuzzy PSI protocols tailored for high-dimensional sets. In this setting, the protocols introduced in Section~\ref{sec: Fuzzy PSI over Low-dimension Sets} suffer from an exponential $2^d$ blowup in overhead, which is a fundamental bottleneck inherent to almost all spatial hashing-based approaches~\cite{garimella2022structure, garimella2024computation, bui2025new, richardson2024fuzzy, van2024fuzzy, van2025, piske2025distance}. To bypass this curse of dimensionality, we leverage the \textit{globally disjoint} assumption, previously adopted in~\cite{garimella2022structure, van2025}. That is, the projections of the sender’s and receiver’s balls are globally disjoint in every dimension. While our protocols achieve desirable $\mathcal{O}((m+n)d\log \delta)$ computation and communication complexity, we acknowledge that this assumption enforces a strong structural constraint and is not entirely realistic.
}

% As illustrated in \cite{garimella2022structure, van2025}, 

% we adopt the same assumption as in~\cite{van2025}, namely that the projections of both the sender’s and receiver’s balls are globally disjoint across all dimensions.
% As illustrated in~\cite{van2025}, this assumption is rather strong, \blue{and we discuss how to extend our protocols to the milder assumption in Section~\ref{subsec: Extension to Mild Assumptions}.}
% Leveraging our asymptotically efficient fuzzy matching, the resulting fuzzy PSI protocols achieve a logarithmic dependence on the distance threshold~$\delta$, an exponential improvement over the state-of-the-art protocol in~\cite{van2025}.

% Below, we separately present the constructions for $L_\infty$ and $L_p$ distances.

% For this setting, the protocols from the previous sections have a prohibitively large overhead due to the factor $2^d$. 
% This factor exists in all existing works based on spatial hashing \cite{garimella2022structure, garimella2024computation, bui2025new, van2024fuzzy, van2025, richardson2024fuzzy, piske2025distance, dang2025ccs}. 
% Our protocols have $O((m+n)d\log \delta)$ computation and communication overhead, an exponential improvement on $\delta$ over the state-of-the-art work \cite{van2025} with $(m\delta+n)d$ overhead.

%\subsection{Constructions}

% We utilize the same globally disjoint assumption as used in the recent works \cite{garimella2022structure, van2025}. That is, the projection of the sender’s and receiver’s balls are globally disjoint in every dimension.

We formally present the globally disjoint assumption as follows.

\begin{definition}[Globally Disjoint Assumption \cite{garimella2022structure, van2025}]
A set of $d$-dimensional balls of radius $\delta$ with centers $\vecq_1, \ldots, \vecq_m$ is globally disjoint if, for each dimension $k \in [d]$, the projections $[q_{j, k} - \delta, q_{j, k} + \delta]$ are disjoint
for all $j \in [m]$.
\end{definition}

\textbf{Construction for $L_\infty$ distance.} 
{
We compile our OPRF-based fuzzy matching protocol to construct fuzzy PSI for $L_\infty$ distance.
We do not adopt our OT-based construction, since without spatial hashing it does not admit a sufficiently reduced input domain, leading to large overhead, especially for inputs with large bit length.
The idea is that since both parties' sets are globally disjoint, fuzzy PSI protocols can be constructed by using multiple invocations of disjoint OPRF-based fuzzy matching. 
We present here the protocol in Figure~\ref{Prot: fpsi-infty-high-px}, and defer the detailed description to Appendix~\ref{Appendix: Details of Fuzzy PSI for High-dimensional Sets}.
}

The security is shown in the following, and its proof is deferred to Appendix \ref{proof: Proof of fuzzy PSI L-inf-high-px}.

% With this assumption, we utilize our OPRF-based fuzzy matching in a non-black box to construct fuzzy PSI protocols.
% Recall that in our OPRF-based fuzzy matching protocol, for each dimension $k$, the OPPRF protocol programs (the prefix representation of) the interval $[q_{k} - \delta, q_{k} + \delta]$ to the same random value $r_k$. 
% Once querying two points $w_{k}, w_{k}^\prime$ that belong to $[q_{k} - \delta, q_{k} + \delta]$, the receiver will obtain the same $r_k$.
% This leaks more information about partial matches on individual dimension. Fortunately, the globally disjoint assumption on the receiver side ensures that the above issue will not occur, since for any two points $\vecw, \vecw^\prime$, it holds that $w_{k}, w_{k}^\prime$ are at least $2\delta$ apart for all $k \in[d]$.
% Furthermore, the globally disjoint assumption on the sender side ensures that for any $\vecq, \vecq^\prime$, $[q_{k} - \delta, q_{k} + \delta]$ and $[q_{k}^\prime - \delta, q_{k}^\prime + \delta]$ are disjoint for all $k \in[d]$.
% This enables the sender to program all intervals in one OPPRF.
% Therefore, our fuzzy PSI protocol incurs a factor of $O(m+n)$ overhead increase compared to our OPRF-based fuzzy matching protocol.

\begin{theorem}
\label{thm: fuzzy PSI L-inf-high-px}
    The protocol $\Pi_\mathsf{FPSI\text{-}high}^{L_\infty}$ in Figure \ref{Prot: fpsi-infty-high-px} realizes the functionality \Func[FPSI] for $L_\infty$ distance in Figure \ref{Func:FPSI} against semi-honest adversaries in the $(\Func[OPPRF], \Func[wLPSI])$-hybrid model if both parties' sets satisfy the globally disjoint assumption.
\end{theorem}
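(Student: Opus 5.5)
We first prove correctness (up to negligible statistical error) and then give simulators for a corrupted sender and a corrupted receiver in the $(\Func[OPPRF], \Func[wLPSI])$-hybrid model. Correctness is the substantive part; once it holds, simulation is routine.

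\textbf{Correctness.} Fix $\vecq_j \in Q$ and $\vecw_i \in W$.

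\emph{Case $\mathsf{dist}_\infty(\vecq_j,\vecw_i)\le\delta$.} For every $k$ we have $w_{i,k}\in[q_{j,k}-\delta,q_{j,k}+\delta]$. By Theorem~\ref{theorem: prefix}, some $\tilde{w}_{i,k,t}$ equals a prefix in $\mathsf{PxTrie}(q_{j,k}-\delta,q_{j,k}+\delta)$. By Corollary~\ref{corollary: Distinctness of PxTrie}, the sender's prefix sets for different $j$ are disjoint within dimension $k$, because global disjointness gives $|q_{j,k}-q_{j',k}|>2\delta$. Hence the first OPPRF is well defined, and $\hat{r}_{i,k,t}=r_{j,k}$.

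For the second OPPRF to be well defined, the programmed keys $k\|\hat{r}_{i,k,t}$ must be distinct over $(i,t)$.
\begin{itemize}
\item For fixed $i$: Theorem~\ref{theorem: prefix} gives uniqueness of the hit $t$. Non-hit outputs are random in $\FF_{2^{\ell_1}}$, and within one $i$ they are evaluated at distinct prefixes.
\item Across $i\ne i'$: Corollary~\ref{corollary: Distinctness of PxPath} (receiver global disjointness) shows the query prefixes are distinct, and Corollary~\ref{corollary: uniqueness} shows at most one receiver point hits each sender interval.
\end{itemize}
So distinct queries produce either distinct programmed values $r_{j,k}$ or independent random values. A union bound over the at most $(m+nu')^2 d$ pairs shows that all values are distinct, and that no non-hit $\hat r$ equals some $r_{j,k}$, except with probability $\le (m+nu')^2 d\,2^{-\ell_1}=2^{-\lambda}$. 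This is exactly why $\ell_1=\lambda+2\log(m+nu')+\log d$. Conditioned on this, $s_{j,k}=\hat{s}_{i,k}$ for all $k$, hence $s_j=\hat{s}_i$, and wLPSI outputs $\vecq_j$.

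\emph{Case $\mathsf{dist}_\infty(\vecq_j,\vecw_i)>\delta$.} Some coordinate $k$ fails. Then $r_{j,k}\notin\{\hat{r}_{i,k,t}\}_t$ (w.h.p., by the same bound), so $s_{j,k}$ is either a random $F'$ output or $\hat{s}_{i',k}$ for a different $i'$, independent of $\hat{s}_{i,k}$. Thus $s_j-\hat{s}_i$ is uniform in $\FF_{2^{\ell_2}}$. A union bound over the $mn$ pairs with $\ell_2=\lambda+\log m+\log n$ gives a spurious equality with probability at most $2^{-\lambda}$.

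\textbf{Corrupt sender.} The simulator receives $Q$ and no output. It samples the $r_{j,k}$ and gives $\OOO^F$. For the second OPPRF it returns uniform $s_{j,k}$. Finally it feeds $\{(s_j,\vecq_j)\}$ to the simulated wLPSI, which gives the sender no output. Indistinguishability holds because in the real execution each $s_{j,k}$ is either an $F'$ output on an unprogrammed point or a fresh $\hat{s}_{i,k}$. Each $\hat{s}_{i,k}$ is used for at most one sender $j$ per dimension, by Corollary~\ref{corollary: uniqueness} together with sender disjointness, so the $s_{j,k}$ are jointly uniform.

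\textbf{Corrupt receiver.} The simulator receives $W$ and the output $Z$. It gives uniform $\hat{r}_{i,k,t}$ as first-OPPRF outputs and $\OOO^{F'}$ for the second OPPRF, and sets the wLPSI output to $\{(\hat{s}_i,\vecz)\}$, pairing each $\vecz\in Z$ with a receiver $i$ satisfying $\mathsf{dist}(\vecz,\vecw_i)\le\delta$. Such an $i$ is unique by receiver global disjointness.

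\textbf{Main obstacle.} The delicate point is arguing that the real $\hat{r}_{i,k,t}$ are jointly uniform from the receiver's view. Equal values would leak partial per-dimension matches; in particular, two receiver points landing in the same sender interval would yield equal $\hat{r}$. Global disjointness combined with Corollary~\ref{corollary: uniqueness} rules this out: each $r_{j,k}$ is revealed at most once, and is therefore indistinguishable from a fresh random value. The statistical failure events from the correctness analysis are absorbed into the $O(2^{-\lambda})$ distinguishing advantage.
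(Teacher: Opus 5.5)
Your proposal is correct and follows essentially the same route as the paper. It uses the same distinctness argument for both OPPRF key sets via Corollaries~\ref{corollary: Distinctness of PxTrie}--\ref{corollary: uniqueness}, the same union bounds justifying $\ell_1$ and $\ell_2$, and the same simulators: uniform $s_{j,k}$ for the sender; for the receiver, uniform $\hat{r}_{i,k,t}$, a programmed $F^\prime$, and a wLPSI output $\{(\hat{s}_{i^*},\vecq_j)\}$ built from the unique nearby receiver point. You also observe explicitly that each $\hat{s}_{i,k}$ reaches at most one sender point per dimension, so the $s_{j,k}$ are \emph{jointly} uniform; this fills in a step that the paper's sender hybrid only asserts.
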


% \textbf{Security issue of \cite{van2025}.} \blue{Give a figure}

% \textbf{Construction for $L_p$ distance.} \blue{
% We further present the fuzzy PSI protocol on high-dimensional sets for $L_p$ distance.
% Our main observation is that the above protocol for $L_\infty$ distance produces unique identifiers for each point prior to output, namely $s_j$ for $\vecq_j$ and $\hat{s}_i$ for $\vecw_i$.
% For clarify, we denote these identifiers as $\mathsf{id}_{\vecw_i}$ and $ \mathsf{id}_{\vecq_j}$, which satisfy $\mathsf{id}_{\vecw_i} = \mathsf{id}_{\vecq_j}$ whenever $\mathsf{dist}_\infty(\vecq_j, \vecw_i) \le \delta$. 
% The parties must further verify whether $\mathsf{dist}_p(\vecq_j, \vecw_i) \le \delta$ for those point pairs that share the same identifier.
% We achieve this by reusing the Cuckoo hashing–based design as in our fuzzy PSI for low-dimensional sets, in which, instead of using grid cells as identifiers
% $\mathsf{id}_{\vecq_j}$ and $\mathsf{id}_{\vecw_i}$, we use identifiers obtained from the $L_\infty$-based fuzzy PSI in Figure~\ref{Prot: fpsi-infty-high-px}.
% We present the detailed protocol in Figure~\ref{Prot: fpsi-p-high-px} and defer the detailed description (including customized input domain reduction optimization) to Appendix~\ref{Appendix: Details of Fuzzy PSI for High-dimensional Sets}.
% }

\textbf{Construction for $L_p$ Distance.} 
{
We further present the fuzzy PSI protocol on high-dimensional sets for $L_p$ distance based on both OPRF- and OT-based fuzzy matching. Our core insight is that the $L_\infty$ protocol described above inherently generates unique identifiers prior to its final output, specifically, $s_j$ for each sender point $\vecq_j$ and $\hat{s}_i$ for each receiver point $\vecw_i$. For clarity, we relabel these intermediate values as $\mathsf{id}_{\vecq_j}$ and $\mathsf{id}_{\vecw_i}$. By design, a collision $\mathsf{id}_{\vecq_j} = \mathsf{id}_{\vecw_i}$ occurs, if and only if $\mathsf{dist}_\infty(\vecq_j, \vecw_i) \le \delta$.
Because the $L_\infty$ ball strictly bounds the $L_p$ ball, these identifiers serve as an effective coarse filter. The parties only need to perform the rigorous $\mathsf{dist}_p(\vecq_j, \vecw_i) \le \delta$ check on candidate pairs that share an identical identifier. To execute this efficiently, we repurpose the Cuckoo hashing-based design from our low-dimensional protocol. However, instead of using spatial grid cells, we use the above intermediate values from the $L_\infty$ protocol in Figure~\ref{Prot: fpsi-infty-high-px} as identifiers. The formal protocol is presented in Figure~\ref{Prot: fpsi-p-high-px}, and additional descriptions are deferred to Appendix~\ref{Appendix: Details of Fuzzy PSI for High-dimensional Sets}.
}

% such that 
% Since $\mathsf{dist}_p(\vecq_j, \vecw_i) \ge \mathsf{dist}_\infty(\vecq_j, \vecw_i)$,
% % the parties must further verify whether $\mathsf{dist}_p(\vecq_j, \vecw_i) \le \delta$ for those point pairs that share the same identifier.
% To this end, we follow the same Cuckoo hashing–based design as in our fuzzy PSI for low-dimensional sets.

% Moreover, we further optimize the bit length to reduce the overhead of our OT-based fuzzy matching.

% \textbf{Construction for $L_p$ Distance.} In Figure \ref{Prot: fpsi-p-high-px}, we extend our fuzzy PSI protocol to $L_p$ distance over high-dimensional sets. 
% We can further optimize the bit length when invoking OT-based fuzzy matching. Our main observation is that for points $\vecq_j$ and $\vecw_i$, the same identifier $\mathsf{id}_{\vecq_j} = \mathsf{id}_{\vecw_i}$ means that $\mathsf{dist}_{\infty}(\vecw_i, \vecq_j) \le \delta$ and hence $|w_{i,k} - q_{j,k}| \le \delta$ for all dimension $k \in [d]$.
% As shown in our OT-based fuzzy matching protocol in Figure \ref{Prot: fmatch-p-2pc}, the first step is to compute $t_{k} = w_{i,k} - q_{j,k}$. Therefore, we can use small bit length to represent the secret shares of $t_{k}$ when executing the remaining operations that are dependent on the bit length of secret sharing.

The security analysis is shown in the following, and its proof is deferred
to Appendix \ref{proof: fuzzy PSI L-p-high-px}.

% The protocol combines both our OPRF-based and OT-based fuzzy matching.
% Our main observation is that after executing the steps 1-5 of the $L_\infty$ fuzzy PSI in Figure \ref{Prot: fpsi-infty-high-px}, $\{\hat{s}_j\}_{j \in [m]}$ and $\{{s}_i\}_{i \in [n]}$ can viewed as unique identifiers for points.
% It satisfies that any two elements $\vecq_j \in Q$ and $\vecw_i \in W$ satisfying $\mathsf{dist}_\infty(\vecq_j, \vecw_i) \leq \delta$, it holds that ${\hat{s}}_i = {s}_i$. 

% Since $\mathsf{dist}_p(\vecq_j, \vecw_i) < \mathsf{dist}_\infty(\vecq_j, \vecw_i)$, we need to further determine whether $\mathsf{dist}_p(\vecq_j, \vecw_i) \leq \delta$. To this end, we use our $OT$-based fuzzy matching to compare.
% Similar to our fuzzy PSI protocol for high-dimensional sets, we employ Cuckoo hash to xxxxxxxxxxxxxxxxxxx.

\begin{theorem}
\label{thm: fuzzy PSI L-p-high-px}
    The protocol $\Pi_\mathsf{FPSI\text{-}high}^{L_p}$ in Figure \ref{Prot: fpsi-p-high-px} realizes the functionality \Func[FPSI] for $L_p$ distance in Figure \ref{Func:FPSI} against semi-honest adversaries {in the $(\Func[OPPRF], \Func[FMatch], \Func[OT])$-hybrid model} if both parties' sets satisfy the globally disjoint assumption.
\end{theorem}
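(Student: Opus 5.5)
We argue correctness first and then simulate each corrupted party in the $(\Func[OPPRF], \Func[FMatch], \Func[OT])$-hybrid model, reusing the identifier analysis behind Theorem~\ref{thm: fuzzy PSI L-inf-high-px}.

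\textbf{Correctness.} Steps 1--5 of $\Pi_\mathsf{FPSI\text{-}high}^{L_\infty}$ produce identifiers $\mathsf{id}_{\vecq_j}=s_j$ and $\mathsf{id}_{\vecw_i}=\hat{s}_i$. We first establish that, except with probability $O(2^{-\lambda})$ given the choices of $\ell_1,\ell_2$, the following holds: $\mathsf{id}_{\vecq_j}=\mathsf{id}_{\vecw_i}$ if and only if $\mathsf{dist}_\infty(\vecq_j,\vecw_i)\le\delta$.

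\begin{itemize}
\item For a fixed dimension $k$, Corollary~\ref{corollary: Distinctness of PxTrie} makes the first OPPRF's programmed keys well defined across all $j$.
\item Corollary~\ref{corollary: Distinctness of PxPath} and Corollary~\ref{corollary: uniqueness} ensure each $\hat r_{i,k,t}$ hits at most one $r_{j,k}$. Non-programmed outputs collide with some $r_{j,k}$ only with probability about $(m+nu')^2/2^{\ell_1}$.
\item Hence the keys of the second OPPRF are distinct, so it is well defined.
\item We then have $s_{j,k}=\hat s_{i,k}$ exactly when $|q_{j,k}-w_{i,k}|\le\delta$.
\item Summing over $k$, a union bound over the $mn$ pairs with $\ell_2=\lambda+\log m+\log n$ gives the claim.
\end{itemize}

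Global disjointness further implies that each $\vecw_i$ is $L_\infty$-close to at most one $\vecq_j$, and vice versa. So every receiver identifier has at most one matching sender identifier.

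Cuckoo and simple hashing with shared $H_\tau$ then place the unique candidate $\vecq$ for $\vecw=T_w[i]$ in bin $i$ under key $\mathsf{id}_\vecw\|\tau$. The receiver thus obtains $\hat\vecg_i=\vecq-\vecg_i$, and $\Func[FMatch]$ on $(\vecg_i,\vecw-\hat\vecg_i)$ evaluates $1\{\mathsf{dist}_p(\vecq,\vecw)\le\delta\}$. Differences are taken mod $2^{\ell'}$, which by the choice of $\ell'$ loses no information when $|q_k-w_k|\le\delta$. Since $L_p\ge L_\infty$, every true $L_p$ match survives the filter, so the output $\vecz_i+\hat\vecg_i=\vecq$ is correct. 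If no candidate exists, $\hat\vecg_i$ is random. The FMatch input is then $\vecw$ minus a uniform mask, and a spurious match can reveal only $\vecg_i+\hat\vecg_i$, a random value rather than a sender point. We need to handle this case carefully: we would argue that the domain is large ($\ell'$ includes $\lambda$), so the probability of a spurious match is negligible.

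\textbf{Corrupted sender.} The simulator samples random outputs for the sender's OPPRF receiver role in steps 1--5 ($s_{j,k}$ uniform), gives nothing further from $\Func[FMatch]$, and simulates OT with no output. The sender's view in the real execution consists of OPPRF outputs $s_{j,k}$. Each is either a programmed $\hat s_{i,k}$, fresh uniform and used once by distinctness, or a random function value, so the view is uniform.

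\textbf{Corrupted receiver.} This is the main obstacle. The simulator is given $W$ and the output $Z$. It must produce:
\begin{itemize}
\item $\hat r_{i,k,t}$: uniform values, with consistency forced among those that equal some $r_{j,k}$;
\item $\hat\vecg_i$: uniform in $\ZZ_{2^{\ell'}}^d$;
\item the FMatch bits $b_i$: set to $1$ exactly for bins whose $\vecw$ has a partner in $Z$;
\item the OT outputs: $\vecz_i=\vecq-\hat\vecg_i$ for that partner $\vecq\in Z$.
\end{itemize}

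The difficulty is that the real first-OPPRF outputs reveal which of the receiver's prefixes map to the same $r_{j,k}$. This is exactly per-dimension partial-match information that $Z$ does not determine. We plan to resolve it with global disjointness plus Corollary~\ref{corollary: Distinctness of PxPath}: distinct receiver points have disjoint paths, and each path meets a sender trie in at most one node. Every $r_{j,k}$ is therefore seen at most once by the receiver, so all $\hat r_{i,k,t}$ are jointly uniform and independent. Likewise each $\hat\vecg_i=\vecq-\vecg_i$ uses a fresh uniform mask $\vecg_i$, since $\vecg_i$ is per bin. We then conclude by a hybrid argument, accounting for the negligible failure events: Cuckoo hashing failure, OPPRF collisions, and spurious FMatch success.
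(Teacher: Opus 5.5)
Your proposal is correct and is essentially the paper's proof. It uses the same simulators: uniform $s_{j,k}$ for the sender, and for the receiver uniform $\hat r_{i,k,t}$ and $\hat\vecg_i$, match bits $b_i$ read off from $Z$ and $W$ via global disjointness, and OT payload $\vecz_i=\vecq-\hat\vecg_i$. The paper calls the $L_\infty$ simulators for steps 1--5 as black boxes where you inline them, and your explicit correctness argument (identifier collisions, mod-$2^{\ell'}$ wrap-around, the negligible spurious-$\mathsf{FMatch}$ event) fills in what the paper omits, leaving implicit only the trivially simulated handles $\OOO^F$, $\OOO^G$, $\OOO^{F'}$.
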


\section{Evaluation}

We provide the full-fledged implementation of our protocols, and our code is available at {\color{blue}\url{https://github.com/Th0masAndy/Scalable-FPSI}} and {\color{blue}\url{https://doi.org/10.5281/zenodo.20607518}}.
We conduct our evaluation on a single server, equipped with two AMD EPYC 9554 processors and 512 GB of RAM. 
The sender and receiver are emulated as two separate threads within a single process. Each experiment is repeated ten times, and we report the average values. We set the computational security parameter to $\kappa = 128$ and the statistical security parameter to $\lambda = 40$. Network conditions are simulated using the Linux \texttt{tc} command, configured with a bandwidth of 10 Gbps and a latency of 0.02 ms, following recent works~\cite{van2024fuzzy, piske2025distance, bui2025new}.

\textbf{Implementation details.} We provide the implementation details as follows. For OPPRF, we adopt the implementation from~\cite{raghuraman2022blazing}, which includes the classic construction based on OKVS and VOLE-OPRF. For OT, we use silent OT from~\cite{libOTe}. For MSB, we build on the implementation of~\cite{huang2022cheetah}, replacing their OT component with~\cite{libOTe}. All hash functions are instantiated using BLAKE3. For cuckoo hashing parameters, we use 3 hash functions and no stash.

\textbf{Comparison baselines.} 
We compare our protocols with the state-of-the-art fuzzy PSI schemes~\cite{van2025, piske2025distance}, as their set assumptions align with ours, and all these protocols rely on OPRF (and OT). While~\cite{piske2025distance} theoretically supports arbitrary receiver sets, their experimental evaluation adopts the same assumptions as ours and as in~\cite{van2025}.
Since~\cite{van2025} does not provide an implementation, for a fair comparison, we implement their constructions with prefix-trie optimizations using the same components as in our protocols.
We also incorporate their optimization for solving Knapsack problem~\cite{horowitz1974computing}, which is required in their protocols for $L_\infty$ distances to reduce computation overhead.

% The work \cite{van2025} does not provide implementations. For a fair comparison, we additionally implement their constructions with prefix-trie optimizations, using the same components as in our implementation.
% We also implementation the optimizations of solving a Knapsack problem~\cite{horowitz1974computing}, which is required in their protocols for $L_\infty$ distances to reduce computation overhead.

% \subsubsection{Our Implementation for van Baarsen \& Pu~\cite{van2025}}
% Since their work does not provide a concrete performance evaluation, we additionally implement the protocol in~\cite{van2025} to enable a fair comparison. 

% We implement their construction primarily based on OPPRF, reusing the same components as in our implementation. 

% \blue{We observe that their protocol for $L_\infty$ requires solving a Knapsack problem~\cite{horowitz1974computing} with complexity $(\log \delta)^{\frac{d}{2}}$. When combined with other factors, the overall complexity becomes $n \cdot 2^d \cdot (\log \delta)^{\frac{d}{2}}$, a cost that is not accounted for in their performance estimates. Although this step is executed in plaintext, we find that it constitutes a significant bottleneck as $\delta$ and $d$ increase.}

% \subsection{Concrete Performance and Comparison}

\subsection{Performance for Low-dimensional Sets}
\label{Performance for Low-dimensional Sets}

We evaluate our protocols on low-dimensional sets under $L_\infty$, $L_1$, and $L_2$ distances, and compare them with the state-of-the-art protocols in~\cite{van2025, piske2025distance}. The results are reported in Tables~\ref{table: L0}, \ref{table: L1}, and~\ref{table: L2}. Since~\cite{piske2025distance} supports only 8-bit inputs, we do not report its performance for $\delta = 1024$. 
% Notably, this is the first evaluation of fuzzy PSI with input size $2^{18}$, and prior works incur prohibitive overhead even at an input size of $2^{16}$.

% We set the input size to $m = n =\{ 2^{12}, 2^{16},2^{18}\}$, the dimension to $d = \{2, 4, 8\}$ (low dimension), and the distance threshold to $\delta = \{16, 64, 256, 1024\}$. 

% To demonstrate scalability, we are the first to evaluate fuzzy PSI at an input size of $2^{18}$. In contrast, prior works incur prohibitive overhead even at an input size of $2^{16}$.

\begin{table*}[!t]
\centering
\caption{Comparisons of fuzzy PSI over low-dimensional sets for $L_\infty$ distance. "*" indicates unsupported or no implementation for the specific parameter setting. "---" indicates out of memory or too long running time. The best result is marked in {\color{blue!30}blue}.}
\label{table: L0}
\resizebox{0.91\textwidth}{!}{%
\begin{tabular}{|c|c|c|c|c|c|c|c|c|c|c|}
\hline
\multirow{2}{*}{\makecell{Set size \\ {$m = n$}} }    & \multirow{2}{*}{\makecell{Dim. \\ {$d$}}} & \multirow{2}{*}{Protocol}         &\multicolumn{2}{c|}{Threshold  $\delta = 16$} & \multicolumn{2}{c|}{Threshold  $\delta = 64$} & \multicolumn{2}{c|}{Threshold  $\delta = 256$} & \multicolumn{2}{c|}{Threshold  $\delta = 1024$} \\ \cline{4-11}                                                       
                       &                    &                        & Time (s)      & Comm. (MB)     & Time (s)      & Comm. (MB)      & Time (s)       & Comm. (MB)      & Time (s)        & Comm. (MB)      \\ \hline
\multirow{9}{*}{$2^{12}$}  & \multirow{3}{*}{2} & van Baarsen \& Pu~\cite{van2025}                   & {\cellcolor{blue!13}0.14}      & 5.65       & {\cellcolor{blue!13}0.15}      & 7.34       & {\cellcolor{blue!13}0.21}      & 9.07       & {\cellcolor{blue!13}0.23}        & 10.76      \\
                       &                    & Piske et al.~\cite{piske2025distance}                    & 1.92      & {\cellcolor{blue!13}2.21}       & 1.92      & {\cellcolor{blue!13}2.21}       & 1.89       & {\cellcolor{blue!13}2.21}       & *           & *          \\
                       &                    & Ours                   & {1.00}      & 4.07       & 0.94      & 4.07       & 0.99       & 4.07       & 0.96        & {\cellcolor{blue!13}4.07}       \\ \cline{2-11}
                       & \multirow{3}{*}{4} & van Baarsen \& Pu~\cite{van2025}                   & {\cellcolor{blue!13}0.85}      & 35.35      & {\cellcolor{blue!13}1.08}      & 46.85      & 1.34       & 58.53      & 1.70        & 70.11      \\
                       &                    & Piske et al.~\cite{piske2025distance}                    & 4.57      & 8.98       & 4.61      & 8.98       & 4.48       & 8.98       & *           & *          \\
                       &                    & Ours                   & 1.22      & {\cellcolor{blue!13}8.57}       & 1.16      & {\cellcolor{blue!13}8.57}       & {\cellcolor{blue!13}1.22}       & {\cellcolor{blue!13}8.57}       & {\cellcolor{blue!13}1.14}        & {\cellcolor{blue!13}8.57}       \\ \cline{2-11}
                       & \multirow{3}{*}{8} & van Baarsen \& Pu~\cite{van2025}                   & 64.27     & 1053.25    & 145.74    & 1401.62    & 348.73     & 1755.75    & 698.87      & 2106.82    \\
                       &                    & Piske et al.~\cite{piske2025distance}                    & 33.18     & 182.91     & 32.93     & 182.91     & 33.07      & 182.91     & *           & *          \\
                       &                    & Ours                   & {\cellcolor{blue!13}4.83}      & {\cellcolor{blue!13}105.06}     & {\cellcolor{blue!13}4.81}      & {\cellcolor{blue!13}105.06}     & {\cellcolor{blue!13}4.84}       & {\cellcolor{blue!13}105.06}     & {\cellcolor{blue!13}4.78}        & {\cellcolor{blue!13}105.06}     \\ \hline
\multirow{9}{*}{$2^{16}$} & \multirow{3}{*}{2} & van Baarsen \& Pu~\cite{van2025}                   & {\cellcolor{blue!13}1.68}      & 84.14      & {\cellcolor{blue!13}2.25}      & 111.23     & {\cellcolor{blue!13}2.83}       & 138.72     & {\cellcolor{blue!13}3.43}        & 165.97     \\
                       &                    & Piske et al.~\cite{piske2025distance}                    & 37.46     & 34.25      & 38.73     & 34.25      & 37.37      & 34.25      & *           & *          \\
                       &                    & Ours                   & 4.92      & {\cellcolor{blue!13}32.20}      & 5.06      & {\cellcolor{blue!13}32.20}      & 4.90       & {\cellcolor{blue!13}32.20}      & 4.96        & {\cellcolor{blue!13}32.20}      \\ \cline{2-11}
                       & \multirow{3}{*}{4} & van Baarsen \& Pu~\cite{van2025}                   & 13.87     & 560.04     & 18.89     & 744.36     & 24.68      & 931.54     & 29.34       & 1117.24    \\
                       &                    & Piske et al.~\cite{piske2025distance}                    & 109.56    & 142.01     & 87.58     & 142.01     & 85.56      & 142.01     & *           & *          \\
                       &                    & Ours                   & {\cellcolor{blue!13}8.46}      & {\cellcolor{blue!13}95.39}      & {\cellcolor{blue!13}8.38}      & {\cellcolor{blue!13}95.39}      & {\cellcolor{blue!13}8.29}       & {\cellcolor{blue!13}95.39}      & {\cellcolor{blue!13}8.35}        & {\cellcolor{blue!13}95.39}      \\ \cline{2-11}
                       & \multirow{3}{*}{8} & van Baarsen \& Pu~\cite{van2025}                   & 1278.99   & 16879.90   & 2653.76   & 22465.60   & 5830.80    & 28027.10   & 11697.40    & 33637.40   \\
                       &                    & Piske et al.~\cite{piske2025distance}                    & 579.98    & 2922.36    & 582.74    & 2922.36    & 574.98     & 2922.36    & *           & *          \\
                       &                    & Ours                   & {\cellcolor{blue!13}79.56}     & {\cellcolor{blue!13}1625.46}    & {\cellcolor{blue!13}79.00}     & {\cellcolor{blue!13}1625.46}    & {\cellcolor{blue!13}78.74}      & {\cellcolor{blue!13}1625.46}    & {\cellcolor{blue!13}79.04}       & {\cellcolor{blue!13}1625.46}  \\ \hline 
\multirow{9}{*}{$2^{18}$} & \multirow{3}{*}{2} & van Baarsen \& Pu~\cite{van2025}                   & {\cellcolor{blue!13}6.82}	&335.67	&{\cellcolor{blue!13}9.59}	&444.02	&{\cellcolor{blue!13}11.93}	&554.06	&{\cellcolor{blue!13}14.80}	&663.20
     \\
                       &                    & Piske et al.~\cite{piske2025distance}                    & 164	&136.48	&199.41	&136.48	 &* &* &* &*
          \\
                       &                    & Ours                   & 19.55	&{\cellcolor{blue!13}121.66}	&19.43	&{\cellcolor{blue!13}121.66}	&20.01	&{\cellcolor{blue!13}121.66}	&19.57	&{\cellcolor{blue!13}121.66}
    \\ \cline{2-11}
                       & \multirow{3}{*}{4} & van Baarsen \& Pu~\cite{van2025}                   & 58.44	&2241.18	&80.09	&2979.16	&108.71	&3728.65	&130.20	&4472.26
    \\
                       &                    & Piske et al.~\cite{piske2025distance}                    & 394.26	&567.23	&399.87	&567.23	&418.17	&567.23     & *           & *          \\
                       &                    & Ours                   & {\cellcolor{blue!13}36.20}	&{\cellcolor{blue!13}372.31}	&{\cellcolor{blue!13}35.41}	&{\cellcolor{blue!13}372.31}	&{\cellcolor{blue!13}36.03}	&{\cellcolor{blue!13}372.31}	&{\cellcolor{blue!13}35.89}	&{\cellcolor{blue!13}372.31}
      \\ \cline{2-11}
                       & \multirow{3}{*}{8} & van Baarsen \& Pu~\cite{van2025}                   & ---	 &---	 &---	 &---	&---	&---	&---	&---
   \\
                       &                    & Piske et al.~\cite{piske2025distance}                    & 2591.22	&11687.10	&2573.38	&11687.10	&2606.11	&11687.10	    & *           & *          \\
                       &                    & Ours                   & {\cellcolor{blue!13}379.89}	&{\cellcolor{blue!13}6494.57}	&{\cellcolor{blue!13}385.62}	&{\cellcolor{blue!13}6494.57}	&{\cellcolor{blue!13}388.46}	&{\cellcolor{blue!13}6494.57}	&{\cellcolor{blue!13}384.87}	&{\cellcolor{blue!13}6494.57}
  \\ \hline 
\end{tabular}%
}
\end{table*}

\textbf{Results of $L_\infty$ distance.}
As shown in Table~\ref{table: L0}, our protocol for $L_\infty$ distance demonstrates a significant advantage over~\cite{piske2025distance,van2025}, especially for relatively large parameter settings.
% . In particular, our protocol outperforms that of for relatively large parameter settings. 
For example, when $m = n = 2^{16}$ and $d = 8$, our protocol is $16{\sim}147\times$ faster than~\cite{van2025} depending on $\delta$, as their running time grows with $(\log \delta)^{d/2}$. 
% The reason is that the computational overhead of~\cite{van2025} grows as $(\log \delta)^{d/2}$. 
Although this step is executed in plaintext, when combined with the other components, the overall complexity becomes $n \cdot 2^d \cdot (\log \delta)^{d/2}$, which constitutes a significant bottleneck as both $\delta$ and $d$ increase.
Moreover, our construction further reduces communication by up to $20\times$, since unlike our OT-based fuzzy matching with reduced input domain, the protocol of \cite{van2025} incur the communication overhead dependent on the statistical security parameter.
% Our protocols benefit from a reduced input domain, whereas the bit length in~\cite{van2025} is $O(\lambda)$. As a result, 

Compared to~\cite{piske2025distance}, our protocol supports a larger input domain (e.g., 64-bit values in our implementation) while achieving lower running time and communication overhead. Specifically, when $m=n=2^{18}$ and $d=8$, our protocol is $7\times$ faster and incurs $1.8\times$ less communication cost than~\cite{piske2025distance}. 
When extended to larger input domains to support large $\delta$, the overhead of~\cite{piske2025distance} would increase substantially, further highlighting the advantage of our protocols.

% \blue{\textbf{Results of $L_p$ distances.} Tables~\ref{table: L1} and~\ref{table: L2} in Appendix \ref{appendix: Additional Experimental Results} show the results of $L_1$ and $L_2$ distances. We observe that our protocols for the $L_\infty$, $L_1$, and $L_2$ distance metrics exhibit similar overall overheads, because the constructions execute identical operations, differing exclusively in the $\mathcal{O}(m)$ invocations of fuzzy matching (instantiated for either $L_\infty$ or $L_p$). Due to our input domain reduction optimization, the concrete cost of these fuzzy matching operations does not dominate the total overhead. Compared to~\cite{van2025}, our protocol achieves up to a $5\times$ improvement in computational efficiency and up to a $34\times$ reduction in communication cost.
% As shown in Table~\ref{tab:complexities}, the overhead for $L_\infty$ distance of~\cite{van2025} includes a multiplicative factor of $2^d \cdot d \cdot \log \delta$, while that for $L_p$ distance includes multiplicative factors of $d \cdot \delta$ and $2^d \cdot d$. 
% This causes for smaller $\delta$, the $L_\infty$ protocol incurs higher overhead due to the dominant term $2^d \cdot d \cdot \log \delta$; for larger $\delta$, the $L_p$ protocol becomes more expensive due to the dominant term $d \cdot \delta$.
% }

{
\textbf{Results for $L_p$ Distances.} Tables~\ref{table: L1} and~\ref{table: L2} in Appendix~\ref{appendix: Additional Experimental Results} present the results for $L_1$ and $L_2$ distances. We observe that our protocols across the $L_\infty$, $L_1$, and $L_2$ distance metrics exhibit comparable overall overheads. This is because the constructions execute identical operations, differing exclusively in the $\mathcal{O}(m)$ invocations of fuzzy matching (instantiated for either $L_\infty$ or $L_p$). Due to our input domain reduction optimization, the concrete cost of these fuzzy matching operations does not dominate the total execution time.}

% \blue{
% Compared to~\cite{van2025}, our protocol achieves up to a $5\times$ computational speedup and a $34\times$ reduction in communication bandwidth. 
% Besides, we observe that for smaller values of $\delta$, the $L_\infty$ protocol in~\cite{van2025} incurs higher overhead than their $L_p$ protocol; conversely, for larger $\delta$, their $L_p$ protocol becomes more expensive. 
% This disparity stems from their asymptotic complexities.
% As outlined in Table~\ref{tab:complexities},
% the overhead of their $L_\infty$ protocol contains a multiplicative factor of $2^d \cdot d \cdot \log \delta$, whereas their $L_p$ construction is bottlenecked by factors of $d \cdot \delta$ and $2^d \cdot d$.
% }

{
Compared to~\cite{van2025}, our protocol achieves up to a $5\times$ computational speedup and a $34\times$ reduction in communication bandwidth. Furthermore, we observe a distinct threshold-dependent performance inversion in~\cite{van2025}: for smaller values of $\delta$, their $L_\infty$ protocol incurs higher overhead than their $L_p$ counterpart, whereas for larger $\delta$, their $L_p$ protocol becomes more expensive. This disparity stems from their asymptotic complexities. As outlined in Table~\ref{tab:complexities}, the overhead of their $L_\infty$ protocol scales with a multiplicative factor of $2^d d  \log \delta$ (dominating for small $\delta$), while their $L_p$ construction is bottlenecked by the terms $d \delta$ (dominating for large $\delta$) and $2^d  d$.
}

Compared to~\cite{piske2025distance}, when $m = n = 2^{18}$, $d = 4$, and $\delta = 256$, our protocol achieves up to a $25\times$ speedup and a $17\times$ reduction in communication cost for $L_1$ distance.
Since the protocol in~\cite{piske2025distance} supports only the setting $d = 2$ and $\delta \in \{10, 30\}$ for the $L_2$ distance, we do not include it in the Table~\ref{table: L2}. We evaluate our protocol under the same settings ($d=2$ and $\delta=\{8,32\}$), and observe that it reduces the communication and computational cost by up to $4\times$ and $10\times$, respectively. 

% The same advantage can also be observed for the $L_1$ and $L_2$ distances, with the results reported in Table~\ref{table: L1} and Table~\ref{table: L2}. Since the protocol in~\cite{piske2025distance} supports only the setting $d = 2$ and $\delta \in \{10, 30\}$ for the $L_2$ distance, we do not include it in the figure xxx. Compared with~\cite{van2025}, our protocol achieves up to a $5\times$ improvement in computational efficiency and a $34\times$ reduction in communication cost. Compared with~\cite{piske2025distance}, it achieves up to a $25\times$ improvement in computational efficiency and a $17\times$ reduction in communication cost, when $m = n = 2^{18}$, $d = 4$, and $\delta = 256$.

% In some special cases with small input sizes, our protocol is less efficient. We note that our design achieves substantially better performance improvements for large-scale inputs, since the overhead of the underlying silent OT~\cite{boyle2019efficient} can be amortized. 

% Moreover, since our protocol scales linearly with $\log \delta$, and in practical implementations $\log \delta$ must be aligned to multiples of 8 (i.e., one byte), the performance of our protocol remains nearly unchanged as $\delta$ increases from 16 to 1024 and may vary for bigger distance thresholds.

\blue{

}

In some special cases with small parameters, our protocol is less efficient, as the overhead of the underlying silent OT~\cite{boyle2019efficient} cannot be effectively amortized. 
Furthermore, the performance of our protocols remains nearly unchanged as $\delta$ increases from $16$ to $1024$, because in practice we round parameters that depend on $\log \delta$ to multiples of $8$ to facilitate efficient implementation.

% In some special cases with small parameters, our protocol is less efficient, since the overhead of the underlying silent OT~\cite{boyle2019efficient} can not be amortized.
% Besides, the performance of our protocols remains nearly unchanged as $\delta$ increases from 16 to 1024, because we align the parameter related to $\log \delta$ to be multiples of 8 for practical implementations.

\subsection{Performance for High-dimensional Sets}
\label{Performance for High-dimensional Sets}

\begin{table*}[!t]
\centering
\caption{Comparisons of fuzzy PSI over high-dimensional sets for $L_\infty$ distance. "*" indicates unsupported or no implementation for the specific parameter setting. The best result is marked in {\color{blue!30}blue}.}
\label{table: L_infty high}
\resizebox{0.91\textwidth}{!}{%
\begin{tabular}{|c|c|c|c|c|c|c|c|c|c|c|}
\hline
\multirow{2}{*}{\makecell{Set size \\ {$m = n$}} }    & \multirow{2}{*}{\makecell{Dim. \\ {$d$}}} & \multirow{2}{*}{Protocol}         &\multicolumn{2}{c|}{Threshold  $\delta = 16$} & \multicolumn{2}{c|}{Threshold  $\delta = 64$} & \multicolumn{2}{c|}{Threshold  $\delta = 256$} & \multicolumn{2}{c|}{Threshold  $\delta = 1024$} \\ \cline{4-11}                                                       
                       &                    &                        & Time (s)      & Comm. (MB)     & Time (s)      & Comm. (MB)      & Time (s)       & Comm. (MB)      & Time (s)        & Comm. (MB)      \\ \hline
\multirow{6}{*}{$2^{12}$}  & \multirow{2}{*}{16} & van Baarsen \& Pu~\cite{van2025}                   & 0.48	&47.24	&1.66	&177.97	&6.61	&701.40	&25.99	&2797.59

      \\
                       &                    & Ours                   & {\cellcolor{blue!13}0.42}	&{\cellcolor{blue!13}27.26}	&{\cellcolor{blue!13}0.47}	&{\cellcolor{blue!13}35.37} &	{\cellcolor{blue!13}0.61}	&{\cellcolor{blue!13}43.56}	&{\cellcolor{blue!13}0.70}	&{\cellcolor{blue!13}51.69}
       \\ \cline{2-11}
                       & \multirow{2}{*}{32} & van Baarsen \& Pu~\cite{van2025}                   & 0.91	&94.04	&3.42	&355.64	&13.35	&1403.19	&54.17	&5597.20

          \\
                       &                    & Ours                   & {\cellcolor{blue!13}0.73}	&{\cellcolor{blue!13}53.56}	&{\cellcolor{blue!13}0.95}	&{\cellcolor{blue!13}69.86}	&{\cellcolor{blue!13}1.12}	&{\cellcolor{blue!13}86.35}	&{\cellcolor{blue!13}1.40}	&{\cellcolor{blue!13}102.67}

       \\ \cline{2-11}
                       & \multirow{2}{*}{64} & van Baarsen \& Pu~\cite{van2025}                   &1.82	&187.68	&7.24	&711.17	&27.93	&2807.27	&109.52	&11199.90

    \\
                       
                       &                    & Ours                   & {\cellcolor{blue!13}1.43}	&{\cellcolor{blue!13}106.39}	&{\cellcolor{blue!13}1.97}	&{\cellcolor{blue!13}138.85}	&{\cellcolor{blue!13}2.50}	&{\cellcolor{blue!13}171.84}	&{\cellcolor{blue!13}3.00}	&{\cellcolor{blue!13}204.53}

     \\ \hline
\multirow{6}{*}{$2^{16}$} & \multirow{2}{*}{16} & van Baarsen \& Pu~\cite{van2025}                   & 7.68	&750.94	&28.84	&2847.02	&116.36	&11239.50	&*  &*

     \\
                       
                       &                    & Ours                   & {\cellcolor{blue!13}6.24}	&{\cellcolor{blue!13}424.14}	&{\cellcolor{blue!13}8.06}	&{\cellcolor{blue!13}554.11}	&{\cellcolor{blue!13}10.08}	&{\cellcolor{blue!13}686.16}	&{\cellcolor{blue!13}12.15}	&{\cellcolor{blue!13}817.09}

      \\ \cline{2-11}
                       & \multirow{2}{*}{32} & van Baarsen \& Pu~\cite{van2025}                   & 15.94	&1501.29	&62.01	&5695.39	&261.13	&22430.60	& *	& *

    \\
                       
                       &                    & Ours                   & {\cellcolor{blue!13}12.85}	&{\cellcolor{blue!13}846.85}	&{\cellcolor{blue!13}16.52}	&{\cellcolor{blue!13}1106.96}	&{\cellcolor{blue!13}20.75}	&{\cellcolor{blue!13}1371.10}	&{\cellcolor{blue!13}24.75}	&{\cellcolor{blue!13}1633.09}

      \\ \cline{2-11}
                       & \multirow{2}{*}{64} & van Baarsen \& Pu~\cite{van2025}                   & 33.03	&3002.60	&126.33	&11395.90 & *	& *	& *	& *

   \\
                       
                       &                    & Ours                   & {\cellcolor{blue!13}27.91}	&{\cellcolor{blue!13}1692.64}	&{\cellcolor{blue!13}33.68}	&{\cellcolor{blue!13}2213.14}	&{\cellcolor{blue!13}42.52}	&{\cellcolor{blue!13}2741.69}	&{\cellcolor{blue!13}50.47}	&{\cellcolor{blue!13}3266.10}

  \\ \hline 
  \multirow{6}{*}{$2^{18}$} & \multirow{2}{*}{16} & van Baarsen \& Pu~\cite{van2025}                   & 32.08	&3005.60	&140.46	&11398.90	&*	&*	&*	&*

     \\
                       
                       &                    & Ours                   & {\cellcolor{blue!13}25.36}	&{\cellcolor{blue!13}1695.64}	&{\cellcolor{blue!13}32.86}	&{\cellcolor{blue!13}2216.14}	&{\cellcolor{blue!13}42.69}	&{\cellcolor{blue!13}2744.69}	&{\cellcolor{blue!13}50.47}	&{\cellcolor{blue!13}3269.10}

      \\ \cline{2-11}
                       & \multirow{2}{*}{32} & van Baarsen \& Pu~\cite{van2025}                   & 66.62	&6009.38	&328.44	&22744.10	&*	&*	& *	& *

    \\
                       
                       &                    & Ours                   & {\cellcolor{blue!13}52.51}	&{\cellcolor{blue!13}3388.23}	&{\cellcolor{blue!13}69.22}	&{\cellcolor{blue!13}4429.27}	&{\cellcolor{blue!13}88.19}	&{\cellcolor{blue!13}5487.53}	&{\cellcolor{blue!13}105.49}	&{\cellcolor{blue!13}6536.69}

      \\ \cline{2-11}
                       & \multirow{2}{*}{64} & van Baarsen \& Pu~\cite{van2025}                   & 136.65	&12021.20	& *	& *	& *	& *	& *	& *

   \\
                       
                       &                    & Ours                   & {\cellcolor{blue!13}110.20}	&{\cellcolor{blue!13}6775.05}	&{\cellcolor{blue!13}150.63}	&{\cellcolor{blue!13}8859.02}	&{\cellcolor{blue!13}186.29}	&{\cellcolor{blue!13}10975.40}	&{\cellcolor{blue!13}245.43}	&{\cellcolor{blue!13}13075.30}

  \\ \hline
\end{tabular}%
}
\end{table*}

% We evaluate our protocol over high-dimensional datasets for all three distance metrics, and compare it with~\cite{van2025}, which uses the same assumption as ours and is the state-of-the-art fuzzy PSI protocol for high-dimensional settings. 

We evaluate our protocols for high-dimensional sets under three distance metrics and compare them with~\cite{van2025}, which adopts the same set assumptions as ours and is the state-of-the-art fuzzy PSI for high-dimensional settings.

% We keep all other parameters identical to those in the low-dimensional experiments, while increasing the dimension to $d =\{ 16, 32, 64\}$.

\textbf{Results of $L_\infty$ distance.} As shown in Table~\ref{table: L_infty high}, the overhead of our protocol for $L_\infty$ scales logarithmically with $\delta$, outperforming the protocol in~\cite{van2025} across all parameter settings, whose complexity grows linearly with $\delta$. 
In particular, our protocol exhibits a substantial advantage for large distance threshold $\delta$. When $m = n = 2^{12}$, $d = 64$, and $\delta = 1024$, our protocol requires $36\times$ less running time and $54\times$ less communication than~\cite{van2025}. 
We note that the protocol in~\cite{van2025} requires a large number of OPPRF invocations under some large parameter settings, and no concrete implementation is available for these cases. We thus mark these entries as “*”.

% , no concrete implementation is available when the input size exceeds $2^{31}$. We therefore mark these cases as “*”.

\textbf{Results of $L_p$ distances.} For the $L_1$ and $L_2$ distances, our protocol incurs additional overhead due to the invocation of OT-based fuzzy matching. Nevertheless, it still significantly outperforms~\cite{van2025} for most of the parameter settings. The results are presented in Tables~\ref{table: L_1 high} and~\ref{table: L_2 high} in Appendix \ref{appendix: Additional Experimental Results}. Specifically, our protocol achieves up to a $20\times$ reduction in running time and a $38\times$ reduction in communication cost.

\section{Conclusion}

% We present the scalable fuzzy PSI protocols for general $L_{p \in [1, \infty]}$ distance for both low- and high-dimensional sets.
% The core techniques are two fuzzy matching protocols, one based on our customized role-reversed OPRF and another on OT, allowing flexible instantiations depending on the input bit length. 
% We further present a new fuzzy PSI framework for low-dimensional sets based on our dual-layer hashing, which significantly reduces the number of fuzzy matching invocations.
% We additionally improve fuzzy PSI protocols for high-dimensional sets.
% Overall, our work improves previous fuzzy PSI works both concretely and asympotically.

% We present scalable fuzzy PSI protocols for general $L_{p \in [1, \infty]}$ distance, supporting both low- and high-dimensional sets. 
% Our core techniques include two asymptotically efficient fuzzy matching protocols: one based on a customized role-reversed OPRF and another based on OT, allowing flexible instantiations depending on the input bit length. 
% For low-dimensional sets, we propose a dual-layer hashing framework that substantially reduces the number of fuzzy matching invocations, while for high-dimensional sets, we improve existing fuzzy PSI constructions to achieve linear complexity in set size, dimension, and log complexity in the distance threshold. 
% Overall, our work advances prior fuzzy PSI protocols both concretely and asymptotically.

We present scalable fuzzy PSI protocols for general $L_{p \in [1, \infty]}$ distance, supporting both low- and high-dimensional sets. 
Our core techniques include two asymptotically efficient fuzzy matching protocols: one based on a customized role-reversed OPRF and another on OT.
% , enabling flexible instantiations depending on the input bit length. 
For low-dimensional sets, we introduce a dual-layer hashing framework that significantly reduces the number of fuzzy matching invocations, while for high-dimensional sets, we enhance existing fuzzy PSI constructions to achieve linear complexity in set size and dimension, with logarithmic complexity in the distance threshold. 
Overall, our work advances prior fuzzy PSI protocols both concretely and asymptotically.

\section*{Acknowledgments}
We would like to express our deepest gratitude to the reviewers for their invaluable help and constructive comments.

This work is supported by the Lee Kong Chian Chair Professorship, Singapore Management University.
Guomin Yang is supported by the Lee Kong Chian Fellowship awarded by Singapore Management University. 
This work is supported by the Singapore Ministry of Education (MOE) Academic Research Fund (AcRF) Tier 1 grants (Proposal ID: 24-SIS-SMU-052, 24-SIS-SMU-126). 
This work is also supported by the Nanyang Technological University Centre in Computational Technologies for Finance (NTU-CCTF), the National Research Foundation, Singapore, and Cyber Security Agency of Singapore under its National Cybersecurity R\&D Programme and CyberSG R\&D Cyber Research Programme Office.
This work is also supported by the National Natural Science Foundation of China under Grant 62502079, the China Postdoctoral Science Foundation under Grant BX20240053, and the Sichuan Natural Science Foundation 2026NSFSC1463.
Any opinions, findings, and conclusions or recommendations expressed in this material are those of the author(s) and do not necessarily reflect the views of NTU-CCTF, National Research Foundation, Singapore, Cyber Security Agency of Singapore as well as CyberSG R\&D Programme Office, Singapore.

\section*{Ethical Considerations}
This work provides effective solutions for fuzzy private set intersection tasks.
% , encouraging researchers to pay more attention to the privacy of set operations. 
The experiments in this paper are all based on synthetic datasets and do not contain any personal or illegal information. We firmly believe that our research was done ethically.

\section*{Open Science}
We follow the principles of the Open Science Policy. Our artifacts include the source code of our protocols, benchmarking scripts, and detailed operation documentation. We have consolidated these research artifacts into an open-source repository, and our artifacts are available at {\color{blue}\url{https://github.com/Th0masAndy/Scalable-FPSI}} and {\color{blue}\url{https://doi.org/10.5281/zenodo.20607518}}. 
% We will openly share our research artifact in the final version of our paper.
% We grant the scientific community unrestricted access to review, validate, and expand upon our work.

\section*{Generative AI Usage}

This paper used the LLM (including ChatGPT and Gemini) solely to assist with grammar checking and language polishing.

%%
%% The acknowledgments section is defined using the "acks" environment
%% (and NOT an unnumbered section). This ensures the proper
%% identification of the section in the article metadata, and the
%% consistent spelling of the heading.
% \begin{acks}
% To Robert, for the bagels and explaining CMYK and color spaces.
% \end{acks}

%%
%% The next two lines define the bibliography style to be used, and
%% the bibliography file.
\bibliographystyle{ACM-Reference-Format}
\bibliography{sample-base}

%%
%% If your work has an appendix, this is the place to put it.
\appendix

\section{Assumption of Unique Ball/Center}
\label{appendix: Additional Details of Spatial Hashing}

Our work uses the unique ball/center as the assumption, which is essentially the same as that used in \cite{van2025, piske2025distance} with different representations.
For example, \cite{van2025} assumes that for any $L_{p\in[1, \infty]}$ distance, the sender’s points are $2\delta(d^{1/p})$-apart and the receiver’s points are $2\delta(d^{1/p}+1)$-apart.
As illustrated in \cite{van2024fuzzy, van2025}, this implies the unique ball/center. We recall their conclusion as follows.

% Baarsen and Pu \cite{van2024fuzzy, van2025} demonstrate that this assumption can be formalized through the distance between points. 

% \blue{As shown in Lemmas \ref{lemma: Unique Center} and \ref{lemma: Unique Ball}, BP25 formalizes these requirements through distance between points. 
% Specifically, }

\begin{lemma}[Unique Center \cite{van2024fuzzy, van2025}]
\label{lemma: Unique Center}

Suppose there are multiple $L_p$ balls ($p \in [1,\infty]$) with radius $\delta$
lying in a $d$-dimensional space which is tiled by cells with side length $2\delta$.
If these balls’ centers are at least $2\delta d^{1/p}$ apart from each other, then
for each cell, there is at most one center lying in this cell.
Specifically, if $p = \infty$, then this holds for disjoint balls,
since $2\delta d^{1/p}$ degrades to $2\delta$ in this case.

\end{lemma}

\begin{lemma}[Unique Ball \cite{van2024fuzzy, van2025}]
\label{lemma: Unique Ball}
Suppose there are multiple $L_p$ balls ($p \in [1,\infty]$) with radius $\delta$
lying in a $d$-dimensional space which is tiled by cells with side length $2\delta$.
If these balls’ centers are at least $2\delta(d^{1/p}+1)$ apart from each other,
then there exists at most one ball intersecting with the same cell.
Specifically, if $p = \infty$, then this holds for $L_\infty$ balls with
$4\delta$-apart centers.
\end{lemma}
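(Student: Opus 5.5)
The plan is to argue by contradiction using the triangle inequality for the $L_p$ norm together with a bound on the $L_p$ diameter of a single grid cell. Suppose two receiver balls, centered at distinct points $\vecw$ and $\vecw^\prime$, both intersect the same cell $\CCC$. Then we can fix a point $\vecx \in \CCC$ with $\mathsf{dist}_p(\vecx, \vecw) \le \delta$, and a point $\vecy \in \CCC$ with $\mathsf{dist}_p(\vecy, \vecw^\prime) \le \delta$. The triangle inequality gives
\[
\mathsf{dist}_p(\vecw, \vecw^\prime) \le \mathsf{dist}_p(\vecw, \vecx) + \mathsf{dist}_p(\vecx, \vecy) + \mathsf{dist}_p(\vecy, \vecw^\prime) \le 2\delta + \mathsf{dist}_p(\vecx, \vecy).
\]
So it only remains to bound the distance between two points lying in the same cell.

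The key step is a diameter bound for a cell. By the definition of $\mathsf{cell}_{2\delta}(\cdot)$, a cell consists of the integer points whose $k$-th coordinate lies in $[c_k, c_k + 2\delta - 1]$, where $\CCC = (c_1, \ldots, c_d)$. Hence for $\vecx, \vecy \in \CCC$ we have $|x_k - y_k| \le 2\delta - 1 < 2\delta$ in every coordinate. For finite $p$ this yields
\[
\mathsf{dist}_p(\vecx, \vecy) = \Bigl(\sum_{k \in [d]} |x_k - y_k|^p\Bigr)^{1/p} < \bigl(d (2\delta)^p\bigr)^{1/p} = 2\delta d^{1/p}.
\]
For $p = \infty$ it yields $\mathsf{dist}_\infty(\vecx, \vecy) < 2\delta$, which matches $2\delta d^{1/p}$ with $d^{1/\infty} = 1$.

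Combining the two bounds gives $\mathsf{dist}_p(\vecw, \vecw^\prime) < 2\delta + 2\delta d^{1/p} = 2\delta(d^{1/p} + 1)$. This contradicts the hypothesis that the centers are at least $2\delta(d^{1/p}+1)$ apart. For $p = \infty$ the same bound reads $\mathsf{dist}_\infty(\vecw, \vecw^\prime) < 4\delta$, which contradicts the assumption that the centers are $4\delta$-apart. Therefore at most one ball intersects any given cell.

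The only delicate point is obtaining a strict inequality, since the hypothesis says ``at least'' rather than ``more than''. I would handle this through the half-open, integer nature of the cells: coordinates within one cell differ by at most $2\delta - 1$. If the cells were instead taken to be closed real hypercubes, the argument would only give $\le$, and the statement would then need strict separation. So the main thing to check is that the cell convention stated in Section~\ref{subsec: Spatial Hashing} indeed yields the strict per-coordinate bound.
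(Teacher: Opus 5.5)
Your proof is correct and follows the standard route. The paper gives no proof of its own; it recalls the lemma from \cite{van2024fuzzy, van2025}, and the argument there is the one you give: the triangle inequality combined with the strict within-cell bound $\mathsf{dist}_p(\mathbf{x},\mathbf{y}) < 2\delta d^{1/p}$. The strictness you flag needs no appeal to integrality, because the floor-based definition of $\mathsf{cell}_{2\delta}(\cdot)$ already makes every cell half-open, $c_k \le x_k < c_k + 2\delta$, for integer and real coordinates alike.
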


% \begin{lemma}[Maximal Distance in a Cell \cite{van2024fuzzy, van2025}]
% Given two points $\mathbf{w}, \mathbf{q} \in U^d$ located in the same cell with 
% side length $2\delta$, then the distance between them is 
% $\mathrm{dist}_p(\mathbf{w}, \mathbf{q}) < 2\delta d^{1/p}$ where $p \in [1,\infty]$. 
% Specifically, if $p = \infty$, $\mathrm{dist}_\infty(\mathbf{w}, \mathbf{q}) < 2\delta$.
% \end{lemma}

\section{Details of OT-based Fuzzy Matching}
\label{appendix: OT-based Fuzzy Matching}

\subsection{Additional Functionalities}
\label{appendix: Additional Functionalities for OT-based Fuzzy Matching}

Our OT-based fuzzy matching protocols rely on several customized OT-based
primitives for securely computing the most significant bit, multiplication, and absolute
value. Below, we summarize the corresponding ideal functionalities and their communication overhead of state-of-the-art protocols.

\begin{enumerate}
    \item \textbf{Most significant bit (MSB).}
    The functionality $\Func[MSB]$ takes as input $[x]^\ell$ and outputs $[y]^1$, where $y = \mathsf{MSB}(x)$. Existing works~\cite{huang2022cheetah, rathee2020cryptflow2} provide efficient protocols with communication cost below $11\ell$ bits and $\log \ell$ communication rounds.

    \item \textbf{AND.}
    The functionality $\Func[AND]$ takes as input $\{[x_i]^1\}_{i \in [d]}$ and outputs $[y]^1$, where $y = x_1 \land \cdots \land x_d$. We generate random AND triples using silent OT~\cite{boyle2019efficient}, and then perform Beaver-style multiplications with $4(d-1)$ bits of communication and $\log d$ rounds.

    \item \textbf{Absolute values (ABS).} The functionality $\Func[ABS]$ takes as input $[x]^\ell$ and outputs $[y]^\ell$, where $y = |x|$. This functionality can be securely realized using MUX and MSB operations~\cite{rathee2021sirnn, rathee2020cryptflow2, hao2022iron, huang2022cheetah}.
    The protocol requires $13\ell$ bits of communication and $\log \ell$ rounds.

    % 11(\ell - 1) + 2(\ell + 1)

    \item \textbf{Multiplication (Mult).} The functionality $\Func[Mult]$ takes as input $([x]^\ell, [y]^\ell)$ and outputs $[z]^\ell$, where $z = x \cdot y$. We employ correlated-OT-based protocols from~\cite{demmler2015aby, rathee2021sirnn}, which incur communication of $\ell^2/2$ bits with 2 rounds.

\end{enumerate}

\subsection{OT-based Fuzzy Matching for $L_p$ Distance}
\label{appendix: OT-based Fuzzy Matching for $L_p$ Distance}

We can extend OT-based fuzzy matching to arbitrary $L_p$ distance.
In Figure \ref{Prot: fmatch-p-2pc}, we propose efficient protocols for widely used $L_1$ and $L_2$ distances by using MSB, ABS, and multiplication functionalities as shown in Appendix \ref{appendix: Additional Functionalities for OT-based Fuzzy Matching}.
Our protocol incurs an overhead of $O(pd \ell)$ in both communication\footnote{We note that the OT-based multiplication protocol incurs $O(\ell^2)$ communication. However, an alternative solution is to employ efficient RLWE-based additively homomorphic encryption, which incurs $O(\ell)$ communication. Therefore, we maintain the $O(pd\ell)$ communication overhead in our asymptotic analysis.} and computation.
The security of the protocol is shown as follows, and we defer the detailed correctness and security analysis to Appendix~\ref{proof: fuzzy-match-p-ot}.

\begin{theorem}
\label{thm: fuzzy-match-p-ot}
    The protocol $\Pi_\mathsf{FMatch\text{-}OT}^{L_p}$ in Figure \ref{Prot: fmatch-p-2pc} realizes the functionality \Func[FMatch] for $L_p$ distance in Figure \ref{Func:FMatch} against semi-honest adversaries in the $(\Func[ABS], \Func[Mult], \Func[MSB])$-hybrid model.
\end{theorem}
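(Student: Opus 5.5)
The proof splits into correctness and a simulation argument in the $(\Func[ABS], \Func[Mult], \Func[MSB])$-hybrid model. I assume the protocol in Figure \ref{Prot: fmatch-p-2pc} has the following shape: the parties locally compute $[t_k]^\ell := q_k - w_k$; they call $\Func[ABS]$ to obtain $[|t_k|]^\ell$; for $p=2$ they call $\Func[Mult]$ to obtain $[|t_k|^2]^\ell$; they sum locally to get $[D]^\ell$ with $D = \sum_{k}|t_k|^p$; they call $\Func[MSB]$ on $\delta^p - [D]^\ell$ to obtain $[a]^1$; finally \SSS sends its share to \RRR, who reconstructs $z = 1 \xor a$.

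\textbf{Correctness.} Since $L_p$ distance is monotone, $\mathsf{dist}_p(\vecq,\vecw)\le\delta \iff D \le \delta^p$. The task is to show that the ring $\ZZ_{2^\ell}$ is large enough that no wraparound occurs. First, $t_k$ interpreted as a signed value is the true difference: this holds because the inputs fed in (e.g.\ after shift-and-truncate in the low-dimensional protocol) satisfy $|q_k - w_k| < 2^{\ell-1}$. Second, $D$ and $\delta^p - D$ must both lie in the signed range, i.e.\ $d\cdot(\max_k |t_k|)^p + \delta^p < 2^{\ell-1}$. This is exactly the purpose of the bit lengths $\ell_2$ in Figure \ref{Prot: fpsi-p-low-px}: there $|t_k|\le 3\delta$, so $D \le d(3\delta)^p$, and one can check the stated $\ell_2$ covers this. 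Under that condition, $\mathsf{MSB}(\delta^p - D) = 1$ iff $\delta^p - D < 0$, so $1\xor a = z$.

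\textbf{Security.} This part is routine. In the hybrid model every ideal functionality hands each party a fresh uniformly random share, independent of the other party's share and of all inputs.
\begin{itemize}
\item For a corrupt \SSS, the simulator samples uniform shares for every $\Func[ABS]$, $\Func[Mult]$ and $\Func[MSB]$ output. Since \SSS receives no message, its view is perfectly simulated.
\item For a corrupt \RRR, the simulator also samples all functionality outputs uniformly, including $[a]^1_\RRR$. It then obtains $z$ from $\Func[FMatch]$ and simulates \SSS's final message as $1 \xor z \xor [a]^1_\RRR$. The joint distribution matches the real one because the real $[a]^1_\SSS$ is uniform subject to reconstructing $a$.
\end{itemize}
A standard hybrid argument then concludes the proof.

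\textbf{Main obstacle.} The only delicate step is the bit-length and overflow analysis in the correctness part, especially for $p=2$, where squaring grows values quickly. I would handle it by bounding $|t_k|$ through the input-domain assumption first, and only then bounding $D$.
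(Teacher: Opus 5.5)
Your proposal is correct and is essentially the paper's proof. Correctness reduces to a no-wraparound bound ensuring that $\mathsf{MSB}(\delta^p-\sum_k|t_k|^p)$ is the true sign bit (the paper states it as $\ell\ge\log_2(d z^p)+1$ for inputs and $\delta$ bounded by $z$), and both simulators sample every hybrid-functionality output share uniformly, with the corrupt-receiver simulator programming the sender's last message from the ideal output $z$.

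There are two small fixes:
\begin{enumerate}
\item For $p=2$ the protocol calls $\Func[Mult]$ on $([t_k]^\ell,[t_k]^\ell)$ directly, with no $\Func[ABS]$ call, so your simulator should not output ABS shares there.
\item Your aside that the low-dimensional choice $\ell_2=\lceil\log(d(3\delta)^2)\rceil$ satisfies your condition for $L_2$ does not check out: with your bound $D\le d(3\delta)^2$ one always has $2^{\ell_2-1}<9d\delta^2<d(3\delta)^2+\delta^2$. Keep the bit-length bound as an explicit hypothesis on $\ell$, as the paper does, rather than deriving it from that setting.
\end{enumerate}
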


\begin{figure}[!t]
\begin{nfprot}{$\Pi_\mathsf{FMatch\text{-}ot}^{L_p}$}

\noindent \textbf{Parameter:} Sender \SSS and receiver \RRR. Threshold $\delta$.

\noindent \textbf{Input:} \SSS inputs $\vecq = (q_1, \ldots, q_d) \in \ZZ_{2^\ell}^{d}$ and \RRR inputs $\vecw = (w_1, \ldots, w_d) \in \ZZ_{2^\ell}^{d}$.

% $x=(x_1, \ldots, x_d) \in \ZZ^d_{2^u}$ and \RRR inputs $y=(y_1, \ldots, y_d) \in \ZZ^d_{2^u}$.

\noindent \textbf{Protocol:}

\begin{enumerate}

    \item For $k \in [d]$, \SSS and \RRR compute $[t_k]^{\ell} := q_k - w_k$. 

    \item For $k \in [d]$, \SSS and \RRR executes as follows and learn $[f_k]^{\ell}$.

    \begin{itemize}
        \item $p = 1$:  Invoke functionality $\Func[ABS]$ with input $[t_k]^{\ell}$.
        \item $p = 2$: Invoke functionality \Func[Mult] with input $([t_k]^{\ell}, [t_k]^{\ell})$.
    \end{itemize}

    % \item \underline{If $p = 1$}, for $k \in [d]$, \SSS and \RRR invoke functionality $\Func[ABS]$ with input $[t_k]^{\ell}$ and learn $[f_k]^{\ell}$.
    % \SSS and \RRR compute $[f_k]^{\ell} := -2 \cdot [g_k]^{\ell} + [t_k]^{\ell}$.

    % \item \underline{If $p = 2$}, for $k \in [d]$, \SSS and \RRR invoke functionality \Func[Mult] with input $([t_k]^{\ell}, [t_k]^{\ell})$ and learn  $[f_k]^{\ell}$.

    \item {\SSS and \RRR locally compute $[f]^{\ell} := \delta^p - \sum _{k\in[d]} [f_k]^{\ell}$.} \SSS and \RRR invoke functionality $\Func[MSB]$ with inputs $[f]^{\ell}$ and learn $[z]^1$.

    \item {\SSS sends $1 \xor [z]^1_\SSS$ to \RRR, who outputs $z := 1 \xor [z]^1_\SSS \xor [z]^1_\RRR$.} 
    
\end{enumerate}

\end{nfprot}
\caption{Protocol of fuzzy matching for $L_p$ distance from OT}
\label{Prot: fmatch-p-2pc}
\end{figure}

\iffalse
\subsection{Comparisons between OPRF-based and OT-based Fuzzy Matching}

\label{appendix: Comparisons between OPRF-based and OT-based Fuzzy Matching}

\blue{Communication $L_\infty$: $< 22 \ell d - 14 d -3$ bits}
\blue{$L_1$: $< (13 \ell - 9) d + 11 \ell - 10$ bits}
\blue{$L_2$: $< 4 \ell d + 11 \ell - 10$ bits}

Both our OPRF-based and OT-based constructions achieve desirable performance. Specifically, the OPRF-based protocol incurs communication overhead $O(d \lambda \log \delta)$ and computational overhead $O(d \log \delta)$, while the OT-based protocol incurs both communication and computational overhead of $O(d \ell)$.

When $\ell = O(\log \delta)$, the two constructions have comparable computational overhead $O(d \log \delta)$. However, the OT-based protocol achieves strictly better communication complexity $O(d \log \delta)$, as it removes the dependence on the statistical security parameter $\lambda$ required by the OPRF-based construction. As a trade-off, the OT-based protocol requires $\log \ell$ communication rounds due to the underlying OT-based primitives, in contrast to the constant-round complexity of the OPRF-based protocol. For $\ell = O(\log \delta)$, this translates to $O(\log \log \delta)$ rounds.

Therefore, the OT-based protocol is particularly well suited for inputs with small bit length, a regime that will be illustrated in our fuzzy PSI constructions for low-dimensional sets.
\fi

% \section{Additional Protocols of Fuzzy PSI}

\begin{table*}[!t]
\centering
\caption{Comparisons of fuzzy PSI over low-dimensional sets for $L_1$ distance. "*" indicates unsupported or no implementation for the specific parameter setting. The best result is marked in {\color{blue!30}blue}.}
\label{table: L1}
\resizebox{0.91\textwidth}{!}{%
\begin{tabular}{|c|c|c|c|c|c|c|c|c|c|c|}
\hline
\multirow{2}{*}{\makecell{Set size \\ {$m = n$}} }    & \multirow{2}{*}{\makecell{Dim. \\ {$d$}}} & \multirow{2}{*}{Protocol}         &\multicolumn{2}{c|}{Threshold  $\delta = 16$} & \multicolumn{2}{c|}{Threshold  $\delta = 64$} & \multicolumn{2}{c|}{Threshold  $\delta = 256$} & \multicolumn{2}{c|}{Threshold  $\delta = 1024$} \\ \cline{4-11}                                                       
                       &                    &                        & Time (s)      & Comm. (MB)     & Time (s)      & Comm. (MB)      & Time (s)       & Comm. (MB)      & Time (s)        & Comm. (MB)      \\ \hline
\multirow{9}{*}{$2^{12}$}  & \multirow{3}{*}{2} & van Baarsen \& Pu~\cite{van2025}                   & {\cellcolor{blue!13}0.11}	 &7.33	&{\cellcolor{blue!13}0.27}	&23.9	&0.91	&89.48	&3.35	&351.33
      \\
                       &                    & Piske et al.~\cite{piske2025distance}                    & 1.95	&5.9	&2.23	&13.88	&3.59	&43.73	&*	&*
          \\
                       &                    & Ours                   & 0.66	&{\cellcolor{blue!13}3.17}	&0.64	&{\cellcolor{blue!13}3.17}	&{\cellcolor{blue!13}0.79}	&{\cellcolor{blue!13}3.17}	&{\cellcolor{blue!13}0.75}	&{\cellcolor{blue!13}3.17}
       \\ \cline{2-11}
                       & \multirow{3}{*}{4} & van Baarsen \& Pu~\cite{van2025}                   & {\cellcolor{blue!13}0.27}	&18.01	&{\cellcolor{blue!13}0.63}	&51.04	&1.94	&182.15	&7.02	&705.97
      \\
                       &                    & Piske et al.~\cite{piske2025distance}                    & 4.63	&16.35	&5.36	&32.28	&8.26	&91.96	&*	&*
          \\
                       &                    & Ours                   & 0.87	&{\cellcolor{blue!13}6.84}	&0.89	&{\cellcolor{blue!13}6.84}	&{\cellcolor{blue!13}0.89}	&{\cellcolor{blue!13}6.84}	&{\cellcolor{blue!13}0.86}	&{\cellcolor{blue!13}6.84}
       \\ \cline{2-11}
                       & \multirow{3}{*}{8} & van Baarsen \& Pu~\cite{van2025}                   & 4.78	&198.62	&5.7	&264.58	&9.17	&526.82	&20.15	&1575.00
    \\
                       &                    & Piske et al.~\cite{piske2025distance}                    & 34.56	&197.63	&35.54	&229.47	&40.62	&348.8     & *           & *          \\
                       &                    & Ours                   & {\cellcolor{blue!13}4.73}	&{\cellcolor{blue!13}101.69}	&{\cellcolor{blue!13}4.64}	&{\cellcolor{blue!13}101.69}	&{\cellcolor{blue!13}4.38}	&{\cellcolor{blue!13}101.69}	&{\cellcolor{blue!13}3.60}	&{\cellcolor{blue!13}134.77}
     \\ \hline
\multirow{9}{*}{$2^{16}$} & \multirow{3}{*}{2} & van Baarsen \& Pu~\cite{van2025}                   & {\cellcolor{blue!13}1.45}	&111.14	&4.06	&376.74	&14.69	&1428.29	&58.81	&5626.30
     \\
                       &                    & Piske et al.~\cite{piske2025distance}                    & 39.46	&93.03	&46.25	&220.31	&79.94	&697.43      & *           & *          \\
                       &                    & Ours                   & 3.81	&{\cellcolor{blue!13}27.32}	&{\cellcolor{blue!13}3.94}	&{\cellcolor{blue!13}27.32}	&{\cellcolor{blue!13}3.96}	&{\cellcolor{blue!13}27.32}	&{\cellcolor{blue!13}4.20}	&{\cellcolor{blue!13}27.32}
      \\ \cline{2-11}
                       & \multirow{3}{*}{4} & van Baarsen \& Pu~\cite{van2025}                   & {\cellcolor{blue!13}4.76}	&282.25	&10.61	&811.74	&33.35	&2913.84	&168.91	&11312.50
    \\
                       &                    & Piske et al.~\cite{piske2025distance}                    & 91.97	&259.52	&109.06	&514.02	&159.56	&1468.18     & *           & *          \\
                       &                    & Ours                   & 6.72	&{\cellcolor{blue!13}85.73}	&{\cellcolor{blue!13}6.9}	&{\cellcolor{blue!13}85.73}	&{\cellcolor{blue!13}6.69}	&{\cellcolor{blue!13}85.73}	&{\cellcolor{blue!13}6.61}	&{\cellcolor{blue!13}85.73}
      \\ \cline{2-11}
                       & \multirow{3}{*}{8} & van Baarsen \& Pu~\cite{van2025}                   & 92.57	&3177.7	&109.22	&4235.31	&205.05	&8439.44	&485.63	&25184.90
   \\
                       &                    & Piske et al.~\cite{piske2025distance}                    & 592.08	&3157.32	&623.35	&3666.25	&734.61	&5574.42    & *           & *          \\
                       &                    & Ours                   & {\cellcolor{blue!13}78.13}	&{\cellcolor{blue!13}1606.23}	&{\cellcolor{blue!13}77.66}	&{\cellcolor{blue!13}1606.23}	&{\cellcolor{blue!13}77.7}	&{\cellcolor{blue!13}1606.23}	&{\cellcolor{blue!13}68.39}	&{\cellcolor{blue!13}2136.62}
  \\ \hline 
  \multirow{9}{*}{$2^{18}$} & \multirow{3}{*}{2} & van Baarsen \& Pu~\cite{van2025}                   &  {\cellcolor{blue!13}5.83}	 & 443.64	 & 17.82	 & 1507.24	 & 63.13	 & 5717.38	 & 264.72	 & 22468.90

     \\
                       &                    & Piske et al.~\cite{piske2025distance}                    & 171.61	 & 371.44	 & 196.34	 & 880.37	 & 323.33	 & 2788.54      & *           & *          \\
                       &                    & Ours                   & 15.81	 & {\cellcolor{blue!13}104.72}	 & {\cellcolor{blue!13}16.30}	 & {\cellcolor{blue!13}104.72}	 & {\cellcolor{blue!13}16.40}	 & {\cellcolor{blue!13}104.72}	 & {\cellcolor{blue!13}15.98}	 & {\cellcolor{blue!13}104.72}

      \\ \cline{2-11}
                       & \multirow{3}{*}{4} & van Baarsen \& Pu~\cite{van2025}                   & {\cellcolor{blue!13}19.89}	 & 1128.75	 & 46.41	 & 3248.83	 & 146.63	 & 11665.30	&*	&*
    \\
                       &                    & Piske et al.~\cite{piske2025distance}                    & 411.09	 & 1037.06	 & 465.84	 & 2054.81	 & 748.99	 & 5870.97     & *           & *          \\
                       &                    & Ours                   & 28.64	 & {\cellcolor{blue!13}338.53}	 & {\cellcolor{blue!13}28.71}	 & {\cellcolor{blue!13}338.53}	 & {\cellcolor{blue!13}29.30}	 & {\cellcolor{blue!13}338.53}	 & {\cellcolor{blue!13}28.94}	 & {\cellcolor{blue!13}338.53}

      \\ \cline{2-11}
                       & \multirow{3}{*}{8} & van Baarsen \& Pu~\cite{van2025}                   & 439.35	 & 12721.40	 & 521.57	 & 16955.50	 & 814.18	 & 33730.60	&*	&*
   \\
                       &                    & Piske et al.~\cite{piske2025distance}                    & 2639.26	 & 12626.70	 & 2781.08	 & 14663.3	 & 3361.54	 & 22294.10    & *           & *          \\
                       &                    & Ours                   & {\cellcolor{blue!13}366.60}	 & {\cellcolor{blue!13}6427.12}	 & {\cellcolor{blue!13}395.73}	 & {\cellcolor{blue!13}6427.12}	 & {\cellcolor{blue!13}373.99}	 & {\cellcolor{blue!13}6427.12}	 & {\cellcolor{blue!13}326.58}	 & {\cellcolor{blue!13}8551.02}

  \\ \hline
\end{tabular}%
}
\end{table*}

\begin{table*}[!t]
\centering
\caption{Comparisons of fuzzy PSI over low-dimensional sets for $L_2$ distance. "*" indicates unsupported or no implementation for the specific parameter setting. The best result is marked in {\color{blue!30}blue}.}
\label{table: L2}
\resizebox{0.91\textwidth}{!}{%
\begin{tabular}{|c|c|c|c|c|c|c|c|c|c|c|}
\hline
\multirow{2}{*}{\makecell{Set size \\ {$m = n$}} }    & \multirow{2}{*}{\makecell{Dim. \\ {$d$}}} & \multirow{2}{*}{Protocol}         &\multicolumn{2}{c|}{Threshold  $\delta = 16$} & \multicolumn{2}{c|}{Threshold  $\delta = 64$} & \multicolumn{2}{c|}{Threshold  $\delta = 256$} & \multicolumn{2}{c|}{Threshold  $\delta = 1024$} \\ \cline{4-11}                                                       
                       &                    &                        & Time (s)      & Comm. (MB)     & Time (s)      & Comm. (MB)      & Time (s)       & Comm. (MB)      & Time (s)        & Comm. (MB)      \\ \hline
\multirow{6}{*}{$2^{12}$}  & \multirow{2}{*}{2} & van Baarsen \& Pu~\cite{van2025}                   & {\cellcolor{blue!13}0.14}	& 7.83	& {\cellcolor{blue!13}0.29}	& 24.65	& 0.94	& 90.48	& 3.93	& 352.58

      \\
                       &                    & Ours                   & 0.59	& {\cellcolor{blue!13}2.88}	& 0.57	& {\cellcolor{blue!13}3.32}	& {\cellcolor{blue!13}0.58}	& {\cellcolor{blue!13}3.32}	& {\cellcolor{blue!13}0.57}	& {\cellcolor{blue!13}3.85}

       \\ \cline{2-11}
                       & \multirow{2}{*}{4} & van Baarsen \& Pu~\cite{van2025}                   & {\cellcolor{blue!13}0.33}	& 18.76	& {\cellcolor{blue!13}0.72}	& 52.17	& 2.09	& 183.65	& 7.82	& 707.84
          \\
                       &                    & Ours                   & 0.60	& {\cellcolor{blue!13}6.54}	& 0.73	& {\cellcolor{blue!13}8.13}	& {\cellcolor{blue!13}0.74}	& {\cellcolor{blue!13}8.13}	& {\cellcolor{blue!13}0.74}	& {\cellcolor{blue!13}9.88}

       \\ \cline{2-11}
                       & \multirow{2}{*}{8} & van Baarsen \& Pu~\cite{van2025}                   &5.32	& 199.87	& 6.41	& 266.46	& 10.08	& 529.32	& 22.89	& 1578.13

    \\
                       
                       &                    & Ours                   & {\cellcolor{blue!13}4.04}	& {\cellcolor{blue!13}101.39}	& {\cellcolor{blue!13}3.24}	& {\cellcolor{blue!13}134.97}	& {\cellcolor{blue!13}3.23}	& {\cellcolor{blue!13}134.97}	& {\cellcolor{blue!13}4.13}	& {\cellcolor{blue!13}169.34}

     \\ \hline
\multirow{6}{*}{$2^{16}$} & \multirow{2}{*}{2} & van Baarsen \& Pu~\cite{van2025}                   & {\cellcolor{blue!13}1.66}	& 119.14	& 4.42	& 388.74	& 15.92	& 1444.29	& 68.34	& 5646.3

     \\
                       
                       &                    & Ours                   & 3.12	& {\cellcolor{blue!13}27.19}	& {\cellcolor{blue!13}3.65}	& {\cellcolor{blue!13}34.17}	& {\cellcolor{blue!13}3.77}	& {\cellcolor{blue!13}34.17}	& {\cellcolor{blue!13}3.86}	& {\cellcolor{blue!13}42.44}

      \\ \cline{2-11}
                       & \multirow{2}{*}{4} & van Baarsen \& Pu~\cite{van2025}                   & 5.49	& 294.25	& 11.60	& 829.74	& 36.25	& 2937.84	& 162.54	& 11342.5

    \\
                       
                       &                    & Ours                   & {\cellcolor{blue!13}5.21}	& {\cellcolor{blue!13}85.85}	& {\cellcolor{blue!13}6.83}	& {\cellcolor{blue!13}111.34}	& {\cellcolor{blue!13}6.93}	& {\cellcolor{blue!13}111.34}	& {\cellcolor{blue!13}7.61}	& {\cellcolor{blue!13}139.33}

      \\ \cline{2-11}
                       & \multirow{2}{*}{8} & van Baarsen \& Pu~\cite{van2025}                   & 100.28	& 3197.70	& 123.78	& 4265.31	& 187.23	& 8479.44	& 423.36	& 25234.90

   \\
                       
                       &                    & Ours                   & {\cellcolor{blue!13}73.01}	& {\cellcolor{blue!13}1606.89}	& {\cellcolor{blue!13}58.71}	& {\cellcolor{blue!13}2145.39}	& {\cellcolor{blue!13}65.18}	& {\cellcolor{blue!13}2145.39}	& {\cellcolor{blue!13}73.79}	& {\cellcolor{blue!13}2696.60}

  \\ \hline 
  \multirow{6}{*}{$2^{18}$} & \multirow{2}{*}{2} & van Baarsen \& Pu~\cite{van2025}                   & {\cellcolor{blue!13}6.99}	&475.64	&19.69	&1555.24	&68.85	&5781.38	&298.37	&22548.90

     \\
                       
                       &                    & Ours                   & 13.35	&{\cellcolor{blue!13}105.38}	&{\cellcolor{blue!13}14.42}	&{\cellcolor{blue!13}133.38}	&{\cellcolor{blue!13}14.95}	&{\cellcolor{blue!13}133.38}	&{\cellcolor{blue!13}15.51}	&{\cellcolor{blue!13}166.64}

      \\ \cline{2-11}
                       & \multirow{2}{*}{4} & van Baarsen \& Pu~\cite{van2025}                   & {\cellcolor{blue!13}22.45}	&1176.75	&47.77	&3320.83	&187.45	&11761.30	& *	& *

    \\
                       
                       &                    & Ours                   & 22.93	&{\cellcolor{blue!13}340.32}	&{\cellcolor{blue!13}29.76}	&{\cellcolor{blue!13}442.52}	&{\cellcolor{blue!13}29.84}	&{\cellcolor{blue!13}442.52}	&{\cellcolor{blue!13}32.26}	&{\cellcolor{blue!13}554.86}

      \\ \cline{2-11}
                       & \multirow{2}{*}{8} & van Baarsen \& Pu~\cite{van2025}                   & 482.04	&12801.40	&592.76	&17075.50	&898.91	&33890.60	& *	& *

   \\
                       
                       &                    & Ours                   & {\cellcolor{blue!13}365.36}	&{\cellcolor{blue!13}6431.17}	&{\cellcolor{blue!13}350.83}	&{\cellcolor{blue!13}8587.77}	&{\cellcolor{blue!13}306.63}	&{\cellcolor{blue!13}8587.77}	&{\cellcolor{blue!13}383.05}	&{\cellcolor{blue!13}10795.40}

  \\ \hline
\end{tabular}%
}
\end{table*}

\begin{table*}[!t]
\centering
\caption{Comparisons of fuzzy PSI over high-dimensional sets for $L_1$ distance. "*" indicates unsupported or no implementation for the specific parameter setting. The best result is marked in {\color{blue!30}blue}.}
\label{table: L_1 high}
\resizebox{0.91\textwidth}{!}{%
\begin{tabular}{|c|c|c|c|c|c|c|c|c|c|c|}
\hline
\multirow{2}{*}{\makecell{Set size \\ {$m = n$}} }    & \multirow{2}{*}{\makecell{Dim. \\ {$d$}}} & \multirow{2}{*}{Protocol}         &\multicolumn{2}{c|}{Threshold  $\delta = 16$} & \multicolumn{2}{c|}{Threshold  $\delta = 64$} & \multicolumn{2}{c|}{Threshold  $\delta = 256$} & \multicolumn{2}{c|}{Threshold  $\delta = 1024$} \\ \cline{4-11}                                                       
                       &                    &                        & Time (s)      & Comm. (MB)     & Time (s)      & Comm. (MB)      & Time (s)       & Comm. (MB)      & Time (s)        & Comm. (MB)      \\ \hline
\multirow{6}{*}{$2^{12}$}  & \multirow{2}{*}{16} & van Baarsen \& Pu~\cite{van2025}                   & {\cellcolor{blue!13}0.46}	 & 49.49	 & {\cellcolor{blue!13}1.68}	 & 181.35	 & 7.29	 & 705.9	 & 26.95	 & 2803.22

      \\
                       &                    & Ours                   & 2.93	 & {\cellcolor{blue!13}39.32}	 & 3.18	 & {\cellcolor{blue!13}47.43}	 & {\cellcolor{blue!13}2.98}	 & {\cellcolor{blue!13}55.62}	 & {\cellcolor{blue!13}3.37}	 & {\cellcolor{blue!13}63.75}

       \\ \cline{2-11}
                       & \multirow{2}{*}{32} & van Baarsen \& Pu~\cite{van2025}                   & {\cellcolor{blue!13}0.90}	 & 98.29	 & {\cellcolor{blue!13}3.38}	 & 362.02	 & 13.68	 & 1411.69	 & 54.73	 & 5607.82

          \\
                       &                    & Ours                   & 5.29	 & {\cellcolor{blue!13}74.79}	 & 5.94	 & {\cellcolor{blue!13}91.09}	 & {\cellcolor{blue!13}5.66}	 & {\cellcolor{blue!13}107.58}	 & {\cellcolor{blue!13}5.96}	 & {\cellcolor{blue!13}123.90}

       \\ \cline{2-11}
                       & \multirow{2}{*}{64} & van Baarsen \& Pu~\cite{van2025}                   &{\cellcolor{blue!13}1.88}	 & 195.93	 & {\cellcolor{blue!13}6.86}	 & 723.55	 & 30.28	 & 2823.77	 & 115.75	 & 11220.50

    \\
                       
                       &                    & Ours                   & 9.49	 & {\cellcolor{blue!13}145.93}	 & 10.15	 & {\cellcolor{blue!13}178.39}	 & {\cellcolor{blue!13}10.31}	 & {\cellcolor{blue!13}211.37}	 & {\cellcolor{blue!13}11.36}	 & {\cellcolor{blue!13}244.06}

     \\ \hline
\multirow{6}{*}{$2^{16}$} & \multirow{2}{*}{16} & van Baarsen \& Pu~\cite{van2025}                   & {\cellcolor{blue!13}7.99}	 & 786.94	 & {\cellcolor{blue!13}28.99}	 & 2901.02	 & 115.97	 & 11311.50	&*  &*

     \\
                       
                       &                    & Ours                   & 40.97	 & {\cellcolor{blue!13}592.16}	 & 42.51 & 	{\cellcolor{blue!13}722.14}	 & {\cellcolor{blue!13}45.33}	 & {\cellcolor{blue!13}854.18}	 & {\cellcolor{blue!13}47.45}	 & {\cellcolor{blue!13}985.11}

      \\ \cline{2-11}
                       & \multirow{2}{*}{32} & van Baarsen \& Pu~\cite{van2025}                   & {\cellcolor{blue!13}16.13}	 & 1569.29	 & {\cellcolor{blue!13}60.04}	 & 5797.39	 & 252.21	 & 22566.60	& *	& *

    \\
                       
                       &                    & Ours                   & 81.10	 & {\cellcolor{blue!13}1164.54}	 & 82.17	 & {\cellcolor{blue!13}1424.65}	 & {\cellcolor{blue!13}86.62}	 & {\cellcolor{blue!13}1688.79}	 & {\cellcolor{blue!13}92.64}	 & {\cellcolor{blue!13}1950.79}

      \\ \cline{2-11}
                       & \multirow{2}{*}{64} & van Baarsen \& Pu~\cite{van2025}                   & {\cellcolor{blue!13}33.04}	 & 3134.60	 & {\cellcolor{blue!13}126.01}	 & 11593.90 & *	& *	& *	& *

   \\
                       
                       &                    & Ours                   & 158.10	 & {\cellcolor{blue!13}2309.64	} & 179.22	 &{\cellcolor{blue!13} 2830.14}	 & {\cellcolor{blue!13}179.47}	 & {\cellcolor{blue!13}3358.68}	 & {\cellcolor{blue!13}189.75}	 & {\cellcolor{blue!13}3883.10}

  \\ \hline 
  \multirow{6}{*}{$2^{18}$} & \multirow{2}{*}{16} & van Baarsen \& Pu~\cite{van2025}                   & {\cellcolor{blue!13}35.17}	 & 3149.60	 & {\cellcolor{blue!13}129.34}	 & 11614.90	&*	&*	&*	&*

     \\
                       
                       &                    & Ours                   & 173.76  & 	{\cellcolor{blue!13}2364.69}	 & 182.71	 & {\cellcolor{blue!13}2885.19}	 & {\cellcolor{blue!13}195.19}	 & {\cellcolor{blue!13}3413.73}	 & {\cellcolor{blue!13}225.93}	 & {\cellcolor{blue!13}3938.15}

      \\ \cline{2-11}
                       & \multirow{2}{*}{32} & van Baarsen \& Pu~\cite{van2025}                   & {\cellcolor{blue!13}70.20}	 & 6281.38	 & {\cellcolor{blue!13}293.72}	 & 23152.10	&*	&*	& *	& *

    \\
                       
                       &                    & Ours                   & 381.05	 & {\cellcolor{blue!13}4659.33}	 & 399.78	 & {\cellcolor{blue!13}5700.37}	 & {\cellcolor{blue!13}445.68}	 & {\cellcolor{blue!13}6758.63}	 & {\cellcolor{blue!13}438.60}	 & {\cellcolor{blue!13}7807.79}

      \\ \cline{2-11}
                       & \multirow{2}{*}{64} & van Baarsen \& Pu~\cite{van2025}                   & {\cellcolor{blue!13}142.61}	 & 12549.20	& *	& *	& *	& *	& *	& *

   \\
                       
                       &                    & Ours                   & 974.50	 & {\cellcolor{blue!13}9250.23}	 & {\cellcolor{blue!13}1023.99}	 & {\cellcolor{blue!13}11334.20}	 & {\cellcolor{blue!13}1069.14}	 & {\cellcolor{blue!13}13450.60}	 & {\cellcolor{blue!13}1118.85}	 & {\cellcolor{blue!13}15550.50}

  \\ \hline
\end{tabular}%
}
\end{table*}

\begin{table*}[!t]
\centering
\caption{Comparisons of fuzzy PSI over high-dimensional sets for $L_2$ distance. "*" indicates unsupported or no implementation for the specific parameter setting. The best result is marked in {\color{blue!30}blue}.}
\label{table: L_2 high}
\resizebox{0.91\textwidth}{!}{%
\begin{tabular}{|c|c|c|c|c|c|c|c|c|c|c|}
\hline
\multirow{2}{*}{\makecell{Set size \\ {$m = n$}} }    & \multirow{2}{*}{\makecell{Dim. \\ {$d$}}} & \multirow{2}{*}{Protocol}         &\multicolumn{2}{c|}{Threshold  $\delta = 16$} & \multicolumn{2}{c|}{Threshold  $\delta = 64$} & \multicolumn{2}{c|}{Threshold  $\delta = 256$} & \multicolumn{2}{c|}{Threshold  $\delta = 1024$} \\ \cline{4-11}                                                       
                       &                    &                        & Time (s)      & Comm. (MB)     & Time (s)      & Comm. (MB)      & Time (s)       & Comm. (MB)      & Time (s)        & Comm. (MB)      \\ \hline
\multirow{6}{*}{$2^{12}$}  & \multirow{2}{*}{16} & van Baarsen \& Pu~\cite{van2025}                   & {\cellcolor{blue!13}0.48}	& 51.74	& 1.71	& 184.72	& 6.76	& 710.40	& 28.63	& 2808.84

      \\
                       &                    & Ours                   & 1.51	& {\cellcolor{blue!13}50.35}	& {\cellcolor{blue!13}1.68}	& {\cellcolor{blue!13}58.47}	& {\cellcolor{blue!13}1.76}	& {\cellcolor{blue!13}66.66}	& {\cellcolor{blue!13}1.97}	& {\cellcolor{blue!13}74.79}

       \\ \cline{2-11}
                       & \multirow{2}{*}{32} & van Baarsen \& Pu~\cite{van2025}                   & {\cellcolor{blue!13}0.93}	& 102.54	& 3.44	& 368.39	& 13.82	& 1420.19	& 57.93	& 5618.45

          \\
                       &                    & Ours                   & 2.43	& {\cellcolor{blue!13}97.25}	& {\cellcolor{blue!13}2.65}	& {\cellcolor{blue!13}113.55}	& {\cellcolor{blue!13}2.86}	& {\cellcolor{blue!13}130.04}	& {\cellcolor{blue!13}3.23}	& {\cellcolor{blue!13}146.36}

       \\ \cline{2-11}
                       & \multirow{2}{*}{64} & van Baarsen \& Pu~\cite{van2025}                   &{\cellcolor{blue!13}1.88}	& 204.18	& 6.96	& 735.92	& 27.48	& 2840.27	& 116.44	& 11241.20

    \\
                       
                       &                    & Ours                   & 4.23	& {\cellcolor{blue!13}191.26}	& {\cellcolor{blue!13}4.72}	& {\cellcolor{blue!13}223.72}	& {\cellcolor{blue!13}5.04}	& {\cellcolor{blue!13}256.70}	& {\cellcolor{blue!13}5.83}	& {\cellcolor{blue!13}289.39}

     \\ \hline
\multirow{6}{*}{$2^{16}$} & \multirow{2}{*}{16} & van Baarsen \& Pu~\cite{van2025}                   & {\cellcolor{blue!13}8.14}	& 822.94	& 30.03	& 2955.02	& 117.34	& 11383.50	&*  &*

     \\
                       
                       &                    & Ours                   & 18.86	& {\cellcolor{blue!13}777.21}	& {\cellcolor{blue!13}20.44}	& {\cellcolor{blue!13}907.18}	& {\cellcolor{blue!13}22.12}	& {\cellcolor{blue!13}1039.23}	& {\cellcolor{blue!13}25.85}	& {\cellcolor{blue!13}1170.16}

      \\ \cline{2-11}
                       & \multirow{2}{*}{32} & van Baarsen \& Pu~\cite{van2025}                   &{\cellcolor{blue!13} 16.55}	& 1637.29	& 60.64	& 5899.39	& 263.34	& 22702.60	& *	& *

    \\
                       
                       &                    & Ours                   & 33.28	& {\cellcolor{blue!13}1535.14}	& {\cellcolor{blue!13}38.38}	& {\cellcolor{blue!13}1795.25}	& {\cellcolor{blue!13}41.36}	& {\cellcolor{blue!13}2059.38}	& {\cellcolor{blue!13}47.27}	& {\cellcolor{blue!13}2321.38}

      \\ \cline{2-11}
                       & \multirow{2}{*}{64} & van Baarsen \& Pu~\cite{van2025}                   & {\cellcolor{blue!13}33.01}	& 3266.60	& 126.18	& 11791.90 & *	& *	& *	& *

   \\
                       
                       &                    & Ours                   & 72.69	& {\cellcolor{blue!13}3051.35}	& {\cellcolor{blue!13}73.45}	& {\cellcolor{blue!13}3571.85}	& {\cellcolor{blue!13}82.49}	& {\cellcolor{blue!13}4100.39}	& {\cellcolor{blue!13}91.7}	& {\cellcolor{blue!13}4624.81}

  \\ \hline 
  \multirow{4}{*}{$2^{18}$} & \multirow{2}{*}{16} & van Baarsen \& Pu~\cite{van2025}                   & {\cellcolor{blue!13}35.68}	& 3293.60	& 130.30	& 11830.90	&*	&*	&*	&*

     \\
                       
                       &                    & Ours                   & 83.30	& {\cellcolor{blue!13}3111.16}	& {\cellcolor{blue!13}92.05}	& {\cellcolor{blue!13}3631.66}	& {\cellcolor{blue!13}102.04}	& {\cellcolor{blue!13}4160.21}	& {\cellcolor{blue!13}106.12}	& {\cellcolor{blue!13}4684.62}

      \\ \cline{2-11}
                       & \multirow{2}{*}{32} & van Baarsen \& Pu~\cite{van2025}                   & {\cellcolor{blue!13}77.92}	& 6553.38	& 341.46	& 23560.10	&*	&*	& *	& *

    \\
                       
                       &                    & Ours                   & 150.70	& {\cellcolor{blue!13}6152.83}	& {\cellcolor{blue!13}168.78}	& {\cellcolor{blue!13}7193.88}	& {\cellcolor{blue!13}183.38}	& {\cellcolor{blue!13}8252.13}	& {\cellcolor{blue!13}214.89}	& {\cellcolor{blue!13}9301.29}
                  
  \\ \hline
\end{tabular}%
}
\end{table*}

\section{{Details of Fuzzy PSI for High-dimensional Sets}}
\label{Appendix: Details of Fuzzy PSI for High-dimensional Sets}

% We present efficient fuzzy PSI protocols for high-dimensional sets.
% In this setting, the above protocols incur large overhead due to the factor $2^d$, which is a fundamental limitation of most spatial hashing–based approaches~\cite{garimella2022structure, garimella2024computation, bui2025new, richardson2024fuzzy, van2024fuzzy, van2025, piske2025distance}.
% To overcome this, we adopt the same assumption as in~\cite{van2025}, namely that the projections of both the sender’s and receiver’s balls are globally disjoint across all dimensions.
% As illustrated in~\cite{van2025}, this assumption is rather strong, \blue{and we discuss how to extend our protocols to the milder assumption in Section~\ref{subsec: Extension to Mild Assumptions}.}
% Leveraging our asymptotically efficient fuzzy matching, the resulting fuzzy PSI protocols achieve a logarithmic dependence on the distance threshold~$\delta$, an exponential improvement over the state-of-the-art protocol in~\cite{van2025}.

% \footnote{While~\cite{van2025} shows that with prefix-trie techniques, their protocol can in principle achieve $\log \delta$ dependence but with  $(\log \delta)^{d/2}$ computation cost, which becomes prohibitively large for high-dimensional sets.}. 

% \meng{write BP25 has security issue and we follow framework but address security issues.}

{In this section, we introduce the details of our fuzzy PSI for high-dimensional sets. }

\textbf{Fuzzy PSI for $L_\infty$ distance.}
We compile our OPRF-based fuzzy matching protocol to construct fuzzy PSI for $L_\infty$ distance.
% We do not adopt our OT-based construction, since without spatial hashing it does not admit a sufficiently reduced input domain, leading to large overhead especially for inputs with large bit length.
Our protocol follows a similar high-level framework to that of~\cite{van2025}; however, a key distinction is that our construction supports prefix-trie techniques in combination with our asymptotically efficient fuzzy matching.
As discussed in~\cite{van2025}, prefix-trie techniques are not incorporated in their protocol because doing so incurs a computation cost of $(\log \delta)^{d/2}$, which becomes prohibitively large in high-dimensional settings.

Recall that in our OPRF-based fuzzy matching protocol, for each dimension $k$, the OPPRF programs (a prefix representation of) the interval $[q_k - \delta, q_k + \delta]$ to a single random value $r_k$. To extend this construction to multi-point fuzzy PSI, the natural approach is to perform OPPRF evaluations over all sender and receiver points.
However, there are two important requirements. First, the intervals $[q_k - \delta, q_k + \delta]$ and $[q'_k - \delta, q'_k + \delta]$ corresponding to two distinct sender points cannot overlap,
since OPPRF cannot program the same input to two different values
$r_k$ and $r'_k$.
Second, any two receiver queries $w_k$ and $w'_k$ must not fall into the same programmed interval $[q_k - \delta, q_k + \delta]$;
otherwise, the receiver would obtain the same value $r_k$ for both queries, which leaks additional information about partial matches on individual dimensions.
Fortunately, the globally disjoint assumption satisfies these requirements.
Under this assumption, after invoking our role-reversed OPPRF,
the sender obtains an identifier $\mathsf{id}_{\vecq_j}$ for each
$\vecq_j \in Q$, and the receiver obtains an identifier
$\mathsf{id}_{\vecw_i}$ for each $\vecw_i \in W$, such that
$\mathsf{id}_{\vecq_j} = \mathsf{id}_{\vecw_i}$ if and only if
$\mathsf{dist}_\infty(\vecq_j, \vecw_i) \le \delta$.
Finally, both parties engage in a labeled PSI protocol (see Section~\ref{subsec: Weak Labeled PSI}), which enables the receiver to obtain $\vecq_j$ corresponding to the matched identifiers.

% Specifically, on the sender side, the globally disjoint assumption presents that for any $\vecq, \vecq^\prime$, $[q_{k} - \delta, q_{k} + \delta]$ and $[q_{k}^\prime - \delta, q_{k}^\prime + \delta]$ are disjoint for all $k \in[d]$.
% This enables the sender to program all intervals in one OPPRF.
% On the receiver side, this assumption ensures that the receiver will not query two nearby dimension values, since for any two points $\vecw, \vecw^\prime$, it holds that $w_{k}, w_{k}^\prime$ are at least $2\delta$ apart for all $k \in[d]$.
% For the prefix presentation, we follow the same idea as fuzzy matching to address the overhead issue caused by $O(\log \delta)$ OPPRF evaluations.

% Therefore, our fuzzy PSI protocol incurs a factor of $O(m+n)$ overhead increase compared to our OPRF-based fuzzy matching protocol.

\textbf{Fuzzy PSI for $L_p$ distance.}
We further extend the above protocol to $L_p$ distance.
The above protocol for $L_\infty$ distance before the last weak label PSI can be viewed as a fuzzy mapping operation.
This operation assign an unique identifier $\mathsf{id}_{\vecq_j}$ for each $\vecq_j$, similarly $\mathsf{id}_{\vecw_i}$ for each $\vecw_i$, satisfying $\mathsf{id}_{\vecq_j} = \mathsf{id}_{\vecw_i}$ if $\mathsf{dist}_{\infty}(\vecw_i, \vecq_j) \le \delta$.
To precisely determine the $L_p$ distance, both parties need to execute fuzzy matching on the pairs sharing the same identifier. 
{Since our OPRF-based fuzzy matching can not support $L_p$ distance, we use our OT-based fuzzy matching and follow the Cuckoo-hashing-based strategy of our fuzzy PSI for low-dimensional sets.}

We further optimize the input domain used in the OT-based fuzzy matching phase.
Our key observation is that for any points $\vecq_j$ and $\vecw_i$, having the same identifier $\mathsf{id}_{\vecq_j} = \mathsf{id}_{\vecw_i}$ implies $\mathsf{dist}_{\infty}(\vecw_i, \vecq_j) \le \delta$.
Consequently, for every dimension $k \in [d]$, it holds that
$|w_{i,k} - q_{j,k}| \le \delta$.
As shown in our OT-based fuzzy matching protocol in
Figure~\ref{Prot: fmatch-p-2pc}, the first step computes
$t_k = w_{i,k} - q_{j,k}$, which lies in the range $[-\delta, \delta]$.
Therefore, we can represent its secret shares using a smaller bit length.
This optimization significantly reduces the overhead of subsequent operations whose complexity depends on the bit length of the secret sharing.
We follow a strategy similar to our fuzzy PSI protocol for low-dimensional sets in Figure \ref{Prot: fpsi-p-low-px} to address false positives caused by the reduced input domain.

\section{Additional Experimental Results}
\label{appendix: Additional Experimental Results}

We provide additional experimental results for $L_p$ distance. Specifically, Tables \ref{table: L1} and \ref{table: L2} report the performance comparisons of fuzzy PSI over low-dimensional sets for $L_1$ and $L_2$ distance, respectively.
Tables \ref{table: L_1 high} and \ref{table: L_2 high} report the performance comparisons of fuzzy PSI over high-dimensional sets for $L_1$ and $L_2$ distance, respectively.

% Table \ref{table: L1} reports the performance comparisons of fuzzy PSI over low-dimensional sets for $L_1$ distance.

% Table \ref{table: L2} reports the performance comparisons of fuzzy PSI over low-dimensional sets for $L_2$ distance.

% Table \ref{table: L_1 high} reports the performance comparisons of fuzzy PSI over high-dimensional sets for $L_1$ distance.

% Table \ref{table: L_2 high} reports the performance comparisons of fuzzy PSI over high-dimensional sets for $L_2$ distance.

\section{Security Proofs}
\label{appendix: Security Proofs}

\subsection{Proof of Theorem \ref{thm: fuzzy-match-oprf}}
\label{Proof of thm: fuzzy-match-oprf}

\begin{proof}
\label{proof: fuzzy-match-oprf}
{We first establish the correctness of the protocol by analyzing the prefix matching and parameter bounds, and then construct simulators to demonstrate that the simulated execution is indistinguishable from the real-world protocol execution.}

 \textbf{Correctness.} 
{We first consider the matching case, where it holds that $\mathsf{dist}_\infty(\vecq, \vecw) \leq\delta$. 
This implies that $w_k \in [q_k-\delta, q_k+\delta]$ for all dimensions $k \in [d]$. 
By Theorem~\ref{theorem: prefix}, for each $k \in [d]$, there exists a unique index $t_k^* \in [u^\prime]$ such that $\tilde{w}_{k,t_k^*} \in \{\tilde{q}_{k,t}\}_{t\in[u]}$. 
By the correctness of the first $\Func[OPPRF]$, the receiver successfully learns $\hat{r}_{k, t_k^*} = r_k$, while the remaining evaluations $\{\hat{r}_{k, t}\}_{t \neq t_k^*}$ are independent and uniformly random. 
To ensure the second $\Func[OPPRF]$ is correctly programmed, all elements in $\{\hat{r}_{k, t}\}_{t \in [u^\prime]}$ must be distinct. As analyzed in the mismatch case below, setting $\ell_1$ appropriately guarantees this distinctness with overwhelming probability. 
Consequently, by the correctness of the second $\Func[OPPRF]$, the sender learns $s_k = \hat{s}_k$ for all $k \in [d]$. 
This yields $s = \hat{s}$, and by the correctness of $\Func[PEQT]$, the receiver obtains the correct output $1$.}

{We now consider the mismatch case where $\mathsf{dist}_\infty(\vecq, \vecw) > \delta$. 
This implies there exists at least one dimension $k^*$ such that $w_{k^*} \notin [q_{k^*}-\delta, q_{k^*}+\delta]$. 
By Theorem~\ref{theorem: prefix}, the corresponding prefix sets are disjoint: $\{\tilde{q}_{k^*,t}\}_{t\in[u]} \cap \{\tilde{w}_{k^*,t}\}_{t\in[u^\prime]} = \emptyset$. 
Therefore, by the correctness of $\Func[OPPRF]$, all evaluations $\{\hat{r}_{k^*,t}\}_{t \in [u^\prime]}$ obtained by the receiver are uniformly random and independent of uniformly random $r_{k^*}$. 
To ensure correctness, we must ensure two conditions: (1) no value in $\{\hat{r}_{k^*, t}\}_{t \in [u^\prime]}$ accidentally equals $r_{k^*}$ to avoid false positives, and (2) all values in $\{\hat{r}_{k^*,t}\}_{t \in [u^\prime]}$ are distinct, preventing collisions when programming the second $\Func[OPPRF]$.
In a single dimension, both requirements hold except with probability at most $(u^\prime + 1)^2 / 2^{\ell_1}$. 
% The probability that both requirements hold in a single dimension is at least $1 - (u^\prime + 1)^2 / 2^{\ell_1}$. 
By setting the field size to $\ell_1 := \lambda + 2\log (u^\prime + 1) + \log d$, a union bound guarantees these conditions hold simultaneously across all $d$ dimensions except with negligible probability in the security parameter $\lambda$. 
Under this parameterization, the evaluation query $r_{k^*}$ is unprogrammed in the second $\Func[OPPRF]$, resulting in $s_{k^*}$ being uniformly random. 
Because the dimensions are evaluated independently, the aggregate sum $s = \sum_{k\in[d]} {s}_{k}$ and $\hat{s} = \sum_{k\in[d]} \hat{s}_{k}$ are also uniformly random.
Setting $\ell_2 := \lambda$ ensures the probability of an accidental collision $s = \hat{s}$ is bounded by $1/2^\lambda$, which is negligible. 
Thus, $\Func[PEQT]$ outputs $0$, completing the correctness analysis.}

 \textbf{Corrupted Sender.} 
{We construct a simulator $\Sim[\SSS]$ for a semi-honest corrupted sender. Given the sender's input $\vecq$, $\Sim[\SSS]$ runs the adversary $\AAA$ and proceeds as follows:}
\begin{enumerate}
    \item {\textit{Simulating $\Func[OPPRF]$ (Step 2):} $\Sim[\SSS]$ receives the programmed points $\{(k\|\tilde{q}_{k, t}, r_{k})\}_{k \in [d], t \in [u]}$ generated by $\AAA$. \Sim[\SSS] samples a random function $F$ such that $F(k \| \tilde{q}_{k, t}) = r_{k}$ for $k \in [d], t \in [u]$, and sends $\OOO^F$ to \AAA.}

    \item {\textit{Simulating $\Func[OPPRF]$ (Step 3):} After receiving $\{k\|r_{k}\}_{k \in [d]}$ from \AAA, \Sim[\SSS] samples $\{s_{ k}\}_{ k \in [d]} \from \FF_{2^{\ell_2}}$, and sends them to \AAA.}
    
    \item {\textit{Simulating $\Func[PEQT]$ (Step 5):} $\Sim[\SSS]$ receives $s$ from $\AAA$, and returns nothing.}
\end{enumerate}

% \blue{We argue the simulated view of $\AAA$ is identically distributed to the real execution. The first OPPRF transcript consists solely of $\OOO^F$, which is identically distributed to the real execution by definition of the OPPRF functionality. For the second OPPRF, in the real protocol, each value $s_k$ is either equal to the receiver's choice $\hat{s}_k$ (if a match occurs) or is a random function output. Because the honest receiver samples $\hat{s}_k \leftarrow \FF_{2^{\ell_2}}$ uniformly at random, and because evaluations are distributed uniformly over the field, $s_k$ is always a uniformly random value independent of the sender's view. $\Sim[\SSS]$ perfectly replicates this distribution by explicitly sampling uniform values in $\FF_{2^{\ell_2}}$.}

{We demonstrate the security of the simulation through a sequence of hybrid games:}

\begin{itemize}
    \item {$\mathsf{Hyb}_0$: This is the real protocol execution.}
    
    \item {$\mathsf{Hyb}_1$: This hybrid behaves identically to $\mathsf{Hyb}_0$, except that instead of passing the honest receiver's input $\hat{s}$ to $\Func[PEQT]$ (Step 5), \Sim[\SSS] receives $s$ from $\AAA$ and returns nothing, which perfectly simulates the $\Func[PEQT]$'s behavior. Thus, the views in $\mathsf{Hyb}_0$ and $\mathsf{Hyb}_1$ are perfectly identical.}
    
    \item {$\mathsf{Hyb}_2$: This hybrid behaves identically to $\mathsf{Hyb}_1$, except that instead of passing the honest receiver's programmed points to  $\Func[OPPRF]$ (Step 3), \Sim[\SSS] receives $\AAA$'s queries $\{k\|r_k\}_{k \in [d]}$ and returns uniformly random values $s_k \leftarrow \FF_{2^{\ell_2}}$ for all $k \in [d]$ to \AAA.
    In $\mathsf{Hyb}_1$, $s_k$ is either equal to the receiver's choice $\hat{s}_k$ (upon a match) or is a random function output. Because the honest receiver samples $\hat{s}_k \leftarrow \FF_{2^{\ell_2}}$ uniformly at random, every $s_k$ in $\mathsf{Hyb}_1$ is perfectly uniform and independent of \AAA's view. Thus, the views in $\mathsf{Hyb}_1$ and $\mathsf{Hyb}_2$ are perfectly identical.}
    
    \item {$\mathsf{Hyb}_3$: This hybrid behaves identically to $\mathsf{Hyb}_2$, except that instead of passing honest receiver's inputs $\{k\|\tilde{w}_{k, t}\}_{k \in [d], t \in [u^\prime]}$ to $\Func[OPPRF]$ (Step 2), \Sim[\SSS] receives the programmed points $\{(k\|\tilde{q}_{k, t}, r_{k})\}_{k \in [d], t \in [u]}$ from $\AAA$, samples a random function $F$ such that $F(k \| \tilde{q}_{k, t}) = r_{k}$ for $k \in [d], t \in [u]$, and sends $\OOO^F$ to \AAA. Since the real functionality also returns $\OOO^F$ according to these programmed points, the views in $\mathsf{Hyb}_2$ and $\mathsf{Hyb}_3$ are perfectly identical.}
\end{itemize}

{The final hybrid $\mathsf{Hyb}_3$ is exactly the simulated execution of \Sim[\SSS]. Therefore, the simulated view is perfectly indistinguishable from the real execution.}

 \textbf{Corrupted Receiver.} 
{We construct a simulator $\Sim[\RRR]$ for a semi-honest corrupted receiver. Given the receiver's input $\vecw$ and the ideal output $z$, $\Sim[\RRR]$ runs $\AAA$ and proceeds as follows:}
\begin{enumerate}
    \item {\textit{Simulating $\Func[OPPRF]$ (Step 2):} After receiving the $\AAA$'s queries $\{k\|\tilde{w}_{k, t}\}_{k \in [d], t \in [u^\prime]}$, $\Sim[\RRR]$ samples $\{\hat{r}_{k, t}\}_{k \in [d], t \in [u^\prime]} \leftarrow \FF_{2^{\ell_1}}$, and returns them to $\AAA$.}
    \item {\textit{Simulating $\Func[OPPRF]$ (Step 3):} $\Sim[\RRR]$ receives the programmed points $\{(k\|\hat{r}_{k, t}, \hat{s}_{k})\}_{k \in [d], t \in [u^\prime]}$ from $\AAA$. $\Sim[\RRR]$ samples a random function $F^\prime$ such that $F^\prime(k\|\hat{r}_{k, t}) = \hat{s}_{k}$ for $k \in [d], t \in [u^\prime]$ and returns $\OOO^{F^\prime}$ to $\AAA$.}
    \item {\textit{Simulating $\Func[PEQT]$ (Step 5):} $\Sim[\RRR]$ receives the aggregate sum $\hat{s}$ from $\AAA$, and returns the ideal output $z$ to $\AAA$.}
\end{enumerate}

% \blue{We argue the simulated view of $\AAA$ is identically distributed to the real execution. 
% In Step 2 of the real protocol, $\AAA$ receives either the sender's random payload $r_k$ or a random evaluation of the function $F$. Because the honest sender samples $r_k \leftarrow \FF_{2^{\ell_1}}$ uniformly at random, 
% % and as shown in the correctness analysis, collisions occur with negligible probability, 
% the distribution of the evaluations $\{\hat{r}_{k, t}\}$ is uniformly random. $\Sim[\RRR]$ perfectly captures this by explicitly providing uniform values. 
% In Step 3, the second OPPRF transcript consists solely of $\OOO^{F^\prime}$, which is identically distributed to the real execution by definition of the OPPRF functionality.
% In Step 5, the $\Func[PEQT]$ output in the real world corresponds exactly to the ideal functionality's boolean output $z$. Thus, the proof is complete.}

{We demonstrate the security of the simulation through a sequence of hybrid games:}

\begin{itemize}
    \item {$\mathsf{Hyb}_0$: This is the real protocol execution.}
    
    \item {$\mathsf{Hyb}_1$: This hybrid behaves identically to $\mathsf{Hyb}_0$, except that instead of passing the honest sender's input ${s}$ to $\Func[PEQT]$ (Step 5), \Sim[\RRR] receives $\AAA$'s aggregate sum $\hat{s}$ and directly returns the ideal output $z$ to \AAA. 
    The only difference between $\mathsf{Hyb}_0$ and $\mathsf{Hyb}_1$ occurs if the protocol produces an incorrect evaluation. As established in the correctness analysis, the probability is statistically negligible.
    Thus, the views in $\mathsf{Hyb}_0$ and $\mathsf{Hyb}_1$ are statistically indistinguishable.}

    % We modify the simulation of $\Func[PEQT]$ (Step 5). Instead of computing an actual equality test, the simulator receives $\AAA$'s aggregate sum $\hat{s}$ and directly returns the ideal output $z^*$ obtained from $\Func[FMatch]$. 
    % The only difference between $\mathsf{Hyb}_2$ and $\mathsf{Hyb}_3$ occurs if the protocol produces an incorrect evaluation (e.g., $z^* = 0$ but $s = \hat{s}$). As established in the correctness analysis, the probability of false positives due to key collisions or accidental aggregate sum collisions is strictly bounded by $2^{-\lambda+1}$. Therefore, the views in these two hybrids are statistically indistinguishable ($\mathsf{Hyb}_2 \approx_s \mathsf{Hyb}_3$).

    \item {$\mathsf{Hyb}_2$:  This hybrid behaves identically to $\mathsf{Hyb}_1$, except that instead of passing the honest sender's queries to $\Func[OPPRF]$ (Step 3), \Sim[\RRR] receives the programmed points $\{(k\|\hat{r}_{k, t}, \hat{s}_{k})\}_{k\in[d], t\in[u^\prime]}$ from $\AAA$, samples a random function $F^\prime$ such that $F^\prime(k\|\hat{r}_{k, t}) = \hat{s}_{k}$ for $k \in [d], t \in [u^\prime]$ and returns $\OOO^{F^\prime}$ to $\AAA$. Since $\mathsf{Hyb}_1$ also returns $\OOO^{F^\prime}$ according to these programmed points, the views in $\mathsf{Hyb}_1$ and $\mathsf{Hyb}_2$ are perfectly identical.}

    % We modify the simulation of the second $\Func[OPPRF]$ (Step 3). The simulator receives the programmed points $\{(k\|\hat{r}_{k, t}, \hat{s}_{k})\}$ from $\AAA$ and returns a dummy handle $\OOO^{F^\prime}$. This perfectly mimics the ideal functionality ($\mathsf{Hyb}_1 \equiv \mathsf{Hyb}_2$).
    \item {$\mathsf{Hyb}_3$: This hybrid behaves identically to $\mathsf{Hyb}_2$, except that instead of passing the honest sender's programmed points to  $\Func[OPPRF]$ (Step 2), \Sim[\RRR] receives $\AAA$'s queries $\{k\|\tilde{w}_{k, t}\}_{k \in [d], t \in [u^\prime]}$ and returns uniformly random values $\{\hat{r}_{k, t}\}_{k\in[d], t\in[u^\prime]} \leftarrow \FF_{2^{\ell_1}}$ to \AAA.
    In $\mathsf{Hyb}_2$, $\AAA$ receives either the sender's payload $r_k$ or a random function evaluation. Because the honest sender samples $\{r_k\}_{k \in [d]} \leftarrow \FF_{2^{\ell_1}}$ uniformly at random, every returned evaluation is uniformly distributed. Thus, the views in $\mathsf{Hyb}_2$ and $\mathsf{Hyb}_3$ are perfectly identical.}
    
    % We modify the simulation of $\Func[PEQT]$ (Step 5). Instead of computing an actual equality test, the simulator receives $\AAA$'s aggregate sum $\hat{s}$ and directly returns the ideal output $z^*$ obtained from $\Func[FMatch]$. 
    % The only difference between $\mathsf{Hyb}_2$ and $\mathsf{Hyb}_3$ occurs if the protocol produces an incorrect evaluation (e.g., $z^* = 0$ but $s = \hat{s}$). As established in the correctness analysis, the probability of false positives due to key collisions or accidental aggregate sum collisions is strictly bounded by $2^{-\lambda+1}$. Therefore, the views in these two hybrids are statistically indistinguishable ($\mathsf{Hyb}_2 \approx_s \mathsf{Hyb}_3$).
\end{itemize}

{The final hybrid $\mathsf{Hyb}_3$ is exactly the simulated execution of \Sim[\RRR]. Therefore, the simulated view is statistically indistinguishable from the real execution.}

\end{proof}

% \noindent \textbf{Security against a Corrupted Sender.} 

% \noindent \textbf{Security against a Corrupted Receiver.} 

\subsection{Proof of Theorem \ref{thm: fuzzy-match-ot}}
\label{Proof of thm: fuzzy-match-ot}

\begin{proof}
\label{proof: fuzzy-match-ot}
{We first establish the correctness of the protocol, and then construct simulators to demonstrate that the simulated execution is perfectly indistinguishable from the real-world protocol execution.}

 \textbf{Correctness.} {The correctness of the protocol relies on accurately extracting the Most Significant Bit (MSB) of the terms $\delta - (q_k - w_k)$ and $\delta + (q_k - w_k)$. Assuming the inputs $q_k, w_k$, and the threshold $\delta$ are strictly bounded by $z$, the values of these two terms fall within the interval $(-z, 2z)$. To prevent wrap-around in the two's complement representation, the ring $\ZZ_{2^\ell}$ must be parameterized such that $\ell \ge \log z + 2$. Under this condition, these MSBs are correctly evaluated, and the overall correctness of the protocol directly follows from the correctness of the underlying $\Func[MSB]$ and $\Func[AND]$ functionalities.}

 \textbf{Corrupted Sender.} {We construct a simulator $\Sim[\SSS]$ for a semi-honest corrupted sender. Given the sender's input $\vecq$, $\Sim[\SSS]$ operates by running the adversary $\AAA$ and proceeds as follows:}
\begin{enumerate}
    \item {\textit{Simulating $\Func[MSB]$ (Step 2):} After receiving $\delta - [t_k]_\SSS^{\ell}$ from \AAA, $\Sim[\SSS]$ samples uniformly random shares $[a_k]_\SSS^1 \leftarrow \ZZ_2$ for all $k \in [d]$, and sends them to $\AAA$.}
    
    \item {\textit{Simulating $\Func[MSB]$ (Step 3):} After receiving $\delta + [t_k]_\SSS^{\ell}$ from \AAA, $\Sim[\SSS]$ samples uniformly random shares $[b_k]_\SSS^1 \leftarrow \ZZ_2$ for all $k \in [d]$, and sends them to $\AAA$.}
    
    \item {\textit{Simulating $\Func[AND]$ (Step 4):} After receiving \AAA's inputs, $\Sim[\SSS]$ samples a random share $[z]_\SSS^1 \leftarrow \ZZ_2$ and sends it to $\AAA$.}
\end{enumerate}

{We argue the simulated view of $\AAA$ is identically distributed to the real execution. Since the protocol operates in the $(\Func[MSB], \Func[AND])$-hybrid model, the messages received by $\AAA$ consist solely of its own secret shares. In the real execution, the shares $\{[a_k]^1_{\SSS}\}_{k \in [d]}$, $\{[b_k]^1_{\SSS}\}_{k \in [d]}$, and $[z]^1_{\SSS}$ are generated as uniformly random secret shares of the underlying functionality outputs. In the simulated execution, $\Sim[\SSS]$ explicitly samples them from the uniform distribution over $\ZZ_2$. Thus, the simulated view of $\AAA$ is identically distributed to its view in the real protocol.}

 \textbf{Corrupted Receiver.} {We then construct a simulator $\Sim[\RRR]$ for a semi-honest corrupted receiver. Given the receiver's input $\vecw$ and the ideal output $z$ from $\Func[FMatch]$, $\Sim[\RRR]$ runs $\AAA$ and proceeds as follows:}
\begin{enumerate}
    \item {\textit{Simulating $\Func[MSB]$ (Step 2):} After receiving $\delta - [t_k]_\RRR^{\ell}$ from \AAA, $\Sim[\RRR]$ samples uniform random shares $[a_k]_\RRR^1 \leftarrow \ZZ_2$ for all $k \in [d]$ and sends them to $\AAA$.}
    
    \item {\textit{Simulating $\Func[MSB]$ (Step 3):} After receiving $\delta + [t_k]_\RRR^{\ell}$ from \AAA, $\Sim[\RRR]$ samples uniform random shares $[b_k]_\RRR^1 \leftarrow \ZZ_2$ for all $k \in [d]$ and sends them to $\AAA$.}
    
    \item {\textit{Simulating $\Func[AND]$ (Step 4):} After receiving \AAA's inputs, $\Sim[\RRR]$ samples a random share $[z]_\RRR^1 \leftarrow \ZZ_2$ and sends it to $\AAA$.}
    
    \item {\textit{Simulating the final message (Step 5):} $\Sim[\RRR]$ computes the simulated sender's message as $[z]_\SSS^1 := [z]_\RRR^1 \oplus z \in \ZZ_2$, and sends it to $\AAA$.}
\end{enumerate}

{We argue the simulated view of $\AAA$ is identically distributed to the real execution. Similar to the corrupted sender case, the shares $\{[a_k]^1_{\RRR}\}_{k \in [d]}$, $\{[b_k]^1_{\RRR}\}_{k \in [d]}$, and $[z]^1_{\RRR}$ are uniformly random in both the real and simulated executions. The only constraint on the view is the final message $[z]^1_{\SSS}$. In the real protocol, $[z]^1_{\SSS}$ is the honest sender's share, which satisfies $[z]^1_{\SSS} \oplus [z]^1_{\RRR} = z$. In the simulation, $\Sim[\RRR]$ explicitly programs $[z]^1_{\SSS}$ to satisfy this exact relation using the ideal output $z$. Therefore, the simulated view of $\AAA$ and the real protocol view are perfectly identically distributed. Thus, the proof is complete.}
\end{proof}

\subsection{Proof of Theorem \ref{thm: fuzzy-match-p-ot}}
\label{Proof of thm: fuzzy-match-p-ot}

\begin{proof}
\label{proof: fuzzy-match-p-ot}
{We first establish the correctness of the protocol, and then construct simulators to demonstrate that the simulated execution is perfectly indistinguishable from the real-world protocol execution.}

 \textbf{Correctness.} {The correctness of the protocol relies on accurately extracting the Most Significant Bit (MSB) of the term $f := \delta^p - \sum_{k\in[d]} f_k$, where $f_k:= |q_k - w_k|^p$.
Assuming the inputs $q_k, w_k$ and the threshold $\delta$ are bounded by $z$, the sum $f$ falls within the interval $(- d \cdot z^{p}, z^{p})$.
To prevent wrap-around in the ring $\ZZ_{2^\ell}$, the ring must be parameterized such that $\ell \ge  \log_2 (d \cdot z^p) + 1$. Under this condition, the MSB accurately evaluates whether the distance exceeds the threshold, and the overall correctness directly follows from the correctness of the underlying $\Func[ABS]$, $\Func[Mult]$, and $\Func[MSB]$ functionalities.}

 \textbf{Corrupted Sender.} {We construct a simulator $\Sim[\SSS]$ for a semi-honest corrupted sender. Given the sender's input $\vecq$, $\Sim[\SSS]$ operates by running the adversary $\AAA$ and proceeds as follows:}
\begin{enumerate}
    \item {\textit{Simulating $\Func[ABS]$ or $\Func[Mult]$ (Step 2):} After receiving $[t_k]_\SSS^{\ell}$ from \AAA, $\Sim[\SSS]$ samples uniform random shares $[f_k]_\SSS^\ell \leftarrow \ZZ_{2^\ell}$ each all $k \in [d]$ and sends them to $\AAA$.}
    
    \item {\textit{Simulating $\Func[MSB]$ (Step 3):} After receiving $[f]_\SSS^{\ell}$ from \AAA, $\Sim[\SSS]$ samples a uniform random share $[z]_\SSS^1 \leftarrow \ZZ_2$ and sends it to $\AAA$.}
\end{enumerate}

{We argue the simulated view of $\AAA$ is identically distributed to the real execution. In the $(\Func[ABS], \Func[Mult], \Func[MSB])$-hybrid model, all intermediate values received by $\AAA$ are purely additive secret shares. In the real protocol, the shares $\{[f_k]_\SSS^\ell\}_{k \in [d]}$ and $[z]_\SSS^1$ are uniformly distributed. In the simulation, $\Sim[\SSS]$ explicitly samples them from the uniform distributions over $\ZZ_{2^\ell}$ and $\ZZ_2$, respectively. Thus, the simulated view of $\AAA$ is identically distributed to its view in the real protocol.}

 \textbf{Corrupted Receiver.} {We construct a simulator $\Sim[\RRR]$ for a semi-honest corrupted receiver. Given the receiver's input $\vecw$ and the ideal output $z \in \ZZ_2$ from $\Func[FMatch]$, $\Sim[\RRR]$ runs $\AAA$ and proceeds as follows:}
\begin{enumerate}
    \item {\textit{Simulating $\Func[ABS]$ or $\Func[Mult]$ (Step 2):} After receiving $[t_k]_\RRR^{\ell}$ from \AAA, $\Sim[\RRR]$ samples uniform random shares $[f_k]_\RRR^\ell \leftarrow \ZZ_{2^\ell}$ for each $k \in [d]$ and sends them to $\AAA$.}
    
    \item {\textit{Simulating $\Func[MSB]$ (Step 3):} After receiving $[f]_\RRR^{\ell}$ from \AAA, $\Sim[\RRR]$ samples a uniform random share $[z]_\RRR^1 \leftarrow \ZZ_2$ and sends it to $\AAA$.}
    
    \item {\textit{Simulating the final message (Step 4):} $\Sim[\RRR]$ computes the message as $m_\SSS := z \oplus [z]_\RRR^1 \in \ZZ_2$. $\Sim[\RRR]$ then sends $m_\SSS$ to $\AAA$, simulating the message $(1 \oplus [z]_\SSS^1)$ from the honest sender.}
\end{enumerate}

{We argue the simulated view of $\AAA$ is identically distributed to the real execution. Similar to the corrupted sender case, the shares $\{[f_k]_\RRR^\ell\}_{k \in [d]}$ and $[z]^1_{\RRR}$ are uniformly random in both executions. The final message $m_\SSS$ in the real protocol is $1 \oplus [z]_\SSS^1$, which perfectly satisfies the output relation $z = m_\SSS \oplus [z]_\RRR^1$. In the simulation, $\Sim[\RRR]$ directly constructs $m_\SSS$ to satisfy this exact relation using the ideal output $z$. Therefore, the simulated view of $\AAA$ and the real protocol view are perfectly identically distributed. Thus, the proof is complete.}
\end{proof}

\subsection{Proof of Theorem \ref{thm: fuzzy PSI L-p-low-px}}
\label{Proof of Theorem fuzzy PSI L-p-low-px}

\begin{proof}
{We first analyze the protocol's correctness and then prove the security of the protocol by constructing simulators for both a corrupted sender and a corrupted receiver.}

 \textbf{Correctness.}
{ 
We first consider the matching case, where for a sender point $\vecq_j \in Q$, there exists a receiver point $\vecw_i \in W$ such that $\mathsf{dist}_p(\vecw_i, \vecq_j) \le \delta$. 
By the definition of spatial hashing, there must exist a unique index $h^* \in [2^d]$ such that the computed cell identifiers match: $\CCC_{\vecw_i,h^*} = \CCC_{\vecq_j}$. 
By the unique center assumption, the sender's points are sufficiently separated such that each grid cell contains at most one sender point. Consequently, the Cuckoo hash table $T_q$ is successfully constructed with overwhelming probability. Let $j^*$ denote the bin index of the tuple $(\CCC_{\vecq_j}, \vecq_j)$ in $T_q$. 
Because both parties use identical hash functions, the receiver's corresponding tuple $(\CCC_{\vecw_i,h^*}, \vecw_i)$ is mapped to the same bin index $T_w[j^*]$. 
By the unique ball assumption, a cell intersects at most one $\delta$-radius receiver ball. Thus, the cell identifiers $\CCC_{\vecw_i,h}$ are unique for all $i \in [n], h \in [2^d]$, ensuring no key collisions exist within the programmed sets $\{L_j\}_{j \in [m^\prime]}$ provided to $\Func[OPPRF]$. 
By the correctness of $\Func[OPPRF]$, the sender queries with $(\CCC_{\vecq_j}\|\tau)$ from $T_q[j^*]$ and successfully learns the programmed payload:
\[
\hat{s}_{j^*} \,\|\, \hat{\vecr}_{j^*}
= s_{j^*} \,\|\, \bigl(
\mathsf{shift}_{\lceil \log 6\delta \rceil}
(\vecw_i, \CCC_{\vecw_i,h^*} - 2\delta) - \vecr_{j^*} \bmod 2^{\ell_2}
\bigr).
\]
Meanwhile, the receiver holds $s_{j^*}$ and $\vecr_{j^*}$ for the $j^*$-th bin of
$T_w$.

To ensure that the computation will not overflow in OT-based fuzzy matching, we set the bit length $\ell_2$ according to the analysis in Appendix~\ref{Proof of thm: fuzzy-match-ot} and Appendix~\ref{Proof of thm: fuzzy-match-p-ot}.
Given that the input is less than $6\delta$ in the reduced domain, we set $\ell_2 := \lceil \log (6\delta)\rceil + 2$ for $L_\infty$ distance and $\ell_2 := \lceil \log d (6\delta)^p \rceil + 1$ for $L_{p}$ distance.
Note that for $L_2$ distance, $\ell_2$ can be further optimized to $\lceil \log d (3\delta)^2 \rceil$, according to the fact that if the cell of $\vecq_j$ and $\vecw_i$'s ball is matched then $|q_{j, k} - w_{i, k}| < 3\delta$ for all $k \in [d]$.
Finally, by the correctness of the functionalities
\Func[ssFMatch], \Func[ssPEQT], \Func[AND], \Func[Perm], and \Func[OT],
the receiver \RRR correctly obtains $\vecq_j$.
}

{We next consider the mismatch case, where for a sender point $\vecq_j \in Q$, it holds that $\mathsf{dist}_p(\vecw_i, \vecq_j) > \delta$ for all $\vecw_i \in W$. Let $j^*$ denote its bin index in $T_q$. We distinguish two sub-cases:
\emph{Case 1 (Cell Collision, Distance Mismatch):} There exists a $\vecw_i \in W$ and $h^* \in [2^d]$ such that $\CCC_{\vecw_i,h^*} = \CCC_{\vecq_j}$. Because the distance strictly exceeds $\delta$, the correctness of $\Func[ssFMatch]$ ensures the reconstructed output is $\nu_{j^*} = 0$.
\emph{Case 2 (No Cell Collision):} For all $\vecw_i \in W$ and $h \in [2^d]$, $\CCC_{\vecw_i,h} \neq \CCC_{\vecq_j}$. The sender's query to $\Func[OPPRF]$ is unprogrammed, yielding a uniformly random value $\hat{s}_{j^*} \in \FF_{2^{\ell_1}}$. The probability of an accidental collision $\hat{s}_{j^*} = s_{j^*}$ is $2^{-\ell_1}$. Setting $\ell_1 := \lambda + \log m^\prime$ ensures that by a union bound, the probability of any such collision across all $m^\prime$ bins is bounded by $2^{-\lambda}$. Thus, except with negligible probability, $\Func[ssPEQT]$ correctly ensures the reconstructed output is $\mu_{j^*} = 0$.
In both sub-cases, the $\Func[AND]$ functionality produces shares of $0$, and the final $\Func[OT]$ delivers $\bot$ to the receiver. This completes the correctness proof.}

% To prevent overflow during the distance evaluations within $\Func[ssFMatch]$, we bound the bit length $\ell_2$. Because the shifted coordinates represent relative offsets within a cell, the maximum coordinate difference is bounded by $6\delta$. Thus, we set $\ell_2 := \lceil \log (6\delta)\rceil + 2$ for the $L_\infty$ distance, and $\ell_2 := \lceil \log d (6\delta)^p \rceil + 1$ for the $L_p$ distance. 

\textbf{Corrupted Sender.} { We construct a simulator $\Sim[\SSS]$ for a semi-honest corrupted sender. Given the sender's input $Q$, $\Sim[\SSS]$ runs the adversary $\AAA$ and proceeds as follows:}
\begin{enumerate}
    \item {\textit{Simulating $\Func[OPPRF]$ (Step 5):} After receiving the queries, $\Sim[\SSS]$ samples uniformly random values $\hat{s}_j \leftarrow \FF_{2^{\ell_1}}$ and $\hat{\vecr}_j \leftarrow \ZZ^d_{2^{\ell_2}}$ for all $j \in [m^\prime]$, and returns $\{\hat{s}_j \|\hat{\vecr}_j\}_{j \in [m^\prime]}$ to $\AAA$.}
    
    \item {\textit{Simulating $\Func[ssFMatch]$, $\Func[ssPEQT]$, and $\Func[AND]$ (Steps 6-8):} After receiving \AAA's inputs to these functionalities, $\Sim[\SSS]$ simulates the outputs by sampling uniformly random bits $\nu^\SSS_j, \mu^\SSS_j, b^\SSS_j \leftarrow \FF_2$ for all $j \in [m^\prime]$ and returns them to $\AAA$.}
    
    \item {\textit{Simulating $\Func[Perm]$ (Step 9):} After receiving $\{b_j^\SSS\}_{j \in [m^\prime]}$ from \AAA, $\Sim[\SSS]$ samples a random permutation $\pi$ over $[m^\prime]$ and returns it to $\AAA$.}
    
    \item {\textit{Simulating $\Func[OT]$ (Step 10):} After receiving OT's inputs from \AAA, $\Sim[\SSS]$ provides no output to $\AAA$.}
\end{enumerate}

% \blue{We argue that the simulated view is perfectly indistinguishable from the real execution. In Step 5 of the real protocol, by the assumptions on both parties' sets, if a valid spatial match occurs, $\hat{\vecr}_j$ is masked by the receiver's uniformly random value $\vecr_j$; if no match occurs, it is a random function evaluation. In both cases, $\hat{\vecr}_j$ (and similarly $\hat{s}_j$) is uniformly distributed over its respective domain, completely independent of the sender's input. $\Sim[\SSS]$ identically simulates this by sampling uniform values. In Steps 6-8, the ideal functionalities $\Func[ssFMatch]$, $\Func[ssPEQT]$, and $\Func[AND]$ explicitly output uniform secret shares to the sender. $\Sim[\SSS]$ faithfully generates these uniform bits. The permutation $\pi$ in Step 9 is explicitly uniform in both worlds. Thus, the views are identically distributed.}

{We demonstrate the security of the simulation through a sequence of hybrid games:}

\begin{itemize}
    \item {$\mathsf{Hyb}_0$: The real protocol execution in the hybrid model.}

    \item {$\mathsf{Hyb}_1$: This hybrid behaves identically to $\mathsf{Hyb}_0$, except that instead of invoking \Func[Perm] (Step 9) and \Func[OT] (Step 10), \Sim[\SSS] returns a random permutation $\pi$ over $[m^\prime]$ as the output of \Func[Perm], and provides no output for \AAA in \Func[OT], which perfectly simulates the functionalities' behavior. Thus, the views in $\mathsf{Hyb}_0$ and $\mathsf{Hyb}_1$ are perfectly identical.}

    \item {$\mathsf{Hyb}_2$: This hybrid behaves identically to $\mathsf{Hyb}_1$ except the simulation of \Func[ssFMatch], \Func[ssPEQT], and \Func[AND] (Steps 6-8). \Sim[\SSS] receives $\AAA$'s inputs and simply returns uniformly random bits $\nu^\SSS_j, \mu^\SSS_j, b^\SSS_j \leftarrow \FF_2$ for all $j \in [m^\prime]$. Because these ideal functionalities output uniform secret shares to \AAA in $\mathsf{Hyb}_1$, the views in $\mathsf{Hyb}_1$ and $\mathsf{Hyb}_2$ are perfectly identical.}

    \item {$\mathsf{Hyb}_3$: This hybrid behaves identically to $\mathsf{Hyb}_2$, except that instead of passing the honest receiver's programmed points to  $\Func[OPPRF]$ (Step 5), $\Sim[\RRR]$ receives $\AAA$'s queries and returns uniformly random values $\hat{s}_j \leftarrow \FF_{2^{\ell_1}}$ and $\hat{\vecr}_j \leftarrow \ZZ^d_{2^{\ell_2}}$ for all $j \in [m^\prime]$. In $\mathsf{Hyb}_2$, if a valid spatial match occurs, $\hat{\vecr}_j$ is masked by the receiver's uniformly random value $\vecr_j$, making it uniformly distributed. If no match occurs, it is a random function evaluation. The case of $\hat{s}_j$ is similar. Because the honest receiver samples all $s_j$ and $\vecr_j$ uniformly at random, every evaluation $\AAA$ receives in $\mathsf{Hyb}_2$ is perfectly uniform and independent of $\AAA$'s input. Thus, the views in $\mathsf{Hyb}_2$ and $\mathsf{Hyb}_3$ are perfectly identical.}

\end{itemize}

{The final hybrid $\mathsf{Hyb}_3$ is exactly the simulated execution of \Sim[\SSS]. Therefore, the simulated view is perfectly indistinguishable from the real execution.}

\textbf{Corrupted Receiver.} { We construct a simulator $\Sim[\RRR]$ for a semi-honest corrupted receiver. Given the receiver's input $W$ and the ideal output set $Z$ from $\Func[FPSI]$, $\Sim[\RRR]$ runs $\AAA$ and proceeds as follows:}
\begin{enumerate}

    \item {\textit{Simulating $\Func[OPPRF]$ (Step 5):} After receiving the programmed points $\{L_j\}_{j \in [m^\prime]}$ from $\AAA$, $\Sim[\RRR]$ samples a random function $F$ programmed on $\{L_j\}_{j \in [m^\prime]}$. \Sim[\RRR] sends $\OOO^F$  to \AAA.}
    
    \item {\textit{Simulating $\Func[ssFMatch]$, $\Func[ssPEQT]$, and $\Func[AND]$ (Steps 6-8):} After receiving $\AAA$'s inputs, $\Sim[\RRR]$ samples uniformly random bits $\nu^\RRR_j, \mu^\RRR_j, b^\RRR_j \leftarrow \FF_2$ for all $j \in [m^\prime]$ and returns them to $\AAA$.}
    
    \item {\textit{Simulating $\Func[Perm]$ and $\Func[OT]$ (Steps 9-10):} To simulate the final delivery of the intersection, $\Sim[\RRR]$ pads the ideal output set $Z$ with $\bot$ elements until it contains exactly $m^\prime$ items. $\Sim[\RRR]$ shuffles this padded set randomly to yield $\{z_j\}_{j \in [m^\prime]}$. For each $j \in [m^\prime]$, $\Sim[\RRR]$ defines $\hat{b}_j := 0$ if $z_j = \bot$, and $\hat{b}_j := 1$ otherwise. After receiving \AAA's inputs, $\Sim[\RRR]$ returns $\{\hat{b}_j\}_{j \in [m^\prime]}$ as the output of $\Func[Perm]$ and $\{z_j\}_{j \in [m^\prime]}$ as the output of the final $\Func[OT]$.}
\end{enumerate}

% \blue{We argue the simulated view of $\AAA$ is identically distributed to the real execution. In Step 5, the receiver obtains a programmed random function, matching the ideal functionality. In Steps 6-8, the intermediate values $\nu^\RRR_j, \mu^\RRR_j, b^\RRR_j$ are uniform secret shares in the real protocol, which $\Sim[\RRR]$ correctly models by sampling uniformly. The crucial simulation occurs in Steps 9-10. In the real protocol, $\Func[Perm]$ applies a uniformly random permutation to the underlying match bits, and $\Func[OT]$ returns the permuted intersection elements interspersed with $\bot$. Because the permutation is random and unknown to the receiver, the sequence of actual payloads and $\bot$s is uniformly distributed over all size-$m^\prime$ arrays containing exactly $|Z|$ payloads. $\Sim[\RRR]$ identically reconstructs this distribution by randomly shuffling the padded ideal output set $Z$ and setting the selection bits $\hat{b}_j$ consistently. Thus, the simulation integrates the ideal functionality's output into the receiver's view, completing the proof.}

{We demonstrate the security of the simulation through a sequence of hybrid games:}

\begin{itemize}
    \item {$\mathsf{Hyb}_0$: This is the real protocol execution.}
    
    \item {$\mathsf{Hyb}_1$: This hybrid behaves identically to $\mathsf{Hyb}_0$ except the simulation of \Func[Perm] and \Func[OT] (Steps 9-10).
    \Sim[\RRR] pads the ideal output set $Z$ with $\bot$ elements until it contains exactly $m^\prime$ items. It then randomly shuffles this padded set to yield $\{z_j\}_{j \in [m^\prime]}$. For each $j \in [m^\prime]$, \Sim[\RRR] defines $\hat{b}_j := 1$ if $z_j \neq \bot$, and $\hat{b}_j := 0$ otherwise. It returns $\{\hat{b}_j\}_{j \in [m^\prime]}$ as the output of \Func[Perm] and $\{z_j\}_{j \in [m^\prime]}$ as the output of \Func[OT]. 
    In $\mathsf{Hyb}_0$, \Func[Perm] applies a uniformly random permutation to the underlying match bits, and \Func[OT] returns the corresponding intersection elements interspersed with $\bot$s. Because the permutation is uniformly random and completely hidden from \AAA, the sequence of bits and retrieved payloads is simply a uniformly random arrangement of exactly $|Z|$ matches and $m^\prime - |Z|$ $\bot$s. 
    The only difference between $\mathsf{Hyb}_0$ and $\mathsf{Hyb}_1$ occurs if the protocol produces an incorrect evaluation. As established in the correctness analysis, the probability is statistically negligible.
    Thus, the views in $\mathsf{Hyb}_0$ and $\mathsf{Hyb}_1$ are statistically indistinguishable.}
    
    \item {$\mathsf{Hyb}_2$: This hybrid behaves identically to $\mathsf{Hyb}_1$ except the simulation of \Func[ssFMatch], \Func[ssPEQT], and \Func[AND] (Steps 6-8). \Sim[\RRR] receives $\AAA$'s inputs to these functionalities and returns uniformly random bits $\nu^\RRR_j, \mu^\RRR_j, b^\RRR_j \leftarrow \FF_2$ for all $j \in [m^\prime]$. In Steps 6-8, the intermediate values $\nu^\RRR_j, \mu^\RRR_j, b^\RRR_j$ are uniform secret shares in $\mathsf{Hyb}_1$, which $\Sim[\RRR]$ correctly models by sampling uniformly. Thus, the views in $\mathsf{Hyb}_1$ and $\mathsf{Hyb}_2$ are perfectly identical.}

    \item {$\mathsf{Hyb}_3$: This hybrid behaves identically to $\mathsf{Hyb}_2$ except the simulation of \Func[OPPRF] (Step 5).
    After receiving the programmed points $\{L_j\}_{j \in [m^\prime]}$ from $\AAA$, $\Sim[\RRR]$ samples a random function $F$ programmed on $\{L_j\}_{j \in [m^\prime]}$. \Sim[\RRR] sends $\OOO^F$ to \AAA. Since $\mathsf{Hyb}_2$ also returns $\OOO^{F}$ according to these programmed points, the views in $\mathsf{Hyb}_2$ and $\mathsf{Hyb}_3$ are perfectly identical.}

\end{itemize}

{The final hybrid $\mathsf{Hyb}_3$ is exactly the simulated execution of \Sim[\RRR]. Therefore, the simulated view is statistically indistinguishable from the real execution.}

\end{proof}

\subsection{Proof of Theorem \ref{thm: fuzzy PSI L-inf-high-px}}

\label{Proof of thm: fuzzy PSI L-inf-high-px}

\begin{proof}
\label{proof: Proof of fuzzy PSI L-inf-high-px}
{We first analyze the protocol's correctness and then prove the security of the protocol by constructing simulators for both a corrupted sender and a corrupted receiver.}
 
 \textbf{Correctness.} 
{Before analyzing correctness for different cases, we illustrate the distinctness of programmed points in two invocations of OPPRF. Below, we consider any dimension $k \in [d]$. 
(1) \textit{In the first OPPRF invocation,} the globally disjoint assumption of the sender set ensures that for any $j, j^\prime \in [m]$, $[q_{j, k}-\delta, q_{j, k}+\delta] \cap [q_{j^\prime, k}-\delta, q_{j^\prime, k}+\delta] = \emptyset$ and hence $|q_{j, k}-q_{j^\prime, k}| > 2\delta$. By Corollary~\ref{corollary: Distinctness of PxTrie}, this implies the intersection of their prefixes is empty: $\{\tilde{q}_{j, k,t}\}_{t\in[u]} \cap \{\tilde{q}_{j^\prime, k,t}\}_{t\in[u]} = \emptyset$. 
Therefore, all programmed points $\{\tilde{q}_{j, k,t}\}_{j \in [m], t\in[u]}$ of the first OPPRF are distinct.
(2) \textit{In the second OPPRF invocation,} the globally disjoint assumption of the receiver set ensures that for any $i, i^\prime \in [n]$, $[w_{i, k}-\delta, w_{i, k}+\delta] \cap [w_{i^\prime, k}-\delta, w_{i^\prime, k}+\delta] = \emptyset$ and hence $|w_{i, k}-w_{i^\prime, k}| > 2\delta$.
By Corollary~\ref{corollary: Distinctness of PxPath}, this implies the intersection of their prefixes is empty: $\{\tilde{w}_{i, k,t}\}_{t\in[u^\prime]} \cap \{\tilde{w}_{i^\prime, k,t}\}_{t\in[u^\prime]} = \emptyset$. 
Moreover, by Corollary~\ref{corollary: uniqueness}, for each sender prefix set $\{\tilde{q}_{j, k,t}\}_{t\in[u]}$, there is at most one receiver prefix set $\{\tilde{w}_{i, k,t}\}_{t\in[u^\prime]}$ with a unique $t$ intersected with it. 
Therefore, by the correctness of the first OPPRF, $\{\hat{r}_{i, k,t}\}_{i \in [n], t\in[u^\prime]}$ are independent and uniformly random, since $\hat{r}_{i, k,t}$ is either a random function value (unmatched with all programmed points) or some $r_{j, k}$ (matched with a point).
By setting appropriate bit lengths as analyzed below, all values within the set $\{\hat{r}_{i, k, t}\}_{i \in [n], t \in [u^\prime]}$ for a given dimension $k$ are distinct.
This ensures that all programmed points of the second OPPRF are distinct.
}

% By setting appropriate bitlength $\ell_1$, then all programmed points $\{\hat{r}_{i, k,t}\}_{i \in [n], t\in[u^\prime]}$ of the second OPPRF are distinct with overwhelming probability.

{
With the correct invocations of OPPRF established, we proceed to analyze the correctness for different cases.
}

{
We first consider the matching case where for a sender point $\vecq_j \in Q$, there exists $\vecw_i \in W$ such that $\mathsf{dist}_\infty(\vecw_i, \vecq_j) \leq \delta$. 
This implies that $w_{i,k} \in [q_{j,k}-\delta, q_{j,k}+\delta]$ for all dimensions $k \in [d]$. 
By Theorem~\ref{theorem: prefix}, for each dimension $k \in [d]$, there exists a unique index $t^*_k \in [u^\prime]$ such that $\tilde{w}_{i, k,t^*_k} \in \{\tilde{q}_{j, k, t}\}_{t\in[u]}$. 
By the correctness of the first $\Func[OPPRF]$, the receiver successfully learns $\hat{r}_{i, k, t^*_k} = r_{j, k}$.
Similarly, by the correctness of the second $\Func[OPPRF]$, the sender learns $s_{j, k} = \hat{s}_{i, k}$.
This results in the aggregate sums matching $s_j = \hat{s}_i$.
Finally, by the correctness of $\Func[wLPSI]$, the receiver queries $\hat{s}_i$, and successfully obtains $s_j$ and the associated payload $\vecq_j$.
}
% , while as illustrated above the remaining evaluations $\{\hat{r}_{i, k, t}\}_{t \neq t^*}$ are uniformly random and independent. 
% By the globally disjoint assumption, the intervals $[q_{j,k}-\delta, q_{j,k}+\delta]$ are strictly non-overlapping for all $j \in [m]$. 
% Consequently, $\tilde{w}_{i, k,t^*}$ matches the prefix of exactly one sender point $\vecq_j$ in dimension $k$. 
% Because the inputs to the second $\Func[OPPRF]$ are well-formed (as shown in the collision bounds below, all $\hat{r}$ values are distinct), the sender learns $s_{j, k} = \hat{s}_{i, k}$ for all $k \in [d]$. 
% This results in the aggregate sums matching ($s_j = \hat{s}_i$). 

{
We next consider the mismatch case, where for a sender point $\vecq_j \in Q$, it holds that $\mathsf{dist}_\infty(\vecw_i, \vecq_j) > \delta$ for all $\vecw_i \in W$. 
This yields that for any $\vecw_i$, there exists at least one dimension $k^*$ such that $w_{i,k^*} \notin [q_{j, k^*}-\delta, q_{j, k^*}+\delta]$.
By Theorem~\ref{theorem: prefix}, this implies that the intersection of their prefixes is empty: $\{\tilde{q}_{j, k^*,t}\}_{t\in[u]} \cap \{\tilde{w}_{i, k^*,t}\}_{t\in[u^\prime]} = \emptyset$.
By the correctness of the first $\Func[OPPRF]$, all evaluations $\{\hat{r}_{i, k^*, t}\}_{t \in [u^\prime]}$ obtained by the receiver are uniformly random and are independent of $r_{j, k^*}$.
To ensure correctness of the entire protocol, we must guarantee two conditions: (1) no random function evaluation $\{\hat{r}_{i, k^*, t}\}_{t \in [u^\prime]}$ accidentally collides with any of the sender's payloads $r_{j, k^*}$ (preventing false positives), and (2) all values within the set $\{\hat{r}_{i, k^*, t}\}_{i \in [n], t \in [u^\prime]}$ are strictly distinct, ensuring the receiver does not provide duplicate keys when programming the second $\Func[OPPRF]$. 
Considering the execution of OPPRF on the two sets, the total number of points drawn in each dimension is bounded by $(n u^\prime + m)$. The probability of any collision occurring among these points is bounded by $(n u^\prime + m)^2 / 2^{\ell_1}$. 
Considering all $d$ dimensions, setting $\ell_1 := \lambda + 2\log (n u^\prime + m) + \log d$ ensures the probability of a collision is bounded by $1/2^\lambda$, which is negligible. 
Therefore, $r_{j, k^*} \notin \{\hat{r}_{i, k^*, t}\}_{t \in [u^\prime]}$ with overwhelming probability.

Under the above condition, by the correctness of the second $\Func[OPPRF]$, {the sender receives a uniformly random $s_{j, k^*}$}, independent to $\hat{s}_{i, k^*}$, since $s_{j, k^*}$ is either a random value (unmatched with all programmed points) or some other $\hat{s}_{i^\prime, k^*}$ (matched with a point from $\hat{r}_{i^\prime, k^*, t}$).
Because dimensions are evaluated independently, and because for each $i \in [n]$, there exists $k^*$ such that $s_{j, k^*}$ and $\hat{s}_{i, k^*}$ are uniformly random and independent, the aggregate sums $s_j$ and $\hat{s}_i$ are uniformly random and independent. Setting $\ell_2 := \lambda + \log m + \log n$ guarantees that the probability of an accidental collision $s_j = \hat{s}_i$ across all $m \times n$ possible pairs is bounded by $(m n)/2^{\ell_2} \le 1/2^\lambda$. 
Thus, the $\Func[wLPSI]$ query fails, returning nothing for this mismatch, completing the correctness proof.}

 \textbf{Corrupted Sender.} {We construct a simulator $\Sim[\SSS]$ for a semi-honest corrupted sender. Given the sender's input $Q$, $\Sim[\SSS]$ runs the adversary $\AAA$ and proceeds as follows:}
\begin{enumerate}
    \item {\textit{Simulating $\Func[OPPRF]$ (Step 3):} $\Sim[\SSS]$ receives the programmed points $\{(k\|\tilde{q}_{j, k, t}, r_{j, k})\}_{j \in [m], k \in [d], t \in [u]}$ generated by $\AAA$. $\Sim[\SSS]$ samples a random function $F$ such that $F(k\|\tilde{q}_{j, k, t}) = r_{j, k}$ for $j \in [m], k \in [d], t \in [u]$, and returns $\OOO^F$ to $\AAA$.}
    
    \item {\textit{Simulating $\Func[OPPRF]$ (Step 4):} After receiving the \AAA's queries $\{k\|r_{j, k}\}_{j \in [m], k \in [d]}$, $\Sim[\SSS]$ samples uniformly random values $s_{j, k} \leftarrow \FF_{2^{\ell_2}}$ for all $j \in [m], k \in [d]$, and sends them to $\AAA$.}
    
    \item {\textit{Simulating $\Func[wLPSI]$ (Step 6):} After receiving $\{(s_j, \vecq_j)\}_{j \in [m]}$, $\Sim[\SSS]$ returns nothing.}
\end{enumerate}

% \blue{We argue the simulated view of $\AAA$ is identically distributed to the real execution. In the first OPPRF, the sender receives $\OOO^F$ according to the sender's input, perfectly matching the simulated step. In the second OPPRF, as illustrated in the correctness analysis, the real protocol dictates that the sender's output $s_{j,k}$ is either equal to the receiver's programmed share $\hat{s}_{i,k}$ (if a valid prefix match occurred) or is a random output. Because the honest receiver samples all values $\hat{s}_{i,k} \leftarrow \FF_{2^{\ell_2}}$ uniformly at random, every element $s_{j,k}$ returned to the sender is uniformly distributed. $\Sim[\SSS]$ identically mimics this distribution by explicitly sampling uniform values from $\FF_{2^{\ell_2}}$.}

{We demonstrate the security of the simulation through a sequence of hybrid games:}

\begin{itemize}
    \item {$\mathsf{Hyb}_0$: This is the real protocol execution.}
    
    \item {$\mathsf{Hyb}_1$: This hybrid behaves identically to $\mathsf{Hyb}_0$, except that instead of invoking $\Func[wLPSI]$ (Step 6), \Sim[\SSS] receives $\{(s_j, \vecq_j)\}_{j \in [m]}$ from $\AAA$ and returns nothing, which perfectly simulates the $\Func[wLPSI]$'s behavior. Thus, the views in $\mathsf{Hyb}_0$ and $\mathsf{Hyb}_1$ are perfectly identical.} 
    
    \item {$\mathsf{Hyb}_2$: This hybrid behaves identically to $\mathsf{Hyb}_1$, except that instead of passing the honest receiver's programmed points to  $\Func[OPPRF]$ (Step 4), \Sim[\SSS] directly returns uniformly random values $\{s_{j, k}\}_{j \in [m], k \in [d]} \leftarrow \FF_{2^{\ell_2}}$ to \AAA. In $\mathsf{Hyb}_1$, as illustrated in the correctness analysis, $s_{j,k}$ is either equal to the receiver's choice $\hat{s}_{i,k}$ or a random function output. Because the honest receiver samples $\hat{s}_{i,k} \leftarrow \FF_{2^{\ell_2}}$ uniformly at random, every $s_{j,k}$ in $\mathsf{Hyb}_1$ is perfectly uniform and independent of \AAA's view. Thus, the views in $\mathsf{Hyb}_1$ and $\mathsf{Hyb}_2$ are perfectly identical.}

    \item {$\mathsf{Hyb}_3$: This hybrid behaves identically to $\mathsf{Hyb}_2$, except that instead of passing honest receiver's queries to $\Func[OPPRF]$ (Step 3), \Sim[\SSS] receives the programmed points $\{(k\|\tilde{q}_{j, k, t}, r_{j, k})\}_{j\in[m], k\in[d], t\in[u]}$ from $\AAA$, samples a random function $F$ such that $F(k\|\tilde{q}_{j, k, t}) = r_{j, k}$ for $j\in[m], k\in[d], t\in[u]$, and sends $\OOO^F$ to \AAA. Since $\mathsf{Hyb}_2$ also returns $\OOO^F$ according to these programmed points, the views in $\mathsf{Hyb}_2$ and $\mathsf{Hyb}_3$ are perfectly identical.}

\end{itemize}

{The final hybrid $\mathsf{Hyb}_3$ is exactly the simulated execution of \Sim[\SSS]. Therefore, the simulated view is perfectly indistinguishable from the real execution.}

 \textbf{Corrupted Receiver.} {We construct a simulator $\Sim[\RRR]$ for a semi-honest corrupted receiver. Given the receiver's input $W$ and the ideal output set $Z$ from $\Func[FPSI]$, $\Sim[\RRR]$ runs $\AAA$ and proceeds as follows:}
\begin{enumerate}
    \item {\textit{Simulating $\Func[OPPRF]$ (Step 3):} After receiving the \AAA's queries $\{k\|\tilde{w}_{i, k, t}\}_{i \in [n], k \in [d], t \in [u^\prime]}$, $\Sim[\RRR]$ samples uniformly random values $\{\hat{r}_{i, k, t}\}_{i \in [n], k \in [d], t \in [u^\prime]}$, and returns them to $\AAA$.}
    
    \item {\textit{Simulating $\Func[OPPRF]$ (Step 4):} After receiving the programmed points $\{(k\|\hat{r}_{i, k, t}, \hat{s}_{i, k})\}_{i \in [n], k \in [d], t \in [u^\prime]}$ from $\AAA$, $\Sim[\RRR]$ samples a random function $F^\prime$ such that $F^\prime(k\|\hat{r}_{i, k, t}) = \hat{s}_{i, k}$ for $i \in [n], k \in [d], t \in [u^\prime]$. $\Sim[\RRR]$ returns $\OOO^{F^\prime}$ to $\AAA$.}

    \item {\textit{Simulating $\Func[wLPSI]$ (Step 6):}  
    After receiving \AAA's input $\{\hat{s}_i\}_{i \in [n]}$, for every point $\vecq_j \in Z$, $\Sim[\RRR]$ identifies unique $i^* \in [n]$ such that $\mathsf{dist}_\infty(\vecw_{i^*}, \vecq_j) \le \delta$. $\Sim[\RRR]$ constructs the output set $I := \{(\hat{s}_{i^*}, \vecq_j) \mid \vecq_j \in Z\}$ and returns $I$ to $\AAA$.}
\end{enumerate}

% \blue{We argue the simulated view of $\AAA$ is identically distributed to the real execution. In Step 3, as illustrated in the correctness analysis, the real protocol dictates that the sender's output $\hat{r}_{i,k,t}$ is uniformly random. $\Sim[\RRR]$ perfectly captures this by explicitly providing uniform values.
% In Step 4, the receiver receives $\OOO^{F^\prime}$ based on the receiver's input, thereby perfectly matching the simulated step. In Step 6, the true $\Func[wLPSI]$ outputs the intersection using the keys generated in the protocol. Because the globally disjoint assumption guarantees that a receiver point $\vecw_{i^*}$ can match at most one sender point $\vecq_j$, the relation between the receiver's keys $\hat{s}_{i^*}$ and the payload $\vecq_j$ is perfectly unambiguous. $\Sim[\RRR]$ constructs the output set $I$ exactly as it would appear in the real protocol, successfully embedding the ideal functionality's output $Z$ into the simulated view. Thus, the proof is complete.}

{We demonstrate the security of the simulation through a sequence of hybrid games:}

\begin{itemize}
    \item {$\mathsf{Hyb}_0$: This is the real protocol execution.}
    
    \item {$\mathsf{Hyb}_1$: This hybrid behaves identically to $\mathsf{Hyb}_0$ except the simulation of \Func[wLPSI] (Step 6). After receiving \AAA's input $\{\hat{s}_i\}_{i \in [n]}$, for each $\vecq_j \in Z$, $\Sim[\RRR]$ identifies the unique $i^* \in [n]$ such that $\mathsf{dist}_\infty(\vecw_{i^*}, \vecq_j) \le \delta$. This uniqueness is guaranteed by the globally disjoint assumption. 
    $\Sim[\RRR]$ constructs the output set $I := \{(\hat{s}_{i^*}, \vecq_j) \mid \vecq_j \in Z\}$, and returns $I$ to $\AAA$. The only difference between $\mathsf{Hyb}_0$ and $\mathsf{Hyb}_1$ occurs if the protocol produces an incorrect evaluation. As established in the correctness analysis, the probability is statistically negligible.
    Thus, the views in $\mathsf{Hyb}_0$ and $\mathsf{Hyb}_1$ are statistically indistinguishable.}
    
    \item {$\mathsf{Hyb}_2$: This hybrid behaves identically to $\mathsf{Hyb}_1$ except that instead of invoking $\Func[OPPRF]$ (Step 4), after receiving the programmed points $\{(k\|\hat{r}_{i, k, t}, \hat{s}_{i, k})\}_{i \in [n], k \in [d], t \in [u^\prime]}$ from $\AAA$, $\Sim[\RRR]$ samples a random function $F^\prime$ such that $F^\prime(k\|\hat{r}_{i, k, t}) = \hat{s}_{i, k}$ for $i \in [n], k \in [d], t \in [u^\prime]$ and returns $\OOO^{F^\prime}$ to $\AAA$. Since $\mathsf{Hyb}_1$ also returns $\OOO^{F^\prime}$ according to these programmed points, the views in $\mathsf{Hyb}_1$ and $\mathsf{Hyb}_2$ are perfectly identical.}

    \item {$\mathsf{Hyb}_3$: This hybrid behaves identically to $\mathsf{Hyb}_2$, except that instead of passing the honest sender's points to  $\Func[OPPRF]$ (Step 3), $\Sim[\RRR]$ receives $\AAA$'s queries $\{k\|\tilde{w}_{i, k, t}\}_{i \in [n], k \in [d], t \in [u^\prime]}$ and returns uniformly random values $\{\hat{r}_{i, k, t}\}_{i \in [n], k \in [d], t \in [u^\prime]} \leftarrow \FF_{2^{\ell_1}}$. 
    As illustrated in the correctness analysis, in $\mathsf{Hyb}_2$, $\AAA$ receives either the sender's payload $r_{j,k}$ or a random function evaluation. Because the honest sender samples $r_{j,k} \leftarrow \FF_{2^{\ell_1}}$ uniformly at random, every returned evaluation is uniformly distributed. Thus, the views in $\mathsf{Hyb}_2$ and $\mathsf{Hyb}_3$ are perfectly identical.}
    
\end{itemize}

{The final hybrid $\mathsf{Hyb}_3$ is exactly the simulated execution of \Sim[\RRR]. Therefore, the simulated view is statistically indistinguishable from the real execution.}

\end{proof}

\subsection{Proof of Theorem \ref{thm: fuzzy PSI L-p-high-px}}

\label{Proof: of Theorem fuzzy PSI L-p-high-px}

\begin{proof}
\label{proof: fuzzy PSI L-p-high-px}
{The correctness of the protocol is directly derived from the correctness of the $L_\infty$ protocol for high-dimensional sets and the low-dimensional protocol, and hence we omit it here. Below, we mainly prove the security of the protocol by constructing simulators for both a corrupted sender and a corrupted receiver.}

 \textbf{Corrupted Sender.} 
{We construct a simulator $\Sim[\SSS]$ for a semi-honest corrupted sender. Given the sender's input $Q$, $\Sim[\SSS]$ runs the adversary $\AAA$ and proceeds as follows:}
\begin{enumerate}
    \item {\textit{Simulating fuzzy mapping (Step 1):} $\Sim[\SSS]$ invokes the simulator for the corrupted sender in the $\Pi_\mathsf{FPSI\text{-}high}^{L_\infty}$ protocol and sends the resulting transcript to $\AAA$.} 
    % $\Sim[\SSS]$ records the generated identifiers $\{\mathsf{id}_{\vecq_j}\}_{j \in [m]}$ from this simulated transcript.
    
    \item {\textit{Simulating $\Func[OPPRF]$ (Step 5):} After receiving the programmed points $\{L_i\}_{i \in [n^\prime]}$ from $\AAA$, $\Sim[\SSS]$ samples a random function $G$ programmed on $\{L_i\}_{i \in [n^\prime]}$. \Sim[\SSS] sends $\OOO^G$  to \AAA.}
    
    \item {\textit{Simulating $\Func[FMatch]$ and $\Func[OT]$ (Steps 6-7):} $\Sim[\SSS]$ receives its shares $\vecg_i$ to $\Func[FMatch]$ and $(\bot, \vecg_i)$ to $\Func[OT]$ from $\AAA$, and returns nothing.}
\end{enumerate}

% \blue{We argue the simulated view of $\AAA$ is identically distributed to the real execution. In Step 1, the indistinguishability follows directly from the proven security of the $L_\infty$ protocol. In Step 5, the adversary receives $\OOO^F$ according to the adversary's input, perfectly matching the simulated step.
% % the real $\Func[OPPRF]$ provides only a handle to the sender, which is perfectly simulated by $\OOO^G$. 
% In Steps 6 and 7, the adversary receives nothing in both the ideal and real worlds.}
% . Thus, the overall simulated view is computationally indistinguishable from the real protocol.

{We demonstrate the security of the simulation through a sequence of hybrid games:}

\begin{itemize}
    \item {$\mathsf{Hyb}_0$: This is the real protocol execution.}
    
    \item {$\mathsf{Hyb}_1$: This hybrid behaves identically to $\mathsf{Hyb}_0$ except the simulation of $\Func[FMatch]$ and $\Func[OT]$ (Steps 6-7). $\Sim[\SSS]$ receives $\AAA$'s inputs $\vecg_i$ to $\Func[FMatch]$ and $(\bot, \vecg_i)$ to $\Func[OT]$ and returns nothing, which perfectly simulates the functionalities' behavior. Thus, the views in $\mathsf{Hyb}_0$ and $\mathsf{Hyb}_1$ are perfectly identical.}

    \item {$\mathsf{Hyb}_2$: This hybrid behaves identically to $\mathsf{Hyb}_2$ except the simulation of \Func[OPPRF] (Step 5).
    After receiving the programmed points $\{L_i\}_{i \in [n^\prime]}$ from $\AAA$, $\Sim[\SSS]$ samples a random function $F$ programmed on $\{L_i\}_{i \in [n^\prime]}$. \Sim[\SSS] sends $\OOO^G$ to \AAA. Since $\mathsf{Hyb}_1$ also returns $\OOO^{G}$ according to these programmed points, the views in $\mathsf{Hyb}_1$ and $\mathsf{Hyb}_2$ are perfectly identical.}

    \item {$\mathsf{Hyb}_3$: This hybrid behaves identically to $\mathsf{Hyb}_2$, except that $\Sim[\RRR]$ invokes the receiver simulator for the $\Pi_\mathsf{FPSI\text{-}high}^{L_\infty}$ protocol and sends the transcript to $\AAA$. The perfect indistinguishability of the views in $\mathsf{Hyb}_2$ and $\mathsf{Hyb}_3$ follows from the security of the $L_\infty$ protocol.}

    % ($\mathsf{Hyb}_2 \equiv \mathsf{Hyb}_3$).
\end{itemize}

{The final hybrid $\mathsf{Hyb}_3$ is exactly the simulated execution of \Sim[\SSS]. Therefore, the simulated view is perfectly indistinguishable from the real execution.}

\textbf{Corrupted Receiver.} 
{We construct a simulator $\Sim[\RRR]$ for a semi-honest corrupted receiver. Given the receiver's input $W$ and the ideal output set $Z$ from $\Func[FPSI]$, $\Sim[\RRR]$ runs $\AAA$ and proceeds as follows:}
\begin{enumerate}
    \item {\textit{Simulating fuzzy mapping (Step 1):} $\Sim[\RRR]$ invokes the simulator for the corrupted receiver in the $\Pi_\mathsf{FPSI\text{-}high}^{L_\infty}$ protocol, purposefully omitting the final intersection delivery. $\Sim[\RRR]$ sends the transcript to $\AAA$}
    % and records the identifiers $\{\mathsf{id}_{\vecw_i}\}_{i \in [n]}$.
    
    \item {\textit{Simulating $\Func[OPPRF]$ (Step 5):} After receiving queries from $\AAA$, $\Sim[\RRR]$ samples uniformly random values $\hat{\vecg}_i \leftarrow \ZZ_{2^{\ell^\prime}}^d$ for all $i \in [n^\prime]$ and returns them to $\AAA$.}
    
    \item {\textit{Simulating $\Func[FMatch]$ and $\Func[OT]$ (Steps 6-7):} For each bin $i \in [n^\prime]$, $\Sim[\RRR]$ identifies if there are a mapped element $T_w[i] = (\mathsf{id}_{\vecw_i}\|\tau, \vecw_i)$ and a $\vecq \in Z$ such that $\mathsf{dist}_p(\vecq, \vecw_i) \leq \delta$. If such a match exists, $\Sim[\RRR]$ sets the simulated match bit $b_i := 1$ and constructs the OT payload as $\vecz_i := \vecq - \hat{\vecg}_i \bmod 2^{\ell^\prime}$. Otherwise (or if the bin is empty), $\Sim[\RRR]$ sets $b_i := 0$ and $\vecz_i := \bot$. After receving \AAA's inputs, $\Sim[\RRR]$ returns $\{b_i\}_{i \in [n^\prime]}$ and $\{\vecz_i\}_{i \in [n^\prime]}$ to $\AAA$ as the outputs of the functionalities.}
\end{enumerate}

% \blue{We argue the simulated view of $\AAA$ is identically distributed to the real execution. Step 1 indistinguishability follows from the security of the underlying $L_\infty$ protocol. In Step 5, the real protocol delivers $\hat{\vecg}_i = \vecq_j - \vecg_i \bmod 2^{\ell^\prime}$ if the identifiers match, or a random function output otherwise. Because the honest sender samples $\vecg_i \leftarrow \ZZ_{2^{\ell^\prime}}^d$ uniformly at random, the masked value $\vecq_j - \vecg_i$ is perfectly uniform and independent of $\vecq_j$. Thus, all outputs $\hat{\vecg}_i$ received by $\AAA$ in the real protocol are uniformly distributed over $\ZZ_{2^{\ell^\prime}}^d$, exactly matching the simulated distribution. 

% In Steps 6 and 7, $\Sim[\RRR]$ computes the exact ideal outputs for the target functionalities. By the globally disjoint assumption, $\vecw_i$ can match at most one $\vecq \in Z$, meaning the check is unambiguous. In the real protocol, if $b_i = 1$, the receiver obtains the sender's uniform mask $\vecg_i$ from $\Func[OT]$. In the simulation, $\Sim[\RRR]$ programs $\vecz_i := \vecq - \hat{\vecg}_i$. Since $\hat{\vecg}_i$ represents the simulated $\vecq - \vecg_i$, this mapping implicitly defines $\vecz_i = \vecg_i$, perfectly mimicking the relationship between the $\Func[OPPRF]$ outputs and the $\Func[OT]$ payloads while matching the ideal intersection $Z$.}

{We demonstrate the security of the simulation through a sequence of hybrid games:}

\begin{itemize}
    \item {$\mathsf{Hyb}_0$: This is the real protocol execution.}
    
    \item {$\mathsf{Hyb}_1$: This hybrid behaves identically to $\mathsf{Hyb}_0$ except the simulation of $\Func[FMatch]$ and $\Func[OT]$ (Steps 6-7).
    For each bin $i \in [n^\prime]$, if $T_w[i] = (\mathsf{id}_{\vecw_i}\|\tau, \vecw_i)$, $\Sim[\RRR]$ checks if there exists an ideal match $\vecq \in Z$ such that $\mathsf{dist}_p(\vecq, \vecw_i) \leq \delta$. By the globally disjoint assumption, $\vecw_i$ can match at most one $\vecq \in Z$, meaning this check is unambiguous.
    If a match exists, $\Sim[\RRR]$ sets the simulated match bit $b_i := 1$ and constructs the OT payload as $\vecz_i := \vecq - \hat{\vecg}_i \bmod 2^{\ell^\prime}$. Otherwise, $\Sim[\RRR]$ sets $b_i := 0$ and $\vecz_i := \bot$. $\Sim[\RRR]$ returns $\{b_i\}_{\in [n^\prime]}$ and $\{\vecz_i\}_{\in [n^\prime]}$ to $\AAA$. 
    In $\mathsf{Hyb}_0$, if $b_i = 1$, \AAA obtains the sender's mask $\vecg_i$ from $\Func[OT]$. In the simulation, setting $\vecz_i := \vecq - \hat{\vecg}_i$ implicitly defines $\vecz_i = \vecg_i$, perfectly mimicking the relationship between the $\Func[OPPRF]$ uniform masks and the $\Func[OT]$ payloads while embedding the true ideal intersection $Z$. The only difference between $\mathsf{Hyb}_0$ and $\mathsf{Hyb}_1$ occurs if the protocol produces an incorrect evaluation that is statistically negligible.
    Thus, the views in $\mathsf{Hyb}_0$ and $\mathsf{Hyb}_1$ are statistically indistinguishable.}
    
    \item {$\mathsf{Hyb}_2$: This hybrid behaves identically to $\mathsf{Hyb}_1$, except that instead of passing the honest sender's programmed points to  $\Func[OPPRF]$ (Step 5), $\Sim[\RRR]$ returns uniformly random values $\hat{\vecg}_i \leftarrow \ZZ_{2^{\ell^\prime}}^d$ for all $i \in [n^\prime]$.  
    In $\mathsf{Hyb}_1$, $\AAA$ receives $\hat{\vecg}_i = \vecq_j - \vecg_i \bmod 2^{\ell^\prime}$ if the identifiers match, or a random function output otherwise. Because the honest sender samples $\vecg_i \leftarrow \ZZ_{2^{\ell^\prime}}^d$ uniformly at random, the masked value $\vecq_j - \vecg_i$ is perfectly uniform and independent of $\vecq_j$. Thus, all outputs received by $\AAA$ in $\mathsf{Hyb}_1$ are uniformly distributed. Thus, the views in $\mathsf{Hyb}_1$ and $\mathsf{Hyb}_2$ are perfectly identical.}

    \item {$\mathsf{Hyb}_3$: This hybrid behaves identically to $\mathsf{Hyb}_2$, except that $\Sim[\RRR]$ invokes the receiver simulator for the $\Pi_\mathsf{FPSI\text{-}high}^{L_\infty}$ protocol, purposefully omitting the final intersection delivery. $\Sim[\RRR]$ sends the transcript to $\AAA$.
    % and records the identifiers $\{\mathsf{id}_{\vecw_i}\}_{i \in [n]}$.
    The perfect indistinguishability of the views in $\mathsf{Hyb}_2$ and $\mathsf{Hyb}_3$ follows from the security of the $L_\infty$ protocol.}
    
\end{itemize}

{The final hybrid $\mathsf{Hyb}_3$ is exactly the simulated execution of \Sim[\RRR]. Therefore, the simulated view is statistically indistinguishable from the real execution.}

\end{proof}

\end{document}